\theoremstyle{thmstyleone}%
\newtheorem{lemma}{Lemma}
\newtheorem{theorem}{Theorem}
\theoremstyle{thmstyletwo}%
\newtheorem{remark}{Remark}%
\theoremstyle{thmstylethree}%
\DeclareMathOperator*{\Wt}{wt}
\begin{document}
\title[New constructions of MDS symbol-pair codes via simple-root cyclic codes]{New constructions of MDS symbol-pair codes via simple-root cyclic codes\footnote{Corresponding author: Weijun Fang}}

\author[1,2,3]{\fnm{Rongxing} \sur{Qiu}}\email{qrx@mail.sdu.edu.cn}

\author[1,2,3]{\fnm{Weijun} \sur{Fang}}\email{fwj@sdu.edu.cn}

\affil[1]{\orgname{State Key Laboratory of Cryptography and Digital Economy Security, Shandong University}, \orgaddress{\city{Qingdao}, \postcode{266237}, \country{China}}}

\affil[2]{\orgdiv{Key Laboratory of Cryptologic Technology and Information Security}, \orgname{Ministry of Education, Shandong University}, \orgaddress{\city{Qingdao}, \postcode{266237}, \country{China}}}


\affil[3]{\orgdiv{School of Cyber Science and Technology}, \orgname{Shandong University}, \orgaddress{\city{Qingdao}, \postcode{266237}, \country{China}}}



\abstract{In modern storage technologies, symbol-pair codes have emerged as a crucial framework for addressing errors in channels where symbols are read in overlapping pairs to guard against pair errors. A symbol-pair code that meets the Singleton-type bound is called a maximum distance separable (MDS) symbol-pair code. MDS symbol-pair codes are optimal in the sense that they have the highest pair error-correcting capability. In this paper, we focus on new constructions of MDS symbol-pair codes using simple-root cyclic codes. Specifically,  three new infinite families of $(n, d_P)_q$-MDS symbol-pair codes are obtained: (1) $(n=4q+4,d_P=7)_q$ for $q\equiv 1\pmod 4$; (2) $(n=4q-4,d_P=8)_q$ for $q\equiv 3\pmod 4$; (3) $(n=2q+2,d_P=9)_q$ for $q$ being an odd prime power. The first two constructions are based on analyzing the solutions of certain equations over finite fields. The third construction arises from the decomposition of cyclic codes, where we utilize the orthogonal relationships between component codes and their duals to rigorously exclude the presence of specific codewords. It is worth noting that for the pair distance $d_P=7$ or $8$, our $q$-ary MDS symbol-pair codes achieve the longest known code length when $q$ is not a prime. Furthermore, for $d_P=9$, our codes attain the longest code length regardless of whether $q$ is prime or not.}

\keywords{Minimum pair distance, MDS symbol-pair codes, Cyclic codes.}


\maketitle

\section{Introduction}
\label{sec:intro}
    In high-density data storage systems, symbols in certain channels cannot always be performed individually and are read in possibly overlapping pairs. Considering the lower read resolution of magnetic-soft channels, the symbol-pair channel model has been proposed as an alternative to the individual-symbol channel. To solve this problem, Cassuto and Blaum in \cite{Cassuto2010,Cassuto2011}, followed by Cassuto and Litsyn in \cite{Cassuto2011a}, proposed and gave early constructions of symbol-pair codes. In this new framework, the outputs and the errors are no longer individual symbols, but rather overlapping pairs of adjacent symbols. The fundamental distinction of symbol-pair codes lies in their unique distance metric, i.e. the symbol-pair distance. Unlike conventional Hamming distance that counts single-symbol discrepancies, this metric measures the minimum number of distinct symbol pairs between any two codewords. This paradigm shift necessitates novel code constructions that optimize pairwise correlations while maintaining efficient encoding/decoding complexity. 
    
    Let $n$ be a positive integer and $\mathbb{F}_q$ be a finite field with $q$ elements, where $q$ be a prime power.  For any fixed vector $\mathbf{x} = (x_0,x_1,\cdots,x_{n-1}) \in \mathbb{F}_q^n$,  the symbol-read vector of $\mathbf{x}$ is
    $$\pi(\mathbf{x}) = \big((x_0,x_1),(x_1,x_2),\cdots,(x_{n-2},x_{n-1}),(x_{n-1},x_0)\big).$$ Obviously, each vector $\mathbf{x}$ has a unique symbol-pair 
    representation $\pi(\mathbf{x})$. Recall that the Hamming weight of $\mathbf{x}$ is defined as
    $$w_H(\mathbf{x}) = | \left\{0 \leq    i \leq    n-1: x_i \neq  0\right\}|.$$ 
    Correspondingly, the symbol-pair weight of $\mathbf{x}$ is defined as
    $$w_P(\mathbf{x}) = | \left\{0 \leq    i \leq    n-1: (x_i,x_{i+1}) \neq  (0,0)\right\}|,$$
    where the subscripts are taken modulo $n$. For any two vectors $\mathbf{x} = (x_0,x_1,\cdots,x_{n-1})$ and $\mathbf{y} = (y_0,y_1,\cdots,y_{n-1})$  in $\mathbb{F}_q^n$, the Hamming distance between $\mathbf{x}$ and $\mathbf{y}$ is defined as 
    $$d_H(\mathbf{x},\mathbf{y}) = |\left\{0\leq i \leq n-1:x_i\neq y_i\right\},$$ and the pair distance between $\mathbf{x}$ and $\mathbf{y}$ is defined as 
    $$d_P(\mathbf{x},\mathbf{y}) = d_H\big(\pi(\mathbf{x}),\pi(\mathbf{y})\big) =|\left\{0\leq i \leq n-1:(x_i,x_{i+1})\neq (y_i,y_{i+1})\right\}|,$$ where the subscripts are taken modulo $n$. A code $\mathcal{C}$ over $\mathbb{F}_q$ of length $n$ is a subspace of $\mathbb{F}_q^n$. We denote a symbol-pair code $\mathcal{C}$ by $(n,M,d_P)$ when $\mathcal{C}$ has size $M$ and minimum pair distance 
    $$d_P(\mathcal{C}) = \min \left\{d_P(\mathbf{a},\mathbf{b}) : \mathbf{a},\mathbf{b} \in \mathcal{C}, \mathbf{a}\neq \mathbf{b}\right\}.$$ 
    The error-correcting capability of a symbol-pair code closely resembles that of a classical code as it is measured in terms of the minimum pair distance. In \cite{Cassuto2011}, Cassuto and Blaum showed that an $(n,M,d_P)_q$-code $\mathcal{C}$ can correct up to $t$ pair errors if and only if $d_P \geq 2t+1$. Therefore, for fixed $n$ and $M$, one wishes to construct symbol-pair codes with minimum pair distance as large as possible. In 2012, Chee $et$ $al.$ in \cite{Chee2012} derived a Singleton-type bound for $(n,M,d_P)_q$-code: 
    \begin{equation}
        M\leq q^{n-d_P+2}.
        \label{singleton_type_bound}
    \end{equation}
    If $M=q^{n-d_P+2}$, the $(n,M,d_P)$-code is called a maximum distance separable (MDS) symbol-pair code. Since $M$ is completely determined by $q$, $n$ and $d_P$, an MDS $(n, M, d_P )_q$ symbol-pair code is denoted by $(n, d_P )_q$ for convenience.
    
    MDS symbol-pair codes have excellent symbol-pair errorcorrecting capability and are the most useful and interesting symbol-pair codes. Thus, explicit constructions of MDS symbol-pair codes are both theoretically and practically significant. Many attempts have been made in the construction of MDS symbol-pair codes. In \cite{Chee2013},  Chee $et$ $al.$ obtained many classes of MDS symbol-pair codes by using classical MDS codes and interleaving and extending classical MDS codes. Moreover, they investigated MDS symbol-pair codes with length $(q^2 + 2q)/2$ using Eulerian graphs. In 2018, Ding $et$ $al.$ \cite{Ding2018} constructed some MDS symbol-pair codes based on projective geometry and elliptic curves. In addition, they showed that there exist $q$-ary MDS symbol-pair codes from algebraic geometry codes over elliptic curves with larger minimum symbol-pair distance and the lengths of these codes are bounded by $q + 2\sqrt{q}$.

    Cyclic and constacyclic codes are powerful ingredients in the construction of MDS symbol-pair codes because of their rich algebraic structures. By using repeated root cyclic or constacyclic codes, many families of MDS symbol pair codes with minimum pair distances $d_P \leq 12$ were constructed in \cite{Chen2017,Dinh2021,Dinh2020,Dinh2019,Dinh2018,Kai2018,Li2023,Sun2017,Zhao2021,Ma2022,Ma2022a}. The benefit of repeated-root codes in the construction is from their concatenated structure. However, the sizes of the fields defined the resulting codes were generally required to be prime numbers. Using simple-root constacyclic codes, Kai $et$ $al.$ in \cite{Kai2015} proposed MDS symbol-pair codes of minimum pair distance $d_P=5$ or $6$ from almost-MDS constacyclic codes. Following their idea, Li and Ge in \cite{Li2017} presented three new classes of MDS symbol-pair codes, also with minimum pair distances $d_P \in \{5, 6\}$. Moreover, they also constructed a number of MDS symbol-pair codes with minimum symbol-pair distance $d_P=7$ by analyzing certain linear fractional transformations. The advantage of simple-root codes is that the construction can be optimized in both the pair distance and the Hamming distance. However, it is generally difficult to construct MDS symbol-pair codes over $\mathbb{F}_q$ with length more than $q + 1$ and minimum pair distance at least $7$. In \cite{Luo2023}, Luo et al. used matrix-product codes to construct four families of MDS symbol-pair codes including $(2q+2,7)_q$ MDS symbol-pair codes for even prime power $q$. Very recently, Kai $et$ $al.$ in \cite{Kai2024} constructed two classes of MDS symbol-pair codes with minimum pair distance $d_P=7$ through the decomposition of cyclic codes and analyzing certain equations over finite fields. In Table \ref{Tab:known_result}, we provide known constructions of MDS symbol-pair codes.
\begin{table}[ht]
\centering
\caption{Known constructions of $(n,d_P)_q$-MDS symbol-pair codes, where $q=p^t$}
\begin{tabular}{|c|c|c|c|l|}
\hline
No. & $d_P$ & $n$ & $q = p^t$ & Reference \\
\hline
1  & $d_P = 5$  & $5 \leq n \leq q^2 + q + 1$ & -- & \cite{Kai2015,Ding2018}\\
\hline
2  & \multirow{6}{*}{$d_P = 6$} & $\max\{6, q + 2\} \leq n \leq q^2$ & --  & \cite{Chen2017,Ding2018} \\
\cline{1-1}\cline{3-5}
3  &  & $n = q^2 + 1$ & -- & \cite{Kai2015} \\
\cline{1-1}\cline{3-5}
4  &  & $n = p^2 + p$ & $q = p > 2$ & \cite{Kai2018} \\
\cline{1-1}\cline{3-5}
5  &  & $n = 2p^2 - 2p$ & $q = p > 2$ & \cite{Kai2018} \\
\cline{1-1}\cline{3-5}
6  &  & $n = 3m$ and $m \in [3, q]$ & -- & \cite{Luo2023} \\
\cline{1-1}\cline{3-5}
7  &  & $n = q^2 + q$ & $q \neq 4, 5, 2^t$ for odd $t$ & \cite{Luo2023} \\
\hline
8  & \multirow{8}{*}{$d_P = 7$}  & $n = 3p$ & $q = p \geq 5$ & \cite{Chen2017} \\
\cline{1-1}\cline{3-5}
9  &  & $n = 4p$ & $q = p \geq 5$ & \cite{Kai2018,Ma2022} \\
\cline{1-1}\cline{3-5}
10 &  & $n = 5p$ & $q = p$ and $5 \mid (p-1)$ & \cite{Ma2022} \\
\cline{1-1}\cline{3-5}
11 &  & $n = 3mp$ with $m \in [1, q/p]$ & $3\mid(q-1)$ and $2 \nmid q$  & \cite{Luo2023} \\
\cline{1-1}\cline{3-5}
12 &  & $n = 2q + 2$ & $2\mid q$ & \cite{Luo2023} \\
\cline{1-1}\cline{3-5}
13 &  & $n = 2q + 2$ & $p > 2$ & \cite{Kai2024} \\
\cline{1-1}\cline{3-5}
14 &  & $n = 4q - 4$ & $4\mid(q+1)$ & \cite{Kai2024} \\
\cline{1-1}\cline{3-5}
15 &  & $n = 4q + 4$ & $4 \mid (q - 1)$ & Theorem \ref{Thm:theorem1} \\
\hline
16 & \multirow{4}{*}{$d_P = 8$}  & $n = 3p$ & $q = p$ and $3 \mid (p-1)$ & \cite{Chen2017} \\
\cline{1-1}\cline{3-5}
17 &  & $n = 5p$ & $q = p$ and $5 \mid (p-1)$ & \cite{Ma2022} \\
\cline{1-1}\cline{3-5}
18 &  & $n = 2m$ with $m \leq q + 2$ & $p = 2$ & \cite{Chee2013} \\
\cline{1-1}\cline{3-5}
19 &  & $n = 4q - 4$ & $4 \mid (q + 1)$ & Theorem \ref{Thm:theorem2} \\
\hline
20 & $d_P = 9$ & $n = 2q + 2$ & $p>2$ & Theorem \ref{Thm:theorem3} \\
\hline
21 & \multirow{2}{*}{$d_P = 10$} & $n = 3p$ & $q = p$ and $3 \mid (p-1)$ & \cite{Ma2022a} \\
\cline{1-1}\cline{3-5}
22&   &  $n=3mp$ with $m\in [1,q/p]$ & $3\mid(q-1)$ &\cite{Luo2023}\\
\hline
23 & $d_P = 12$ & $n = 3p$ & $q = p$ and $3 \mid (p-1)$ & \cite{Ma2022a} \\
\hline
24 & $d_P = \ell$  & $4 \leq \ell \leq n \leq q + 1$ & -- & \cite{Chee2013} \\
\hline
25 & $d_P = 2\ell$ & $n = 2m$ with $4 \leq \ell+1 \leq m \leq q + 1$ & -- & \cite{Chee2013} \\
\hline
26 & $d_P = 2\ell+1$  & $n = 2m$ with $\ell + 1 \leq m \leq q$ & -- & \cite{Luo2023} \\
\hline
27 & $d_P = 2q-1$  & $n = 2q + 2$ & $p=2$ & \cite{Luo2023} \\
\hline
28 & $d_P = n - 4$  & $n = 2m$ with $m \leq q + 2$ & $p = 2$ & \cite{Chee2013} \\
\hline
\end{tabular} \label{Tab:known_result}
\end{table}

    In this paper, we construct three new classes of $q$-ary $(n, d_P)$-MDS symbol-pair codes as follows:
    \newline (1) $(n=4q+4,d_P=7)_q$ for $q\equiv 1\pmod 4$;
    \newline (2) $(n=4q-4,d_P=8)_q$ for $q\equiv 3\pmod 4$;
    \newline (3) $(n=2q+2,d_P=9)_q$ for $q$ is an odd prime power.
    \newline 
    The first two constructions are obtained by analyzing the solutions of certain equations over finite fields. In contrast, the third construction arises from the decomposition of cyclic codes. Utilizing the orthogonal relationships between component codes and their duals, we rigorously exclude the presence of specific codewords in code $\mathcal{C}$ by resolving complex equation systems over finite fields. According to Table \ref{Tab:known_result},  for the pair minimum distance 
$d_P=7, 8$, when $q$ is not a prime, our 
$q$-ary MDS symbol-pair codes achieve the longest known code length. Moreover, for 
$d_P=9$, our construction provides the longest MDS symbol-pair codes currently known.

 The rest of this paper is organized as follows. Section \ref{sec:preliminaries} introduces some related notation and known results on constacyclic codes and the decomposition of cyclic codes. In Section \ref{sec3} we construct three classes of MDS symbol-pair code through the decomposition of cyclic codes and analyzing the solutions of certain equations over finite fields. Section \ref{sec4} concludes the paper.

\section{Preliminaries}
\label{sec:preliminaries}
In this section, we introduce some preliminaries which will be used in the subsequent sections.
\subsection{Constacyclic Codes}
\label{sec:constacyclic_codes}
    Let $q$ be an odd prime power and $\mathbb{F}_q$ be the finite field with $q$ elements. Let $\mathbb{F}_q^* = \mathbb{F}_q \backslash \{0\} $ and $\lambda \in \mathbb{F}_q^*$. A $\lambda$-constacyclic code $\mathcal{C}$ of length $n$ over $\mathbb{F}_q$ is a linear code with the property that $(\lambda c_{n-1}, c_0,c_1,\cdots,c_{n-2})\in \mathcal{C}$ whenever $(c_0,c_1,\cdots,c_{n-2},c_{n-1})\in \mathcal{C}$. The code $\mathcal{C}$ is called cyclic if $\lambda = 1$ and negacyclic if $\lambda=-1$.
    
    If we identify any vector $(c_0,c_1,\cdots,c_{n-1})\in \mathbb{F}_q^n$ with the polynomial
    $\sum_{i=0}^{n-1} c_ix^i \in \mathbb{F}_q[x] /\langle x^n-\lambda\rangle,$
    any code of length $n$ over $\mathbb{F}_q$ corresponds to a subset of the quotient ring $\mathbb{F}_q[x] /\langle x^n-\lambda\rangle$. It can be easily checked that a linear code is $\lambda$-constacyclic if and only if the corresponding subset in $\mathbb{F}_q[x] / \langle x^n-\lambda\rangle$ is an ideal of the ring $\mathbb{F}_q[x] / \langle x^n-\lambda\rangle$. It is well known that every ideal of $\mathbb{F}_q[x] / \langle x^n-\lambda\rangle$ is principal. So there exists a monic polynomial $g(x)$ such that $\mathcal{C}=\langle g(x) \rangle$ and $g(x)$ has the smallest degree among all the generators of $\mathcal{C}$. Then $g(x)$ is unique and called the \textit{generator polynomial} of $\mathcal{C}$. Let $h(x) = \frac{x^n-\lambda}{g(x)}$, which is known as the \textit{check polynomial} of $\mathcal{C}$. Moreover, $\dim(\mathcal{C}) = n - \deg(g(x))$.

    Let $f(x) = \sum_{i=0}^{k} a_ix^i \in \mathbb{F}_q[x]$ be a polynomial of degree $k$, where $a_k \neq 0$ and $a_i\in \mathbb{F}_q$ for $0\leq i \leq k-1$. Define the \textit{reciprocal polynomial} $f_R(x)$ by $$f_R(x) = x^kf(x^{-1}) = \sum_{i=0}^{k} a_{k-i}x^i. $$ Given two vectors $\mathbf{x} =(x_0,x_1,\cdots,x_{n-1}) \in \mathbb{F}_q$ and $\mathbf{y} =(y_0,y_1,\cdots,y_{n-1}) \in \mathbb{F}_q$, the Euclidean inner product on $\mathbb{F}_q$ is defined by $$\mathbf{x}\cdot\mathbf{y} = \sum_{i=1}^{n-1} x_iy_i.$$ The Euclidean dual of a linear code $\mathcal{C} \subset \mathbb{F}_q^n$ is defined by
    $$\mathcal{C}^\bot = \left\{\mathbf{y}\in \mathbb{F}_q^n : \mathbf{x}\cdot\mathbf{y} = 0 , \forall~\mathbf{x}\in \mathcal{C} \right\}.$$ Let $\mathcal{C}$ be a $\lambda$-constacyclic code of length $n$ over $\mathbb{F}_q$ with generator polynomial $g(x)$. Then, $\mathcal{C}^\bot$ is a $\lambda^{-1}$-constacyclic code of length $n$ over $\mathbb{F}_q$ with generator polynomial $h_R(x)$, where $h_R(x) = (x^n-\lambda^{-1})/g_R(x)$.

    Assume that $\gcd(n,q)=1$. Let $r$ be the order of $\lambda$ in $\mathbb{F}_q^*$ and $t$ be the multiplicative order of $q$ modulo $rn$. Then $rn \mid (q^t-1)$ and $n\mid (q^t-1)$. Let $\alpha$ be a primitive $rn$-th root of unity in $\mathbb{F}_{q^t}$ and $\eta = \alpha^r\in \mathbb{F}_{q^t}$. It is clear that $\eta$ is a primitive $n$-th root of unity. And  it follows that $\alpha\eta^i = \alpha^{1+ri}$ are all the roots of $x^n-\lambda$ for $0\leq i\leq n-1$. The $q$-cyclotomic coset $C_i^{(q,rn)}$ modulo $r n$ containing $i$ is defined by 
    $$C_i^{(q,rn)} = \left\{ iq^j \text{ mod } rn:0\leq j\leq \ell_i -1\right\},$$ where $\ell_i$ is the smallest positive integer such that $iq^{\ell_i}\equiv i \pmod {rn}$. 
    Obviously, the polynomial $m_{\alpha^i}(x) =\prod_{j\in C_i^{(q,n)}}(x-\alpha^i)$ must be in $\mathbb{F}_q[x]$ and is called the $minimal\ polynomial$ of $\alpha^i$ over $\mathbb{F}_q$. Denote $\Omega_{rn} = \{1+ri:0\leq i \leq n-1\}$. Let $\mathcal{C}$ be a $\lambda$-constacyclic code in $\mathbb{F}_q/\langle x^n-\lambda\rangle$ with generator polynomial $g(x)$. Then there exists $S\subseteq \Omega_{rn}$ such that $$g(x) = \prod_{i\in S}m_{\alpha^i}(x).$$ The set $$T=\bigcup_{i\in S} C_{i}^{(q,rn)}$$ is called the defining set of $\mathcal{C}$ with respect to $\alpha$.
     
    In this paper, we will employ simple-root cyclic codes to construct new MDS symbol-pair codes. The following lemmas are well-known lower bounds on the minimum distance of (consta)cyclic codes, which will be applied in our later proofs.
    \begin{lemma}[BCH bound for constacyclic codes \cite{Krishna1990}]
        \label{lem::BCH_Bound}
        Assume that $\gcd(n,q)=1$. Let $\mathcal{C}$ be a $\lambda$-constacyclic code over $\mathbb{F}_q$ of length $n$ with defining set $T$. Let {\rm ord}$(\lambda)=r$. If $\left\{1+ri: 0 \leq i \leq \delta -2 \right\} \subseteq T$, then the minimum Hamming distance of $\mathcal{C}$ is not less than $\delta$.
    \end{lemma}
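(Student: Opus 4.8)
The plan is to argue by contradiction, reducing the claim to the nonvanishing of a Vandermonde determinant, just as in the classical BCH bound for cyclic codes but keeping careful track of the constacyclic twist. So I would fix the notation set up just before the statement: let $\alpha$ be a primitive $rn$-th root of unity in a suitable extension $\mathbb{F}_{q^t}$ and $\eta = \alpha^r$, a primitive $n$-th root of unity, so that the roots of $x^n-\lambda$ are exactly $\alpha\eta^i = \alpha^{1+ri}$ for $0 \le i \le n-1$. Since $\{1+ri : 0 \le i \le \delta - 2\} \subseteq T$ and $T$ is the defining set of $\mathcal{C}$, every codeword $c(x) = \sum_{j=0}^{n-1} c_j x^j \in \mathcal{C}$ satisfies $c(\alpha\eta^i) = 0$ for $i = 0, 1, \dots, \delta - 2$; that is, it has $\delta - 1$ roots of the form $\alpha\eta^i$ with consecutive exponents $i$.

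Next I would suppose toward a contradiction that there is a nonzero $c(x) \in \mathcal{C}$ of Hamming weight $w \le \delta - 1$, and write its support as $\{j_1, \dots, j_w\}$ with $c_{j_k} \ne 0$. Substituting into the root conditions gives, for every $i \in \{0, 1, \dots, \delta - 2\}$,
$$\sum_{k=1}^{w} c_{j_k}\alpha^{j_k}\bigl(\eta^{j_k}\bigr)^{i} = 0.$$
Setting $b_k = c_{j_k}\alpha^{j_k} \ne 0$ and restricting to $i = 0, 1, \dots, w-1$ (legitimate since $w-1 \le \delta-2$) yields a homogeneous linear system in $(b_1,\dots,b_w)$ whose coefficient matrix is the $w \times w$ Vandermonde matrix with nodes $\eta^{j_1}, \dots, \eta^{j_w}$. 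Because $\eta$ is a primitive $n$-th root of unity and the $j_k$ are distinct residues in $\{0,\dots,n-1\}$, these nodes are pairwise distinct, so the Vandermonde determinant is nonzero and the only solution is $b_1 = \dots = b_w = 0$, contradicting $b_k \ne 0$. Hence no nonzero codeword has Hamming weight at most $\delta - 1$, so the minimum Hamming distance of $\mathcal{C}$ is at least $\delta$.

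There is no genuine obstacle here; the only point needing a little care — and the place where the constacyclic setting differs from the ordinary cyclic one — is the bookkeeping of exponents: one must use that the prescribed roots form the scaled geometric progression $\alpha\eta^0, \alpha\eta^1, \dots, \alpha\eta^{\delta-2}$, so that after factoring $\alpha^{j_k}$ out of the $k$-th column the remaining matrix is a true Vandermonde in the $\eta^{j_k}$; the extracted scalars $\alpha^{j_k}$ are harmless precisely because they are nonzero. One also invokes $\gcd(n,q)=1$ to guarantee that $\eta$ has exact order $n$, which is what makes the nodes distinct.
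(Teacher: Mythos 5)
Your proof is correct: it is the standard Vandermonde-determinant argument for the BCH bound, correctly adapted to the constacyclic setting by factoring the nonzero scalars $\alpha^{j_k}$ out of each column so that the remaining matrix is a genuine Vandermonde in the distinct nodes $\eta^{j_1},\dots,\eta^{j_w}$. The paper itself gives no proof of this lemma---it is imported as a known result with a citation to Krishna--Sarwate---so there is nothing to compare against beyond noting that your argument is the one found in the cited literature.
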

    \begin{lemma}[Hartmann–Tzeng bound for cyclic codes \cite{Huffman2010}]\label{lem::HT_Bound}
    Let $\mathcal{C}$ be a cyclic code of length $n$ over $\mathbb{F}_q$ with defining set T. Let $A$ be a set of $\delta - 1 $ consecutive elements of T and $B =\left\{jb \pmod n | 0 \leq j \leq s\right\}$, where $\gcd(b, n) <\delta$. If $ A + B \subseteq T$ , then the minimum weight d of $\mathcal{C}$ satisfies $d \geq \delta + s$.
    \end{lemma}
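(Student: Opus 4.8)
The plan is to argue by contradiction: assume $\mathcal{C}$ contains a nonzero codeword $v(x)=\sum_i v_ix^i$ of Hamming weight $w\le\delta+s-1$ and derive a contradiction. Work over an extension field $\mathbb{F}_{q^t}$ containing a primitive $n$-th root of unity $\eta$, and let the support of $v$ be $\{i_1,\dots,i_w\}$. Writing the $\delta-1$ consecutive elements of $A$ as $a_0,a_0+1,\dots,a_0+\delta-2$, the hypothesis $A+B\subseteq T$ gives $a_0+u+jb\in T$ for all $0\le u\le\delta-2$ and $0\le j\le s$, hence $v(\eta^{a_0+u+jb})=0$. Put $X_l=\eta^{i_l}$ (pairwise distinct $n$-th roots of unity), $Y_l=X_l^{\,b}$, and $\lambda_l=v_{i_l}\eta^{a_0 i_l}\neq 0$. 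The vanishing conditions become the bilinear system $\sum_{l=1}^{w}\lambda_l X_l^{\,u}Y_l^{\,j}=0$ for all $0\le u\le\delta-2$, $0\le j\le s$, and the goal is to show it forces all $\lambda_l=0$.

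The key idea is to encode the ``inner'' equations (those with varying $u$) by rational functions and then pigeonhole on the resulting polynomials. For $0\le j\le s$ set $Q(Z)=\prod_{l=1}^{w}(Z-X_l)$ and $P_j(Z)=\sum_{l=1}^{w}\lambda_l Y_l^{\,j}\prod_{l'\neq l}(Z-X_{l'})$, so that $P_j(Z)/Q(Z)=\sum_{l}\lambda_l Y_l^{\,j}/(Z-X_l)=\sum_{u\ge 0}\bigl(\sum_l\lambda_l Y_l^{\,j}X_l^{\,u}\bigr)Z^{-u-1}$ as a Laurent series in $Z^{-1}$. The equations with $0\le u\le\delta-2$ kill the first $\delta-1$ coefficients of this series, so $\deg P_j\le w-\delta$ (when $w<\delta$ one even gets $P_j\equiv 0$, and the independence statement below already yields $\lambda_l=0$). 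Thus $P_0,\dots,P_s$ are $s+1$ polynomials lying in a space of dimension at most $s$, hence linearly dependent: there are scalars $c_0,\dots,c_s$, not all $0$, with $\sum_{j=0}^{s}c_jP_j(Z)\equiv 0$. Since the $w$ polynomials $\prod_{l'\neq l}(Z-X_{l'})$ are linearly independent (evaluate at $X_1,\dots,X_w$), this collapses to $\lambda_l R(Y_l)=0$, i.e.\ $R(Y_l)=0$, for every $l$, where $R(T)=\sum_{j}c_jT^{\,j}\not\equiv 0$ has degree at most $s$. Consequently $Y_1,\dots,Y_w$ take at most $s$ distinct values.

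The proof then finishes by a two-stage Vandermonde elimination. Decompose $\{1,\dots,w\}$ into fibers $F_1,\dots,F_m$ according to the value $Y_l$, with distinct fiber values $\rho_1,\dots,\rho_m$; then $m\le s$, and, crucially, $|F_k|\le\gcd(b,n)\le\delta-1$ for each $k$, because $X_l^{\,b}=X_{l'}^{\,b}$ precisely when $X_l/X_{l'}$ is a $\gcd(b,n)$-th root of unity. Grouping the bilinear system by fibers gives $\sum_{k=1}^{m}\bigl(\sum_{l\in F_k}\lambda_l X_l^{\,u}\bigr)\rho_k^{\,j}=0$ for all $u,j$. For each fixed $u$, the $(s+1)\times m$ Vandermonde matrix in the distinct nodes $\rho_k$ has full column rank (since $m\le s<s+1$), so $\sum_{l\in F_k}\lambda_l X_l^{\,u}=0$ for every $k$ and every $0\le u\le\delta-2$; then for each fixed $k$ the $(\delta-1)\times|F_k|$ Vandermonde matrix in the distinct nodes $X_l$, $l\in F_k$, has full column rank (since $|F_k|\le\delta-1$), forcing $\lambda_l=0$ for all $l\in F_k$, hence for all $l$. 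This contradicts $(\lambda_l)\neq\mathbf{0}$, so no such codeword exists and $d\ge\delta+s$.

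The main obstacle is making the counting line up at both ends. The degree bound $\deg P_j\le w-\delta$ must be small enough that $s+1$ such polynomials cannot be independent, and this is exactly where the hypothesis $w\le\delta+s-1$ is spent; symmetrically, the fiber sizes must stay $\le\delta-1$ for the inner Vandermonde step to close, which is exactly where $\gcd(b,n)<\delta$ is used. A direct assault on the $(\delta-1)(s+1)\times w$ coefficient matrix does not work, since the $Y_l$ may genuinely coincide; the passage through the auxiliary low-degree polynomial $R$, which certifies that only few distinct $Y$-values occur, is what unlocks the fiber decomposition and hence the whole argument.
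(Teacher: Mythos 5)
The paper does not prove this lemma at all: it is quoted as a known result with a citation to Huffman--Pless, so there is no internal proof to compare against. Your argument, read on its own, is correct and complete. The reduction to the bilinear system $\sum_l \lambda_l X_l^u Y_l^j=0$ is standard; the degree bound $\deg P_j\le w-\delta$ from killing the first $\delta-1$ Laurent coefficients of $P_j/Q$ is right; the pigeonhole on $s+1$ polynomials of degree at most $s-1$ correctly produces the nonzero relation $R$ of degree at most $s$ with $R(Y_l)=0$; and the fiber bound $|F_k|\le\gcd(b,n)\le\delta-1$ together with the two Vandermonde eliminations (first in the distinct $\rho_k$ with $m\le s<s+1$, then in the distinct $X_l$ within a fiber) closes the contradiction. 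You also handle the degenerate case $w<\delta$ and correctly note where each hypothesis ($w\le\delta+s-1$ and $\gcd(b,n)<\delta$) is spent. This is essentially the classical Hartmann--Tzeng argument (a rational-function packaging of the usual syndrome-matrix rank computation), so nothing here diverges from what the cited source establishes; the only remark worth making is that the paper leaves the lemma as a black box, whereas you have supplied a self-contained proof.
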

The following lemma given in \cite{Chen2017} give the relationship between the Hamming distance and pair distance of constacyclic non-MDS codes.
    \begin{lemma}[\cite{Chen2017}]\label{lem::Chen2017}
        Let $\mathcal{C}$ be an $[n,k,d_H]$ constacyclic code over $\mathbb{F}_q$ with $2\leq d_H \leq n$. Then $d_P(\mathcal{C})\geq d_H+2$ if and only if C is not an MDS code.
    \end{lemma}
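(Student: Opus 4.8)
The plan is to translate the statement into a combinatorial condition on the supports of codewords and then resolve it using the generator polynomial $g(x)$ of $\mathcal{C}$, whose degree is $n-k$.

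First I would set up a dictionary between pair weight and support. For $\mathbf{x}\in\mathbb{F}_q^n$ write $S=\Supp(\mathbf{x})$ and $S-1=\{\,s-1\bmod n:s\in S\,\}$; a direct count gives $w_P(\mathbf{x})=|\{i:(x_i,x_{i+1})\neq(0,0)\}|=|S\cup(S-1)|$. When $\emptyset\neq S$ is not the full coordinate set, $S$ is not stable under $i\mapsto i-1$, so $|S\cup(S-1)|\ge|S|+1$, with equality precisely when $(S-1)\setminus S$ is a singleton, i.e.\ precisely when $S$ is one block of consecutive coordinates (indices mod $n$). Hence, for $d_H\le n-1$, $d_P(\mathcal{C})\ge d_H+1$, and $d_P(\mathcal{C})=d_H+1$ if and only if $\mathcal{C}$ has a Hamming-weight-$d_H$ codeword whose support is a run of $d_H$ consecutive coordinates. (If $d_H=n$ then necessarily $k=1$, so $\mathcal{C}$ is MDS, every nonzero codeword has full support, and $d_P(\mathcal{C})=n=d_H<d_H+2$, consistent with the claim; I would dispose of this borderline case at the outset.) Thus the lemma reduces to the assertion that a consecutive-support minimum weight codeword exists if and only if $\mathcal{C}$ is MDS.

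For the direction ``$\mathcal{C}$ MDS $\Rightarrow d_P(\mathcal{C})<d_H+2$'', I would show that $g(x)$ itself is the desired codeword. If $\mathcal{C}$ is MDS then $\deg g(x)=n-k=d_H-1$, so $g(x)$ is a nonzero codeword of Hamming weight at most $d_H$; since the minimum weight equals $d_H$, the weight of $g(x)$ is exactly $d_H$, which forces all of $g_0,\dots,g_{d_H-1}$ to be nonzero. So $g(x)$ has support $\{0,1,\dots,d_H-1\}$, whence $w_P(g)=d_H+1$ and $d_P(\mathcal{C})=d_H+1$. For the converse, ``$\mathcal{C}$ not MDS $\Rightarrow d_P(\mathcal{C})\ge d_H+2$'', I would argue by contradiction: $\mathcal{C}$ not MDS means $d_H\le n-k=\deg g(x)$; if $d_P(\mathcal{C})=d_H+1$, the support dictionary yields a codeword $\mathbf{c}$ of Hamming weight $d_H$ with consecutive support, and applying a suitable power of the $\lambda$-constacyclic shift (which preserves Hamming weight and only translates the support cyclically) makes that support exactly $\{0,1,\dots,d_H-1\}$. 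The resulting codeword is then a nonzero polynomial $c(x)$ of degree $d_H-1$; but every codeword, viewed as a polynomial of degree $<n$, is divisible by $g(x)$, while $\deg c(x)=d_H-1<n-k=\deg g(x)$, forcing $c(x)=0$, a contradiction.

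The main obstacle, as I see it, is the first step: carefully establishing the identity $w_P(\mathbf{x})=|S\cup(S-1)|$ together with its sharp lower bound and equality case, while keeping track of the cyclic wrap-around and of the degenerate weights $d_H\in\{n-1,n\}$. Once that dictionary is in place, both implications are short and rest on the same elementary observation: a $\lambda$-constacyclic code $\langle g(x)\rangle$ has no nonzero codeword of degree strictly below $\deg g(x)$.
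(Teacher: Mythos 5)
The paper does not prove this lemma at all --- it is quoted verbatim from \cite{Chen2017} as a known result --- so there is no internal proof to compare against. Judged on its own terms, your argument is correct and is essentially the standard proof of this fact: the identity $w_P(\mathbf{x})=|S\cup(S-1)|=w_H(\mathbf{x})+(\text{number of maximal cyclic runs of }S)$ reduces the question to whether a minimum-weight codeword with a single consecutive block of support exists, and the generator polynomial settles both directions (if $\mathcal{C}$ is MDS, $g(x)$ itself has degree $d_H-1$ and hence is such a codeword; if $\mathcal{C}$ is not MDS, a consecutive-support codeword of weight $d_H$ could be shifted to a nonzero codeword polynomial of degree $d_H-1<\deg g(x)=n-k$, which is impossible). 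Both implications check out, including the observation that the constacyclic shift only multiplies the wrapped coordinate by $\lambda$ and therefore preserves the support pattern.

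One loose end is worth closing explicitly. Your reduction asserts that $d_P(\mathcal{C})=d_H+1$ forces a weight-$d_H$ codeword with consecutive support, but a codeword of \emph{full} support contributes $w_P=n$, so when $d_H=n-1$ the minimum pair weight could a priori be attained by a full-support codeword of weight $n$ rather than by a run-supported codeword of weight $d_H$. This does not break the lemma --- if $d_H=n-1<n$ and a full-support codeword exists, the two are not proportional, so $k\geq 2$, and Singleton then gives $k=2$, i.e.\ $\mathcal{C}$ is MDS, which is consistent with $d_P\leq d_H+1$ --- but your contradiction argument for the ``not MDS'' direction silently assumes the run-supported case, so you should either dispose of $d_H=n-1$ alongside $d_H=n$ at the outset or add this one-line observation.
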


\subsection{The Decomposition of Cyclic Codes}
\label{sec:decomposition_of_cyclic_codes}

Suppose $m$ is an integer with $m\mid (q-1)$ and let $\{\zeta_0,\zeta_1,\cdots,\zeta_{m-1}\} \subseteq \mathbb{F}_q^*$ be the set of $m$-th root of unity. Define the map
\begin{equation*}
    \begin{aligned}
        \varphi: \frac{\mathbb{F}_q[x]}{\langle x^{mn}-1 \rangle} &\longrightarrow \frac{\mathbb{F}_q[x]}{\langle x^{n}-\zeta_0 \rangle} \oplus \frac{\mathbb{F}_q[x]}{\langle x^{n}-\zeta_1 \rangle} \oplus \cdots \oplus \frac{\mathbb{F}_q[x]}{\langle x^{n}-\zeta_{m-1} \rangle} \\
         p(x)\quad  &\longrightarrow \Big(\varphi_{\zeta_0}\big(p(x)\big) , \varphi_{\zeta_1}\big(p(x)\big),\cdots, \varphi_{\zeta_{m-1}}\big(p(x)\big) \Big)
    \end{aligned}
\end{equation*}
where $\varphi_{\zeta_{i}}\big(p(x)\big) = p(x) \ ( \text{mod } x^n-\zeta_i)$ for $0\leq i \leq m-1$. Suppose $p(x) = \sum_{i=0}^{mn-1}p_i x^i \in \mathbb{F}_q[x]/\langle x^{mn}-1\rangle $ and $\vec{q}_i =\varphi_{\zeta_i}\big(p(x)\big)$, then we have $VP = Q$ where
\begin{equation*}
    \begin{aligned}
        P=\begin{pmatrix}
        p_0 & p_1  & \cdots & p_{n-1} \\
        p_{n} & p_{n+1} & \cdots & p_{2n-1} \\
        \vdots & \vdots & \ddots & \vdots \\
        p_{(m-1)n} & p_{(m-1)(n+1)} & \cdots & p_{mn-1}
    \end{pmatrix}, \qquad 
    Q=\begin{pmatrix}
        q_0 & q_1  & \cdots & q_{n-1} \\
        q_{n} & q_{n+1} & \cdots & q_{2n-1} \\
        \vdots & \vdots & \ddots & \vdots \\
        q_{(m-1)n} & q_{(m-1)(n+1)} & \cdots & q_{mn-1}
    \end{pmatrix}
    \end{aligned}
\end{equation*}
 and 
\begin{equation*}
    V=\begin{pmatrix}
        1 & \zeta_0  & \cdots & \zeta_0^{n-1} \\
        1 & \zeta_1  & \cdots &  \zeta_1^{n-1} \\
        \vdots & \vdots & \ddots & \vdots \\
        1 & \zeta_{m-1}  & \cdots &  \zeta_{m-1}^{n-1} 
    \end{pmatrix}.
\end{equation*}
Since  $V$ is nonsingular, the map $\varphi$ is one-to-one and onto. In fact,  it is shown in \cite{Hughes2000}
that $\phi$ is a ring isomorphism: $$\mathbb{F}_q[x]/\langle x^{mn}-1\rangle \cong \bigoplus_{i=0}^{m-1} \mathbb{F}_q[x]/\langle x^{n}-\zeta_i\rangle.$$ Hence, a cyclic code of length $mn$ over $\mathbb{F}_q$ can decompose into constacyclic codes of length $n$ over $\mathbb{F}_q$ with respect to $\zeta_0, \zeta_1, \cdots, \zeta_{m-1}$. The following result is about the decomposition of a cyclic code. 

\begin{lemma}[\cite{Hughes2000}]
    \label{Lem:Decomposition}
Suppose $m \mid (q-1)$ and $\zeta_0,\zeta_1,\ldots,\zeta_{m-1} \in \mathbb{F}_q^*$ are all the $m$-th root of unity. Then a cyclic code of length $mn$ over $\mathbb{F}_q$ can be decomposed into constacyclic codes of length $n$ over $\mathbb{F}_q$ with respect to $\zeta_0,\zeta_1,\ldots,\zeta_{m-1}$.
\end{lemma}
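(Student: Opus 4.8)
The plan is to identify the displayed map $\varphi$ with the Chinese Remainder Theorem isomorphism attached to the factorization of $x^{mn}-1$ induced by the $m$-th roots of unity, and then to transport the ideal structure of a cyclic code through it.

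First I would record the relevant factorization. Since $m \mid (q-1)$ and $\mathbb{F}_q^*$ is cyclic of order $q-1$, the elements $\zeta_0,\ldots,\zeta_{m-1}$ are exactly the $m$ distinct roots of $X^m-1$ in $\mathbb{F}_q$, so $X^m-1 = \prod_{i=0}^{m-1}(X-\zeta_i)$; substituting $X=x^n$ gives
\begin{equation*}
x^{mn}-1 = \prod_{i=0}^{m-1}\bigl(x^n-\zeta_i\bigr).
\end{equation*}
These factors are pairwise coprime in $\mathbb{F}_q[x]$: for $i\neq j$ we have $\gcd\bigl(x^n-\zeta_i,\,x^n-\zeta_j\bigr) = \gcd\bigl(x^n-\zeta_i,\,\zeta_j-\zeta_i\bigr)=1$ because $\zeta_j-\zeta_i \in \mathbb{F}_q^*$. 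Note that no hypothesis on $\gcd(n,q)$ is needed for this; the individual factors $x^n-\zeta_i$ may well be repeated-root polynomials.

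Next, by the Chinese Remainder Theorem in the principal ideal domain $\mathbb{F}_q[x]$, the reduction map $p(x) \longmapsto \bigl(p(x)\bmod(x^n-\zeta_0),\ldots,p(x)\bmod(x^n-\zeta_{m-1})\bigr)$ is a ring isomorphism
\begin{equation*}
\mathbb{F}_q[x]/\langle x^{mn}-1\rangle \;\cong\; \bigoplus_{i=0}^{m-1}\mathbb{F}_q[x]/\langle x^n-\zeta_i\rangle .
\end{equation*}
This is precisely the map $\varphi$ above, whose bijectivity is also witnessed by the invertibility of the Vandermonde matrix $V$ in the relation $VP=Q$. Finally, let $\mathcal{C}$ be a cyclic code of length $mn$ over $\mathbb{F}_q$, that is, an ideal of $\mathbb{F}_q[x]/\langle x^{mn}-1\rangle$. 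A ring isomorphism carries ideals to ideals, and every ideal of a finite direct sum of commutative rings equals the direct sum of its projections onto the summands, each projection being an ideal of the corresponding summand; hence $\varphi(\mathcal{C}) = \bigoplus_{i=0}^{m-1}\mathcal{C}_i$ with $\mathcal{C}_i$ an ideal of $\mathbb{F}_q[x]/\langle x^n-\zeta_i\rangle$. By the ideal/constacyclic-code correspondence recalled in Section~\ref{sec:constacyclic_codes}, each $\mathcal{C}_i$ is a $\zeta_i$-constacyclic code of length $n$ over $\mathbb{F}_q$, which yields the asserted decomposition $\mathcal{C}\cong\bigoplus_{i=0}^{m-1}\mathcal{C}_i$.

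The argument is essentially a routine application of CRT, so there is no serious obstacle; the only place the hypothesis is used is in guaranteeing that the $\zeta_i$ are distinct, which is exactly what makes the factors $x^n-\zeta_i$ pairwise coprime. The one step worth spelling out carefully---though still elementary---is that an ideal of the direct sum splits as the direct sum of component ideals, so that the decomposition is genuinely into constacyclic codes and not merely into some subspace entangling the summands.
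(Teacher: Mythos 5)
Your proof is correct and follows essentially the same route the paper takes: the paper recalls the map $\varphi$, observes its bijectivity via the nonsingular Vandermonde matrix $V$, and cites \cite{Hughes2000} for the ring isomorphism, which is exactly the CRT isomorphism you spell out. Your write-up simply fills in the details (pairwise coprimality of the factors $x^n-\zeta_i$ and the splitting of ideals in a direct sum) that the paper delegates to the reference.
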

In particular, let $q$ be odd and take $m=2$, $\zeta_0 = 1$, $\zeta_1 = -1$, then by Lemma \ref{Lem:Decomposition}, a cyclic code $\mathcal{C}$ over $\mathbb{F}_q$ of length $2n$ can be decomposed into a cyclic code $\mathcal{C}_1$ and a negacyclic code $\mathcal{C}_2$ of length $n$. Suppose $g(x) = g_1(x)g_2(x)$ is the generator polynomial of $\mathcal{C}$, where $g_1(x)$ and $g_2(x)$ are monic polynomials satisfying $g_1(x) \mid x^n-1$ and $g_2(x) \mid x^n+1$. Then $g_1(x)$ and $g_2(x)$ are generator polynomials of $\mathcal{C}_1$ and $\mathcal{C}_2$, respectively. In fact, we can provide a more intuitive relationship between $\mathcal{C}$ and $\mathcal{C}_1$, $\mathcal{C}_2$. By viewing the map $\varphi$ in matrix form, we can easily get that for any codeword $\mathbf
{c} = (c_0,c_1,\cdots,c_{2n-1})\in \mathcal{C}$, there exists $\mathbf{c}_1\in \mathcal{C}_1$ and $\mathbf{c}_2\in \mathcal{C}_2$ such that 
$$\begin{pmatrix}
    c_0 & c_1 & \cdots &  c_{n-1} \\
    c_{n} & c_{n+1} & \cdots &  c_{2n-1}
\end{pmatrix}
= V^{-1} \binom{\mathbf{c}_1}{\mathbf{c}_2}=2^{-1}\begin{pmatrix}
    1 & 1 \\
    1 & -1
\end{pmatrix}\binom{\mathbf{c}_1}{\mathbf{c}_2}.$$
 Since $q$ is odd, we can obtain $\mathbf{c}_1^\prime = 2^{-1}\mathbf{c}_1 \in \mathcal{C}_1$ and $\mathbf{c}_2^\prime = 2^{-1}\mathbf{c}_2 \in \mathcal{C}_2$. Thus we can obtain that $$\mathcal{C} = \{(\mathbf{u}+\mathbf{v},\mathbf{u}-\mathbf{v}): \mathbf{u}\in \mathcal{C}_1, \mathbf{v} \in \mathcal{C}_{2} \} \triangleq \mathcal{C}_1 \curlyvee \mathcal{C}_2.$$ 
 This decomposition of cyclic codes will be used in our construction of MDS symbol-pair codes of minimum pair distance $d_P=9$.

\section{Constructions of MDS symbol-pair codes}\label{sec3}
In this section, we will construct three classes of MDS symbol-pair codes with minimum pair distance of 7, 8, and 9, respectively.

\subsection{MDS symbol-pair codes with $d_P=7$ and $n=4q+4$}
In this subsection, we suppose $q$ is a prime power with $q\equiv 1\pmod 4$, we present a class of MDS symbol-pair codes of length $n=4q+4$ and minimum symbol-pair distance $d_P=7$.

Note that $n = 4q+4 \mid (q^2-1)$, so we let $\xi\in \mathbb{F}_{q^2}$ be a primitive $n$-th root of unity. Define $$g(x) = (x-1)(x+1)(x-\xi)(x-\xi^q)(x-\xi^{q+1}).$$
Then $g(x) \in \mathbb{F}_q[x]$ and $g(x) \mid x^n-1$. Let $\mathcal{C}=\langle g(x) \rangle$ be the cyclic code over $\mathbb{F}_q$ of length $n$ with generator polynomial  $g(x)$.

The goal of this subsection is to prove that $\mathcal{C}$ is an MDS symbol-pair code with $d_P=7$. Firstly, we determine the minimum Hamming distance of $\mathcal{C}$.

    \begin{lemma}\label{constr1::hammingdistance}
      The minimum Hamming distance $d_H(\mathcal{C})$ of $\mathcal{C}$ is equal to $4$.
    \end{lemma}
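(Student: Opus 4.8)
The plan is to establish $d_H(\mathcal{C}) = 4$ by sandwiching it between a lower bound of $4$ (from a BCH-type argument) and an upper bound of $4$ (by exhibiting an explicit weight-$4$ codeword). Since $\dim(\mathcal{C}) = n - \deg g(x) = n - 5$, the Singleton bound gives $d_H(\mathcal{C}) \le 6$, so we cannot get the upper bound for free and must produce a codeword.

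For the lower bound, I would examine the defining set $T$ of $\mathcal{C}$ with respect to $\xi$. The roots of $g(x)$ are $\xi^0 = 1$, $\xi^{n/2} = -1$, $\xi^1$, $\xi^q$, and $\xi^{q+1}$. Working modulo $n = 4q+4$, the exponents $q$ and $q+1$ are ``large,'' but since $-1 = \xi^{2q+2}$ corresponds to exponent $2q+2$, and the $q$-cyclotomic coset of $1$ contains $1$ and $q$, the coset of $q+1$ contains $q+1$ and $q(q+1) \equiv q+1 - n \cdot(\text{something})$; I would compute these cosets explicitly to see which consecutive runs of exponents lie in $T$. The key observation is that $\{1, q\} \subseteq T$ and, because $q \equiv 1 \pmod 4$ forces certain congruences, one should be able to find three consecutive exponents (for a BCH bound giving $d_H \ge 4$) — likely using that $\xi^0, \xi^1$ are roots together with one more root from the cosets of $q$ or $q+1$ landing adjacent to them, or alternatively invoking the Hartmann–Tzeng bound (Lemma \ref{lem::HT_Bound}) with a short arithmetic progression $B$. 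The cleanest route is probably: $\{0,1\} \subseteq T$ trivially, and then find a stride $b$ and offset making $\{0,1\} + \{0, b\} \subseteq T$, yielding $d_H \ge 3 + 1 = 4$.

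For the upper bound, I would search for a codeword of Hamming weight $4$, i.e. a polynomial $c(x) = a_1 x^{i_1} + a_2 x^{i_2} + a_3 x^{i_3} + a_4 x^{i_4}$ of degree $< n$, vanishing at $1, -1, \xi, \xi^q, \xi^{q+1}$ (five conditions — but over $\mathbb{F}_q$, vanishing at $\xi$ automatically gives vanishing at $\xi^q$ since these are conjugate, so effectively four linear conditions over $\mathbb{F}_{q^2}$, or equivalently four $\mathbb{F}_q$-conditions). A natural ansatz is a ``balanced'' codeword supported on a coset structure: for instance something like $(x^{n/2} - 1)(x^a - x^b)$ which already vanishes at $1$ and $-1$, leaving two more conditions at $\xi$ and $\xi^{q+1}$; choosing the exponents $a, b$ to kill $\xi^{n/2+1} - \xi^{a} \cdots$ appropriately. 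I expect the construction to exploit that $\xi^{q+1}$ has order dividing $4$ (since $(q+1) \cdot 4 \equiv 0$ might not hold, but $\gcd(q+1, n) = \gcd(q+1, 4q+4) = q+1$, so $\xi^{q+1}$ is a primitive $4$-th root of unity!). This is the crucial structural fact: $\xi^{q+1}$ is a primitive $4$th root of unity in $\mathbb{F}_q$ (valid since $4 \mid q-1$), call it $\omega$. Then a codeword like $1 + \omega^{-1} x + \cdots$ built from the $4$th roots of unity, tensored suitably with length-$(q+1)$ structure, should give weight $4$.

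The main obstacle will be the upper bound: pinning down the exact weight-$4$ codeword requires solving the small linear system coming from the five root conditions and verifying the resulting support has size exactly $4$ (not smaller — weight $\le 3$ would contradict the lower bound, so any nonzero solution of weight $\le 4$ automatically has weight exactly $4$, which is a helpful shortcut). I anticipate the cleanest presentation writes $c(x) = (x^{2q+2} - 1) f(x)$ for a suitable binomial or trinomial $f$, using $x^{2q+2} - 1$ to absorb the roots $\pm 1$, then imposing $c(\xi) = c(\xi^{q+1}) = 0$; the condition at $\xi^{q+1} = \omega$ becomes a condition over $\mathbb{F}_q$ that interacts nicely with $q \equiv 1 \pmod 4$. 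Once the lower bound $d_H \ge 4$ is in hand and such a codeword is displayed, the lemma follows immediately.
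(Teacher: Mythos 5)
Your proposal follows essentially the same route as the paper: the lower bound via the Hartmann--Tzeng bound with $A=\{0,1\}$ and $B=\{0,q\}$ (note $\gcd(q,4q+4)=1$), and the upper bound via an explicit weight-$4$ product codeword exploiting exactly the structural fact you identify, namely that $\xi^{q+1}$ is a primitive $4$th root of unity and $\xi^{2q+2}=-1$. The paper's explicit choice is $(x^4-1)(x^{2q+2}+1)$, whose factors absorb $\{1,-1,\xi^{q+1}\}$ and $\{\xi,\xi^{q}\}$ respectively; your ansatz $(x^{2q+2}-1)f(x)$ also closes, e.g.\ with $f(x)=x^{q+1}-\xi^{q+1}$.
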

    \begin{proof}
        Note that $1,\xi,\xi^{q},\xi^{q+1}$ are roots of $g(x)$ and $\gcd(q,n) = 1$. By Lemma \ref{lem::HT_Bound}, we have $d_H(\mathcal{C}) \geq 3+1 = 4.$  Since $\xi$ is a primitive $n$-th root of unity, we have $(\xi^{q+1})^4=1$ and $\xi^{2q+2}=-1$. Thus $(x-1)(x+1)(x-\xi^{q+1}) \mid (x^4-1)$ and $(x-\xi)(x-\xi^q) \mid (x^{2q+2}+1)$. Therefore, $g(x) \mid (x^4-1)(x^{2q+2}+1)$, which implies that $(x^4-1)(x^{2q+2}+1) \in \mathcal{C}$. Thus $d_H(\mathcal{C}) \leq \Wt_H\big((x^4-1)(x^{2q+2}+1)\big) = 4$.  Hence $d_H(\mathcal{C}) = 4$.
    \end{proof}

    For convenience, we denote by the symbol $\star$ an element in $\mathbb{F}_q^*$ and by $\mathbf{0}_t$ an all-zero vector of length $t$. 

    \begin{lemma}
        \label{lemma6}
       There exists no codeword of the form $\left(\star,\star,\star,\mathbf{0}_r,\star,\mathbf{0}_{n-r-4}\right)$ in $\mathcal{C}$ for any $1\leq r \leq n-5$.
    \end{lemma}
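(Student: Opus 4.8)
The strategy is to assume such a codeword $c(x)$ exists and derive a contradiction from the root conditions imposed by $g(x)$. A codeword of the stated shape has the form $c(x) = a_0 + a_1 x + a_2 x^2 + a_{r+3} x^{r+3}$ with $a_0, a_1, a_2, a_{r+3} \in \mathbb{F}_q^*$ (here the four nonzero entries sit at positions $0,1,2$ and $r+3$, with $1 \le r \le n-5$ so that $3 \le r+3 \le n-2$). Since $1, -1, \xi, \xi^q, \xi^{q+1}$ are all roots of $g(x)$ and $g(x) \mid c(x)$ in $\mathbb{F}_q[x]/\langle x^n-1\rangle$ — equivalently $c(x)$ vanishes at each of these five elements — we obtain five linear equations in the four unknowns $a_0, a_1, a_2, a_{r+3}$. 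First I would write out this $5 \times 4$ homogeneous system explicitly: the coefficient "matrix" has rows indexed by the roots $\beta \in \{1, -1, \xi, \xi^q, \xi^{q+1}\}$ and columns $(1, \beta, \beta^2, \beta^{r+3})$.

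The key step is to show this system forces all $a_i = 0$. Because there are five equations and four unknowns, it suffices to exhibit one $4\times 4$ subsystem that is nonsingular; equivalently, I would argue that the relevant generalized Vandermonde-type determinants do not vanish. Concretely, using the rows for $1, -1, \xi^{q+1}$ (whose powers behave like a length-$4$ pattern since $\xi^{q+1}$ has order $4$) together with one of the rows for $\xi$ or $\xi^q$, one gets determinants that are products/differences of the distinct quantities $1, -1, \xi, \xi^q, \xi^{q+1}$ and their powers. I expect that the cleanest route is: from the three rows $\beta = 1, -1, \xi^{q+1}$ one can solve for $a_0, a_1, a_2$ (or appropriate combinations) in terms of $a_{r+3}$, then substitute into the two remaining equations from $\beta = \xi, \xi^q$ and show the resulting conditions on $\xi^{r+3}$ and $\xi^{q(r+3)}$ cannot simultaneously hold unless $a_{r+3} = 0$. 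This will reduce to showing a certain polynomial in $\xi$ (of degree bounded independently of $r$, after using $\xi^n = 1$) has no root among the primitive $n$-th roots of unity, or to a divisibility/counting argument modulo $n = 4q+4$.

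The main obstacle is the dependence on the free parameter $r$: the exponent $r+3$ ranges over roughly $n$ values, so I cannot just check finitely many cases by brute force. The resolution I anticipate is to exploit the arithmetic of the exponents modulo $n = 4(q+1)$ — in particular, $\xi^{q+1}$ is a primitive $4$th root of unity, so $\xi^{(q+1)(r+3)}$ depends only on $r+3 \bmod 4$, which splits the problem into four residue classes; and $\xi^{r+3}$, $\xi^{q(r+3)}$ are governed by $r+3 \bmod (q+1)$ together with the Frobenius relation $(\xi^{r+3})^q = \xi^{q(r+3)}$. Within each class the vanishing condition becomes an equation of the form $\xi^{a} = (\text{explicit element of } \mathbb{F}_q)$ with $\xi^a \notin \mathbb{F}_q$ for $a \not\equiv 0 \pmod{q+1}$, or a contradiction with $a_{r+3}\neq 0$; handling the boundary residues $a \equiv 0$ separately (where $\xi^a \in \{\pm 1, \pm\text{primitive 4th roots}\}$) should close the argument. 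I would also keep in mind Lemma \ref{constr1::hammingdistance}: since $d_H(\mathcal{C}) = 4$, a weight-$4$ codeword is not automatically excluded, so the work genuinely lies in ruling out this \emph{particular} support pattern rather than all weight-$4$ words.
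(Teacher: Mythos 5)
Your plan is essentially the paper's proof: normalize the constant term, write the vanishing conditions at $1,-1,\xi,\xi^{q+1}$ (the condition at $\xi^q$ being the Frobenius conjugate of the one at $\xi$), split on the residue of $\ell=r+3$ modulo $4$ using that $\xi^{q+1}$ has order $4$, and in the remaining case use $c_1^q=c_1$ to eliminate the exponent $\ell$ and reduce to a fixed low-degree polynomial identity in $\xi$ (the paper gets $(\xi^2+1)(\xi^4+1)=0$, contradicting primitivity). The only caveat is that the hardest subcase $\ell\equiv 3\pmod 4$ is left as a sketch in your writeup, and the contradiction there is not of the simple form $\xi^a\in\mathbb{F}_q$ but requires the elimination you describe; carried out, it matches the paper's computation.
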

    \begin{proof}
       We prove the lemma by contradiction. W.l.o.g., we suppose $c(x) = 1+c_1 x + c_2 x^2 + c_{\ell}x^{\ell} \in \mathcal{C}$, where $c_1,c_2,c_\ell \in \mathbb{F}^*_q$ and $4\leq\ell=r+3 \leq n-2$. Then $g(x) \mid c(x)$ $\Longrightarrow$ $c(1) = c(-1) = c(\xi) = c(\xi^{q+1}) =0$. So we have the following system
        \begin{equation}
            \label{Eq::Equation1}
            \begin{cases}
                1 + c_1 + c_2 + c_\ell &= 0,   \\
                1 - c_1 + c_2 + (-1)^\ell c_\ell&= 0, \\
                1 + c_1 \xi + c_2\xi^2 + c_\ell \xi^\ell &= 0, \\
                1 + c_1 \xi^{q+1} + c_2\xi^{2q+2} + c_\ell \xi^{(q+1)\ell} &= 0. 
        \end{cases}
        \end{equation}
        \begin{itemize}
            \item If $\ell$ is even, then from the first and second equations in (\ref{Eq::Equation1}), we can derive that $c_1 = 0$, which contradicts the assumption that $c_1 \in \mathbb{F}_q^*$.
            \item If $\ell$ is odd, then from the first and second equations in (\ref{Eq::Equation1}), we obtain that $c_1 = -c_\ell$, $c_2 = -1$. Then we have the following system:
            \begin{equation*}
                \begin{cases}
                    1 - \xi^2 + c_1 (\xi  - \xi^\ell) &= 0,\\
                    1  - \xi^{2 q+2}  + c_1 \xi^{q+1}(1-  \xi^{(q+1)(\ell-1)})  &= 0.
                \end{cases}
            \end{equation*}
            When $\ell \equiv 1 \pmod{4}$, we have $\xi^{(q+1)(\ell-1)} = 1$. Thus, by the second equation above, $\xi^{2q+2} = 1$, which contradicts the fact that $\xi$ is a primitive $(4q+4)$-th root of unity.
            
            When $\ell  \equiv 3 \pmod 4$, by solving the system, we get
            $$\begin{cases}
                c_1 = \frac{1-\xi^2}{\xi^\ell - \xi},\\
                        c_1 =  \frac{1 - \xi^{2q+2}}{\xi^{(q+1)\ell}-\xi^{q+1}} = -\frac{1}{\xi^{q+1}}.
            \end{cases}$$
            Thus $\frac{\xi^2-1}{\xi^\ell - \xi} = \frac{1}{\xi^{q+1}}$, i.e.,
            \begin{align}
                \xi^l = \xi^{q+1}(\xi^2-1) + \xi.
                \label{Eq::Equation3}
            \end{align}

            Since $\frac{\xi^2-1}{\xi^\ell - \xi} = \frac{1}{\xi^{q+1}}\in \mathbb{F}_q^*$, we have
                    \begin{equation*}
                        \begin{aligned}
                            (\frac{\xi^2-1}{\xi^\ell - \xi})^q 
                            &= \frac{\xi^{2q}-1}{\xi^{q\ell} - \xi^q} = \frac{-\xi^{-2}-1}{\xi^q(-\xi^{-(\ell-1)}-1) } \\
                            &= \frac{(\xi^{-2}+1)\xi^{\ell-1}}{\xi^q(1+\xi^{\ell-1}) } = \frac{1}{\xi^{q+1}}.
                        \end{aligned}
                    \end{equation*}
     Thus
                    \begin{align}
                        (\xi^{-2}+1)\xi^{\ell}  = 1+\xi^{\ell-1}.
                        \label{Eq::Equation4}
                    \end{align} Combining (\ref{Eq::Equation3}) with (\ref{Eq::Equation4}), we get 
                    \begin{equation*}
                        \begin{aligned}
                            (\xi^{-2}+1)\left[\xi^{q+1}(\xi^2-1) + \xi\right]  = 1+\xi^{q}(\xi^2-1) + 1.
                        \end{aligned}
                    \end{equation*}
                    That is, 
                    \begin{align*}
                        (\xi - 1)(\xi^{q+2}+\xi^{q-1}+1-\xi^{-1})=0.
                    \end{align*} 
                    Hence, $\xi^{q+2}+\xi^{q-1}+1-\xi^{-1} = 0  \Longrightarrow \xi^{q+1} = \frac{\xi^{-1}-1}{\xi+\xi^{-2}} = \frac{\xi-\xi^2}{\xi^3+1}$. Note that $\xi^{2q+2} =-1$, so $\xi^{2q+2} = (\frac{\xi-\xi^2}{\xi^3+1})^2 = -1$.
                    Then we have 
                    $$\xi^6+\xi^4 + \xi^2 + 1 =(\xi^2+1)(\xi^4+1)= 0,$$
                    which implies that $\xi^8=1$. It contradicts the condition that $\xi$ is a primitive $n$-th root of unity.
        \end{itemize}
        In summary, there does not exist such a codeword in $\mathcal{C}$.
    \end{proof}

    \begin{lemma}
        \label{lemma7}
         There exists no codeword of the form $\left(\star,\star,\mathbf{0}_r,\star,\star,\mathbf{0}_{n-r-4}\right)$ in $\mathcal{C}$ for any $1\leq r \leq n-5$.
    \end{lemma}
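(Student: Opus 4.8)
The plan is to run a proof by contradiction in the same spirit as Lemma~\ref{lemma6}, but the computation should be considerably shorter because a codeword supported on two \emph{adjacent} pairs carries more rigid structure. Suppose such a codeword exists. After multiplying by a suitable scalar in $\mathbb{F}_q^*$ we may assume its polynomial is $c(x) = 1 + c_1 x + c_s x^s + c_{s+1}x^{s+1}$ with $c_1, c_s, c_{s+1}\in\mathbb{F}_q^*$ and $s = r+2$, so that $3\le s\le n-3$. Since $c(x)\in\mathbb{F}_q[x]$ and $\xi^q$ is the Frobenius conjugate of $\xi$, the condition $g(x)\mid c(x)$ is equivalent to the four scalar equations $c(1)=c(-1)=c(\xi)=c(\xi^{q+1})=0$.

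First I would extract information from $c(1)=c(-1)=0$, splitting on the parity of $s$ (equivalently of $r$). Adding and subtracting these two equations gives, when $s$ is even, $c_s=-1$ and $c_1=-c_{s+1}$, and when $s$ is odd, $c_{s+1}=-1$ and $c_1=-c_s$. In either case two of the three unknowns become pinned down and only $c_1$ remains free.

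For even $s$, substituting back produces the factorization $c(x) = (1-x^s)(1+c_1 x)$; evaluating at $\xi$ and using that $\xi$ has order $n$ (so $\xi^s\neq 1$ for $0<s<n$) forces $1+c_1\xi=0$, i.e. $c_1=-\xi^{-1}$. Since $n=4q+4$ does not divide $q-1$, we have $\xi\notin\mathbb{F}_q$, hence $c_1\notin\mathbb{F}_q$, contradicting $c_1\in\mathbb{F}_q^*$. For odd $s$, I would instead evaluate $c(\xi^{q+1})=0$. Writing $\omega=\xi^{q+1}$, which is a primitive $4$th root of unity ($\omega^2=-1$), and $c(x)=1+c_1x-c_1x^s-x^{s+1}$, the value $c(\omega)=1+c_1\omega-c_1\omega^s-\omega^{s+1}$ collapses to $2$ when $s\equiv 1\pmod 4$ and to $2c_1\omega$ when $s\equiv 3\pmod 4$; since $q$ is odd and $c_1,\omega\neq 0$, neither of these is zero, again a contradiction. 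As every residue of $s$ is covered, no such codeword can exist.

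I do not expect a genuine obstacle here. The long conjugation chase in Lemma~\ref{lemma6} was forced by the isolated nonzero term $c_\ell x^\ell$, whose exponent $\ell$ persists in the equations; in the present situation the two adjacent monomials $x^s$ and $x^{s+1}$ always appear as a block, which after the parity reduction lets a single evaluation --- at $\xi$ when $s$ is even, at $\xi^{q+1}$ when $s$ is odd --- close each case. The one point requiring care is that the even-$s$ case genuinely needs the evaluation at $\xi$: the equation $c(\xi^{q+1})=0$ is vacuous when $s\equiv 0\pmod 4$ and merely constrains $c_1$ to an admissible element of $\mathbb{F}_q$ when $s\equiv 2\pmod 4$, so one must not try to dispatch this case using the root $\xi^{q+1}$ alone.
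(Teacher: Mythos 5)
Your proposal is correct and follows essentially the same route as the paper's proof: reduce via $c(1)=c(-1)=0$ according to the parity of $s$, then kill the even case by evaluating at $\xi$ (forcing $c_1=-\xi^{-1}\notin\mathbb{F}_q$) and the odd case by evaluating at $\xi^{q+1}$ with a further split mod $4$ (yielding $2=0$ or $c_1=0$). The factorization $c(x)=(1-x^s)(1+c_1x)$ in the even case and the observation that $c(\xi^{q+1})=0$ alone cannot close the even case are both accurate refinements of the same computation.
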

    \begin{proof}
      We prove the lemma by contradiction. W.l.o.g., we assume that $c(x) = 1+c_1x+ c_\ell x^\ell + c_{\ell+1} x^{\ell+1} \in \mathcal{C}$, where $c_1,c_\ell,c_{\ell+1} \in \mathbb{F}^*_q$ and $3\leq\ell=r+2 \leq n-3$. Since $1,-1,\xi,\xi^{q+1}$ are four roots of $g(x)$, we immediately have $c(-1) = c(1) = c(\xi) = c(\xi^{q+1}) = 0.$ Hence, we have the following system:
        \begin{equation}
            \label{Eq::Equation5}
            \begin{cases}
                1 + c_1 + c_\ell + c_{\ell+1} &= 0,\\
                1 - c_1 + (-1)^{\ell}c_\ell + (-1)^{\ell+1} c_{\ell+1}&= 0,\\
                1 + c_1 \xi  + c_\ell \xi^\ell + c_{\ell+1}\xi^{\ell+1}&= 0, \\
                1 + c_1 \xi^{q+1} + c_\ell \xi^{(q+1)\ell} + c_{\ell+1}\xi^{(q+1)(\ell+1)} &= 0.
            \end{cases}
        \end{equation}
        \begin{itemize}
            \item When $\ell$ is even, from the first and second equations of (\ref{Eq::Equation5}), we can derive $c_1  = -c_{\ell+1}$ and $c_{\ell}=-1$. From the third equation of (\ref{Eq::Equation5}) , we obtain  $$1 + c_1 \xi  -  \xi^\ell - c_{1}\xi^{\ell+1}= (1-\xi^{\ell})(1-c_1\xi) =0.$$
            Note that $1-\xi^{\ell} \neq 0$ for $3\leq \ell \leq n-3$, then $c_1 = \frac{1}{\xi} \not \in \mathbb{F}_q$, which is a contradiction.
            \item When $\ell$ is odd, from the first and second equations of (\ref{Eq::Equation5}), we can obtain $c_1  = -c_\ell$ and $c_{\ell+1}=-1$. Then we have the following system:
            \begin{numcases}{}
                1 - \xi^{\ell+1} + c_1 (\xi  -  \xi^\ell ) &= 0, \nonumber\\ 
                1 - \xi^{(q+1)(\ell+1)}  + c_1 \xi^{q+1}(1 -  \xi^{(q+1)(\ell-1)}) &= 0. \label{Eq::Equation6} 
            \end{numcases}
            If $\ell \equiv 1 \pmod 4$, we have $\xi^{(q+1)(\ell-1)} = 1$ and $\xi^{(q+1)(\ell+1)} =\xi^{2(q+1)} =  -1$. By Equation (\ref{Eq::Equation6}), we can obtain $ 2 = 0$, which is a contradiction.
            \newline If $\ell \equiv 3 \pmod 4$, we have $\xi^{(q+1)(\ell+1)} = 1$ and $\xi^{(q+1)(\ell-1)} =\xi^{2(q+1)} = -1$. By Equation (\ref{Eq::Equation6}), we can obtain $ 2c_1\xi^{q+1} = 0 \Longrightarrow c_1=0$, which contradicts the assumption that $c_1\in \mathbb{F}_q^*$.
        \end{itemize}
        In summary, there does not exist such a codeword in $\mathcal{C}$.
    \end{proof}
    Combining Lemmas \ref{lemma6} and \ref{lemma7}, we can construct a class of MDS symbol-pair codes with $d_P = 7$.
    \begin{theorem}
        \label{Thm:theorem1}
        Suppose $q$ is a prime power with $q \equiv 1\pmod 4$ and $n=4q+4$. Let $\mathcal{C}$ be the cyclic code with generator polynomial $g(x) = (x-1)(x+1)(x-\xi)(x-\xi^q)(x-\xi^{q+1})$. Then $\mathcal{C}$ is an $(n=4q+4, d_P=7)_q$ MDS symbol-pair code.
    \end{theorem}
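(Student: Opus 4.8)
The plan is to establish that $d_P(\mathcal{C}) = 7$; since $\deg g(x) = 5$ we have $\dim(\mathcal{C}) = n - 5$, so $|\mathcal{C}| = q^{n-5} = q^{n-7+2}$, and the Singleton-type bound (\ref{singleton_type_bound}) already forces $d_P(\mathcal{C}) \le 7$. Thus it suffices to prove $d_P(\mathcal{C}) \ge 7$, and the MDS property then follows for free. First I would record that $\mathcal{C}$ is far from being Hamming-MDS: Lemma \ref{constr1::hammingdistance} gives $d_H(\mathcal{C}) = 4$, while $n - \dim(\mathcal{C}) + 1 = 6 > 4$, so $\mathcal{C}$ is not an MDS code, and Lemma \ref{lem::Chen2017} yields $d_P(\mathcal{C}) \ge d_H(\mathcal{C}) + 2 = 6$. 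Hence the only remaining task is to exclude the existence of a codeword of pair weight exactly $6$.

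Next I would translate such a codeword into combinatorial data about its support. For any nonzero $\mathbf{c} \in \mathbb{F}_q^n$ with $w_H(\mathbf{c}) < n$, writing $b(\mathbf{c})$ for the number of maximal cyclic runs of consecutive nonzero coordinates, each run of length $\ell$ contributes exactly $\ell + 1$ nonzero pairs to $\pi(\mathbf{c})$ and distinct runs contribute disjoint sets of pair positions, so $w_P(\mathbf{c}) = w_H(\mathbf{c}) + b(\mathbf{c})$. Since $d_H(\mathcal{C}) = 4$, $b(\mathbf{c}) \ge 1$, and $n = 4q+4 > 6$ (so $w_H(\mathbf{c}) = n$ cannot occur for a pair-weight-$6$ codeword), any $\mathbf{c} \in \mathcal{C}$ with $w_P(\mathbf{c}) = 6$ must satisfy either $w_H(\mathbf{c}) = 5$ with $b(\mathbf{c}) = 1$, or $w_H(\mathbf{c}) = 4$ with $b(\mathbf{c}) = 2$.

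For the first case, a cyclic shift places the support of $\mathbf{c}$ inside $\{0,1,2,3,4\}$, so $\mathbf{c}$ corresponds to a nonzero polynomial of degree at most $4$; but every codeword polynomial is divisible by $g(x)$, which has degree $5$ — a contradiction. For the second case, the two runs have lengths $(3,1)$ or $(2,2)$ up to cyclic rotation and reflection; using that $\mathcal{C}$ is invariant under cyclic shifts and nonzero scalars, we may normalize $\mathbf{c}$ so that its first nonzero coordinate is $1$ and it has the form $\left(\star,\star,\star,\mathbf{0}_r,\star,\mathbf{0}_{n-r-4}\right)$ or $\left(\star,\star,\mathbf{0}_r,\star,\star,\mathbf{0}_{n-r-4}\right)$ for some $1 \le r \le n-5$, and these are excluded by Lemmas \ref{lemma6} and \ref{lemma7} respectively. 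Hence $\mathcal{C}$ has no codeword of pair weight $6$, so $d_P(\mathcal{C}) \ge 7$, and therefore $d_P(\mathcal{C}) = 7$ and $\mathcal{C}$ is an MDS symbol-pair code.

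The genuinely hard part is already packaged into Lemmas \ref{lemma6} and \ref{lemma7}: eliminating the solutions of the root systems (\ref{Eq::Equation1}) and (\ref{Eq::Equation5}) over $\mathbb{F}_{q^2}$, which requires careful manipulation using the relations $\xi^{2(q+1)} = -1$ and $\xi^{4(q+1)} = 1$, the action of $x \mapsto x^q$ on powers of $\xi$, the residue of $\ell$ modulo $4$, and ultimately the primitivity of $\xi$ to reach a contradiction. Within the theorem proof itself the only point needing care is confirming that the run-pattern case analysis is exhaustive and that the cyclic-shift and scaling reduction produces exactly the normalized forms — and the index range $1 \le r \le n-5$ — covered by the two lemmas.
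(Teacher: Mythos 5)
Your proposal is correct and follows essentially the same route as the paper's proof: bound $d_P\le 7$ by the Singleton-type bound, get $d_P\ge 6$ from $d_H=4$ and Lemma \ref{lem::Chen2017}, and rule out pair weight $6$ by reducing (via cyclic shifts) to the degree obstruction and the two normalized support patterns handled by Lemmas \ref{lemma6} and \ref{lemma7}. The only difference is that you justify the exhaustiveness of the run-pattern case analysis explicitly via $w_P=w_H+b$, which the paper asserts without proof.
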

    \begin{proof}
       Note that $\dim(\mathcal{C})=n-\deg(g(x))=4q-1$ and $d_H(\mathcal{C})=4$ by Lemma \ref{constr1::hammingdistance}, thus $\mathcal{C}$ is not an MDS code. Hence we can obtain $d_P(\mathcal{C})\geq 6$ by Lemma \ref{lem::Chen2017}. By the Singleton-type bound (\ref{singleton_type_bound}), we have $d_P(\mathcal{C}) \leq 7$, thus it is sufficient to prove that $d_P(\mathcal{C}) \neq 6$, equivalently,  there exists no codeword with minimum pair weight $6$. By contradiction, suppose that there is a codeword $\mathbf{c}\in\mathcal{C}$ with pair weight 6, then its certain cyclic shift must be one of the following forms: 
         \begin{gather*}
(\star,\star,\star,\star,\star,\mathbf{0}_{n-5}),\\
(\star,\star,\star,\mathbf{0}_{r},\star,\mathbf{0}_{n-r-4}) \textnormal{ with } 1\leq r\leq n-5, \\
            (\star,\star,\mathbf{0}_{r},\star,\star,\mathbf{0}_{n-r-4})  \textnormal{ with } 1\leq r\leq n-5.
        \end{gather*}
        If $\mathbf{c}$ is of the first form, then the corresponding polynomial $c(x)$ has degree 4, while the generator polynomial $g(x)$ has degree 5, which is impossible. By Lemmas \ref{lemma6} and \ref{lemma7}, $\mathbf{c}$ can not be of the other two forms. Hence, we have proved that $\mathcal{C}$ is a $(4q+4,7)_q$ MDS symbol-pair code.
    \end{proof}
    \begin{remark}
        There are several constructions of MDS symbol-pair codes with $d_P = 7$. When $q = p$ is a prime, Ma et al. (\cite{Ma2022}) proposed a family of $(n = 5p, d_P = 7)_q$-MDS symbol-pair codes, which have the longest code length known so far. When $q$ is a prime power, Kai et al. (\cite{Kai2024}) constructed another family of $(n = 4q - 4, d_P = 7)_q$-MDS symbol-pair codes for $q \equiv 1 \pmod{4}$. Thus, when $q$ is not a prime, our MDS symbol-pair codes achieve the longest code length with minimum pair distance 7 to date.
    \end{remark}

\subsection{MDS symbol-pair codes with $d_P=8$ and $n=4q-4$}
In this subsection, we suppose $n = 4q-4$ and $q$ is a prime power satisfying  $q\equiv 3 \pmod 4$. 

Note that $n = 4q-4 \mid (q^2-1)$, so we let $\xi\in \mathbb{F}_{q^2}$ be a primitive $n$-th root of unity. Define $$g(x) = (x-1)(x+1)(x-\xi)(x-\xi^q)(x-\xi^{2})(x-\xi^{2q}).$$
Then $g(x) \in \mathbb{F}_q[x]$ and $g(x) \mid x^n-1$. Let $\mathcal{C}=\langle g(x) \rangle$ be the cyclic code over $\mathbb{F}_q$ of length $n$ with generator polynomial $g(x)$.

The goal of this subsection is to prove that $\mathcal{C}$ is an MDS symbol-pair code with $d_P=8$. Firstly, we  determine the minimum Hamming distance of $\mathcal{C}$.
    \begin{lemma}
        Denote $d_H(\mathcal{C})$ as the minimum Hamming distance of $\mathcal{C}$, then
        \begin{itemize}
            \item[(1)]   $d_H(\mathcal{C}) = 6$ when $q=3$;
            \item[(2)] $d_H(\mathcal{C}) = 4$ when $q\neq 3$.
        \end{itemize}
    \end{lemma}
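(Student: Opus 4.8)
The plan is to compute $d_H(\mathcal{C})$ by sandwiching it between a lower bound coming from a cyclic-code distance bound and an upper bound coming from an explicitly exhibited low-weight codeword, exactly as in the proof of Lemma~\ref{constr1::hammingdistance}. For the lower bound, note that the defining set of $\mathcal{C}$ contains the exponents $\{q,1,2,2q\}$ relative to $\xi$; since $\xi^q$ and $\xi^{2q}$ are the conjugates of $\xi$ and $\xi^2$, the roots $\xi^1,\xi^2$ form two consecutive powers, and together with their conjugates one can try to apply the Hartmann--Tzeng bound (Lemma~\ref{lem::HT_Bound}) with a consecutive string $A=\{1,2\}$ of length $\delta-1=2$ (so $\delta=3$) and a shift set $B$ built from the step $b$ that carries $\{1,2\}$ to $\{q,2q\}$. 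Actually it is cleaner to observe $\xi$ and $\xi^2$ are two consecutive roots, hence BCH already gives $d_H(\mathcal{C})\ge 3$; to push to $d_H(\mathcal{C})\ge 4$ one incorporates the root $x=1$ (exponent $0$), making $\{0,1,2\}$ three consecutive exponents, so the BCH bound for cyclic codes yields $d_H(\mathcal{C})\ge 4$. This handles the case $q\neq 3$ on the lower-bound side.

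For the upper bound when $q\neq 3$, I would exhibit a weight-$4$ codeword by the same factorization trick used before: since $\xi$ is a primitive $(4q-4)$-th root of unity, $\xi^{q-1}$ is a primitive $4$-th root of unity, so $(x-1)(x+1)(x-\xi^{?})$ divides $x^4-1$ for the appropriate power, while $(x-\xi)(x-\xi^q)$ and $(x-\xi^2)(x-\xi^{2q})$ divide binomials $x^{m}\pm 1$ of controlled degree. Concretely one checks $\xi^{2q-2}=-1$, so $(x-\xi)(x-\xi^q)\mid x^{2q-2}+1$; and $\xi^{4q-4}=1$ with $(x-\xi^2)(x-\xi^{2q})\mid x^{q-1}-1$ or a similar binomial; assembling these shows $g(x)$ divides a product of two binomials whose product has Hamming weight $4$, giving a weight-$4$ codeword and hence $d_H(\mathcal{C})\le 4$. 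Combining with the lower bound gives $d_H(\mathcal{C})=4$ for $q\neq 3$.

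For the case $q=3$, here $n=8$ and $g(x)$ has degree $6$, so $\mathcal{C}$ is a $[8,2]$ cyclic code over $\mathbb{F}_3$; the claim $d_H(\mathcal{C})=6$ should be verified by a direct argument. The lower bound $d_H(\mathcal{C})\ge 6$ should follow from a stronger application of Hartmann--Tzeng (or from the fact that with only $9^2=81$ codewords one can enumerate): the defining set with respect to $\xi$ a primitive $8$-th root of unity, pulled back through the conjugacy $q=3$, is large enough that $\mathcal{C}$ contains a long run of consecutive roots together with a shift pattern forcing $d\ge 6$; alternatively, since $\dim\mathcal{C}=2$, every nonzero codeword is an $\mathbb{F}_3$-combination of two basis vectors and one checks directly that all such have weight $\ge 6$. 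For the matching upper bound, exhibit the generator polynomial $g(x)$ itself (degree $6$, so at most $7$ nonzero coefficients) and compute its weight, or use the two-binomial factorization specialized to $q=3$; one expects $\Wt_H(g(x))=6$.

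The main obstacle I anticipate is the lower bound: getting from the easy BCH bound $d_H\ge 4$ to the sharp value requires no extra work when $q\neq 3$ since the upper bound already meets $4$, but for $q=3$ one must genuinely prove $d_H\ge 6$, and the defining set there behaves differently because the cyclotomic cosets modulo $8$ over $\mathbb{F}_3$ coalesce ($3$ has order $2$ mod $8$, and $\xi^2$ may be conjugate to $\xi^6$, etc.), so the structure of $g(x)$ and the relevant consecutive-root pattern must be recomputed carefully. A clean route for $q=3$ is simply to note $\dim\mathcal{C}=2$ and verify the weight condition by examining the $\mathbb{F}_3$-span of $\{g(x),xg(x)\}$ directly, which is a finite and short check.
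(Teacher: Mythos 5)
Your overall strategy is the same as the paper's: BCH lower bound from the consecutive exponents $\{0,1,2\}$, an explicit weight-$4$ codeword for the upper bound, and a direct check of the $[8,2,6]$ code when $q=3$. The lower bound and the $q=3$ case are fine. The problem is the upper bound for $q\neq 3$, where your construction as written does not produce a weight-$4$ codeword.

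Two concrete issues. First, your groupings of the roots into binomials are wrong: $\xi^2$ and $\xi^{2q}$ are \emph{not} roots of $x^{q-1}-1$ (that would require $4\mid 2$), and $\pm 1$ together with $\xi^2,\xi^{2q}$ are not fourth roots of unity, so the pattern $(x-1)(x+1)(x-\xi^{?})\mid x^4-1$ carried over from the $n=4q+4$ lemma does not apply here. The correct even/odd split is $(x-1)(x+1)(x-\xi^2)(x-\xi^{2q})\mid x^{2q-2}-1$ and $(x-\xi)(x-\xi^q)\mid x^{2q-2}+1$. Second, and more importantly, even with that correct split the naive product is $(x^{2q-2}-1)(x^{2q-2}+1)=x^{4q-4}-1=x^n-1\equiv 0$ in $\mathbb{F}_q[x]/\langle x^n-1\rangle$, i.e.\ the zero codeword, not a weight-$4$ one. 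The essential step you are missing is to replace $x^{2q-2}+1$ by a \emph{lower-degree} binomial over $\mathbb{F}_q$ still having $\xi$ and $\xi^q$ as roots, so that the total degree drops below $n$; the paper uses $\xi^{q+1}x^{q-3}+1$ (noting $\xi^{q+1}\in\mathbb{F}_q^*$ and $\xi^{2q-2}=-1$), giving the codeword $(x^{2q-2}-1)(\xi^{q+1}x^{q-3}+1)$ of degree $3q-5<n$ and weight $4$. (Another valid choice is $x^4-\xi^4$, of degree $4$, since $\xi^{q-1}$ is a primitive fourth root of unity and $\xi^4\in\mathbb{F}_q$.) Without identifying such a binomial and verifying the degree constraint, the claimed upper bound $d_H(\mathcal{C})\le 4$ is not established.
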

    \begin{proof}
        (1) When $q = 3$, it is directly to verify that $\mathcal{C}$ is a $[8,2,6]$ cyclic code.
            
        (2) When $q\neq 3$, note that $1,\xi,\xi^{2}$ are roots of $g(x)$. By the BCH bound, $d_H(\mathcal{C}) \geq  4.$ Since $\xi$ is a primitive $n$-th root of unity, we have $(\xi^{2})^{2q-2}=1$, $\xi^{q+1}\in \mathbb{F}_q^*$, and $\xi^{2q-2}=-1$.
        Thus $(x-1)(x+1)(x-\xi^{2})(x-\xi^{2q}) \mid x^{2q-2}-1$ and $(x-\xi)(x-\xi^q) \mid \xi^{q+1}x^{q-3}+1$. Therefore, $g(x) \mid (x^{2q-2}-1)(\xi^{q+1}x^{q-3}+1)$, which implies that $(x^{2q-2}-1)(\xi^{q+1}x^{q-3}+1)\in \mathcal{C}$. Thus $d_H(\mathcal{C})\leq \Wt_H\big((x^{2q-2}-1)(\xi^{q+1}x^{q-3}+1)\big) = 4$. Hence $d_H(\mathcal{C}) = 4$. The desired result follows.
    \end{proof}
    In \cite{Kai2024}, Kai et al. constructed an MDS symbol-pair code with $d_P=7$.
    \begin{lemma}[{\cite[Theorem 3.14]{Kai2024}}]
        \label{lemma9}
         Let $\mathcal{C}^{\prime}$ be the cyclic code with generator polynomial $g'(x) = (x-1)(x-\xi)(x-\xi^q)(x-\xi^2)(x-\xi^{2q})$. Then $\mathcal{C}^{\prime}$ is an $(n=4q-4,d_P=7)_q$ MDS symbol-pair code.
    \end{lemma}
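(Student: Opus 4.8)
The plan is to follow the template of Section~\ref{sec3}'s first construction. First I would determine $d_H(\mathcal{C}')$. Since $1,\xi,\xi^2$ are consecutive roots of $g'(x)$ and $\gcd(q,n)=1$, the BCH bound (Lemma~\ref{lem::BCH_Bound}) gives $d_H(\mathcal{C}')\ge 4$. For the matching upper bound, one verifies, using $\xi^{2q-2}=-1$ and $\xi^{q+1}\in\mathbb{F}_q^{*}$, that $g'(x)\mid (x^{2q-2}-1)(\xi^{q+1}x^{q-3}+1)$: the even-exponent roots $1,\xi^2,\xi^{2q}$ are zeros of the first factor and $\xi,\xi^q$ are zeros of the second. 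The right-hand side belongs to $\mathbb{F}_q[x]$ and has Hamming weight $4$ for $q\ge 7$, so $d_H(\mathcal{C}')=4$ there; the remaining case $q=3$ (with $n=8$) I would handle by direct computation, where $d_H(\mathcal{C}')=5$.

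Since $\deg g'(x)=5$ we have $\dim\mathcal{C}'=n-5$, so a code with these parameters would be MDS only if its Hamming distance were $6$, which strictly exceeds $d_H(\mathcal{C}')$ in every case; hence $\mathcal{C}'$ is not MDS, and Lemma~\ref{lem::Chen2017} gives $d_P(\mathcal{C}')\ge d_H(\mathcal{C}')+2$. Together with the Singleton-type bound~(\ref{singleton_type_bound}), which forces $d_P(\mathcal{C}')\le 7$ since $M=q^{n-5}$, the case $q=3$ is already settled ($d_P(\mathcal{C}')\ge 5+2=7$). So assume $q\ge 7$; then we only get $d_P(\mathcal{C}')\ge 6$, and it remains to exclude codewords of symbol-pair weight exactly $6$.

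Using $w_P(\mathbf{c})=w_H(\mathbf{c})+(\textnormal{number of cyclic runs of }\mathbf{c})$ together with $w_H(\mathbf{c})\ge d_H(\mathcal{C}')=4$, a nonzero codeword of pair weight $6$ must, after a cyclic shift, take one of the forms
\begin{gather*}
(\star,\star,\star,\star,\star,\mathbf{0}_{n-5}),\\
(\star,\star,\star,\mathbf{0}_{r},\star,\mathbf{0}_{n-r-4})\quad(1\le r\le n-5),\\
(\star,\star,\mathbf{0}_{r},\star,\star,\mathbf{0}_{n-r-4})\quad(1\le r\le n-5).
\end{gather*}
The first is impossible, its polynomial having degree $4<5=\deg g'(x)$. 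For each of the other two I would write the corresponding polynomial $c(x)$ with its lowest coefficient normalized to $1$ and three unknowns $a,b,c\in\mathbb{F}_q^{*}$, and impose the three independent root conditions $c(1)=c(\xi)=c(\xi^2)=0$ (the conditions at $\xi^q,\xi^{2q}$ being automatic since $c(x)\in\mathbb{F}_q[x]$). The coefficient matrix of this system over $\mathbb{F}_{q^2}$ is Vandermonde in $1,\xi,\xi^2$, hence invertible, so Cramer's rule expresses $a,b,c$ as explicit rational functions of $\xi$ and of $\ell$ (through $\xi^{\pm\ell}$). The contradiction then comes from the requirement $a,b,c\in\mathbb{F}_q$: writing $q\equiv 1+\tfrac{n}{4}\pmod n$ so that $\xi^q=\zeta\xi$ with $\zeta=\xi^{n/4}$ a primitive $4$th root of unity (and $\xi^{n/2}=-1$), the invariance $a^q=a$ — and likewise for $b$ and $c$ — becomes, for each fixed residue of $\ell$ modulo $4$, a polynomial relation in $\xi$ alone; simplifying it should force $\xi^d=1$ for some small $d$ (in the analogous Lemma~\ref{lemma6} one is led to $\xi^8=1$), contradicting that $\xi$ is a primitive $n$-th root of unity with $n=4q-4\ge 24$.

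The main obstacle is precisely the elimination hidden in that last step: for each of the two shapes there are up to four sub-cases according to $\ell\bmod 4$, and in each one must track exponents of $\xi$ modulo $n=4q-4$ through repeated use of $\xi^q=\zeta\xi$ and $\xi^{n/2}=-1$, and verify that one genuinely arrives at a contradiction with primitivity rather than a vacuous identity. One should also check that the admissible ranges of $\ell$ (equivalently $r$) never force two monomial exponents of $c(x)$ to coincide, and dispatch the small case $q=3$ separately as above.
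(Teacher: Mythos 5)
First, note that the paper does not prove this statement at all: it is imported verbatim as \cite[Theorem~3.14]{Kai2024} and used as a black box, so there is no internal proof to compare against. Your reconstruction follows the correct template (the one the paper itself uses for Theorem~\ref{Thm:theorem1} and for the Hamming-distance lemma of Section~3.2), and the preliminary steps all check out: the BCH lower bound $d_H\ge 4$; the explicit weight-$4$ codeword $(x^{2q-2}-1)(\xi^{q+1}x^{q-3}+1)$, whose divisibility by $g'(x)$ does hold since $\xi^{2q-2}=-1$ and $\xi^{(q-1)^2}=\xi^{-(2q-2)}=-1$ when $4\mid(q+1)$; the non-MDS observation and Lemma~\ref{lem::Chen2017} giving $d_P\ge 6$; the settling of $q=3$ via $d_H=5$; and the enumeration of the three pair-weight-$6$ shapes via $w_P=w_H+(\text{number of cyclic runs})$. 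The observation that the conditions at $\xi^q,\xi^{2q}$ are Frobenius-automatic, so that only three equations in three unknowns remain and the contradiction must come entirely from $a^q=a$, $b^q=b$, $c^q=c$, is also the right structural insight.

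The genuine gap is that the decisive step is announced but never executed. For each of the two nontrivial shapes and each residue of $\ell$ modulo $4$, you must actually solve the Vandermonde system, apply Frobenius using $\xi^q=\xi^{q-1}\cdot\xi$ and $\xi^{2q-2}=-1$, and show the resulting relation forces $\xi^d=1$ for some $d<n$ rather than collapsing to a tautology. You explicitly flag this as ``the main obstacle'' and do not carry it out in even one sub-case; but this elimination is the entire mathematical content of the cited theorem (compare Lemmas~\ref{lemma6} and~\ref{lemma7}, where the analogous eliminations occupy the whole proof and in one branch only close because the extra root $-1$ of $g(x)$ kills a variable by parity --- a resource $g'(x)$ does not provide here, since $-1$ is not among its roots). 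As written, the proposal is a program for a proof, not a proof: without verifying that each sub-case genuinely terminates in a contradiction with the primitivity of $\xi$, the claim $d_P(\mathcal{C}')\ne 6$ remains unestablished. If the intent is simply to use the result, the honest route is the paper's own, namely citing \cite{Kai2024}.
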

   Note that $g'(x) \mid g(x)$, thus $\mathcal{C}$ is a subcode of $\mathcal{C}^{\prime}$. So we have $d_P(\mathcal{C})\geq d_P(\mathcal{C}^{\prime}) \geq 7$. To prove $d_P(\mathcal{C})=8$, it suffices to show that there exists no codeword with pair weight $7$. Therefore, in the following lemmas, we will respectively prove that $\mathcal{C}$ does not contain a codeword of the following forms:
    \begin{equation*}
        \begin{gathered}
\left(\star,\star,\star,\star,\mathbf{0}_r,\star,\mathbf{0}_{n-r-5}\right) \text{ where }1\leq r \leq n-6,\\
\left(\star,\star,\star,\mathbf{0}_r,\star,\star,\mathbf{0}_{n-r-5}\right) \text{ where }1\leq r \leq n-6,\\
\left(\star,\star,\mathbf{0}_r,\star,\mathbf{0}_s,\star,\mathbf{0}_{n-r-s-4}\right) \text{ where } 1\leq r,s\leq n-5 \text{ and }2\leq r+s \leq n-5.
        \end{gathered}
    \end{equation*}

    \begin{lemma}
        \label{lemma10}
         There exists no codeword of the form $\left(\star,\star,\star,\star,\mathbf{0}_r,\star,\mathbf{0}_{n-r-5}\right)$ in $\mathcal{C}$ for any $1\leq r \leq n-6$.
    \end{lemma}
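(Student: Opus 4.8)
The plan is to argue by contradiction, much as in the proofs of Lemmas~\ref{lemma6} and~\ref{lemma7}, except that here the argument should close in a couple of lines using only four of the six roots of $g(x)$ and a single combination of equations, with no linear solve. Suppose $\mathcal{C}$ contains a codeword of the stated form; after a cyclic shift we may take it to be $c(x)=1+c_1x+c_2x^2+c_3x^3+c_\ell x^\ell\in\mathcal{C}$ with $c_1,c_2,c_3,c_\ell\in\mathbb{F}_q^*$ and $\ell=r+4$, where $5\leq\ell\leq n-2$. Since $g(x)\mid c(x)$, in particular $c(1)=c(-1)=c(\xi^2)=c(\xi^{2q})=0$. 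The single structural fact I would use about $\xi$ is that, being a primitive $n$-th root of unity with $n=4q-4$, it satisfies $\xi^{2q-2}=\xi^{n/2}=-1$; hence $\xi^{2q}=-\xi^2$, and therefore $\xi^{4q}=\xi^4$, $\xi^{6q}=-\xi^6$, and $\xi^{2q\ell}=(-1)^\ell\xi^{2\ell}$.

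I would next use $c(1)=c(-1)=0$ to remove coefficients, splitting on the parity of $\ell$. If $\ell$ is even, then $c(1)-c(-1)=2(c_1+c_3)=0$ gives $c_3=-c_1$ (using $q$ odd); plugging this and the displayed identities into $c(\xi^2)=0$ and $c(\xi^{2q})=0$, the two equations agree in every term except the $c_1$-terms, so subtracting them leaves $2c_1(\xi^2-\xi^6)=2c_1\xi^2(1-\xi^4)=0$; since $\xi^2\neq0$ and $\xi^4\neq1$ (as $n=4q-4>4$), this forces $c_1=0$, contradicting $c_1\in\mathbb{F}_q^*$. If $\ell$ is odd, then $c(1)+c(-1)=2(1+c_2)=0$ gives $c_2=-1$ and $c(1)-c(-1)=0$ gives $c_3=-c_1-c_\ell$; plugging these in, $c(\xi^2)=0$ and $c(\xi^{2q})=0$ become $(1-\xi^4)(1+c_1\xi^2)+c_\ell(\xi^{2\ell}-\xi^6)=0$ and $(1-\xi^4)(1-c_1\xi^2)+c_\ell(\xi^6-\xi^{2\ell})=0$, whose sum collapses to $2(1-\xi^4)=0$, again impossible since $\xi^4\neq1$. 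In either case a contradiction is reached, proving the lemma.

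Because the computation is so short, I do not expect a real obstacle; the one non-obvious point is the choice of which roots of $g(x)$ to test on. Testing at the conjugate pair $\{\xi^2,\xi^{2q}\}$ together with $\pm1$ is precisely what makes the reduction $\xi^{2q}=-\xi^2$ bite and lets a single sum or difference of equations annihilate all the unknowns and leave a false identity in $\xi$; testing at $\xi$ and $\xi^q$ instead would only produce a more cumbersome $2\times2$ system. I would expect the remaining two codeword shapes (handled in the next lemmas) not to admit such a shortcut, which is why those arguments are bound to be longer, but for the present shape the proof is essentially immediate once the right equations are singled out.
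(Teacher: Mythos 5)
Your proof is correct and takes essentially the same route as the paper's: the paper likewise evaluates at $1,-1,\xi^{2},-\xi^{2}$ (its fifth equation, at $\xi$, is never used), splits on the parity of $\ell$, and reaches the same contradictions $c_1\xi^{2}(1-\xi^{4})=0$ in the even case and $1+c_2\xi^{4}=0$ with $c_2=-1$ in the odd case. Your observation that only the conjugate pair $\{\xi^{2},\xi^{2q}\}$ together with $\pm 1$ is needed is accurate and matches what the paper actually does.
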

    \begin{proof}
         We prove the lemma by contradiction. W.l.o.g., we suppose $c(x) = 1+c_1 x + c_2 x^2 + c_3 x^3 + c_{\ell}x^{\ell}$, where $c_1,c_2,c_\ell \in \mathbb{F}^*_q$ and $5\leq\ell \leq n-2$. Note that $\xi^{2q} = -\xi^{2}$ and $g(x) \mid c(x)$. This means that $c(1) = c(-1) = c(\xi) = c(\xi^2) = c(-\xi^2) =0$. So we have the following system
        \begin{equation}
            \label{Eq::Equation7}
            \begin{cases}
                1 + c_1 + c_2 + c_3 + c_\ell &= 0, \\
                1 - c_1 + c_2 - c_3 + (-1)^\ell c_\ell&= 0, \\
                1 + c_1 \xi + c_2\xi^2  + c_3 \xi^3+ c_\ell \xi^\ell &= 0, \\
                1 + c_1 \xi^{2} + c_2\xi^{4} + c_3 \xi^{6} + c_\ell \xi^{2\ell} &= 0,  \\
                1 - c_1 \xi^{2} + c_2\xi^{4} - c_3 \xi^{6} + (-1)^\ell c_\ell \xi^{2\ell} &= 0.
            \end{cases}
        \end{equation}
        \begin{itemize}
            \item If $\ell$ is odd, from the lines 1,2,4, and 5 of  system (\ref{Eq::Equation7}), we can derive that
                \begin{equation*}
                    \begin{cases}
                        1 + c_2  &= 0,  \\
                        c_1 +  c_3 + c_\ell&= 0, \\
                        1 + c_2\xi^{4}  &= 0, \\
                        c_1 \xi^{2} + c_3 \xi^{6} +  c_\ell \xi^{2\ell} &= 0. 
                    \end{cases}
                    \Longrightarrow \xi^4=1.
                \end{equation*}
                It contradicts the fact that $\xi$ is a primitive $(4q-4)$-th root of unity.
            \item If $\ell$ is even, from the first two and last two equations of (\ref{Eq::Equation7}), we can obtain that
                \begin{equation*}
                    \begin{cases}
                        1 + c_2 + c_\ell  &= 0,  \\
                        c_1 +  c_3 &= 0, \\
                        1 + c_2\xi^{4} + c_\ell\xi^{2\ell} &= 0, \\
                        c_1 \xi^{2} + c_3 \xi^{6} &= 0. 
                    \end{cases}
                    \Longrightarrow c_1 \xi^2(1-\xi^4)=0.
                \end{equation*}
                Since $c_1\in \mathbb{F}_q^*$ and $\xi^2\neq 0$, we can get $1-\xi^4=0$, which is a contradiction.
        \end{itemize}
        In summary, there does not exist such a codeword in $\mathcal{C}$.
    \end{proof}

    \begin{lemma}
        \label{lemma11}
        There exists no codeword of the form $\left(\star,\star,\star,\mathbf{0}_r,\star,\star,\mathbf{0}_{n-r-5}\right)$ in $\mathcal{C}$ for any $1\leq r\leq n-6$.
    \end{lemma}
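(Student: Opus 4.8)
The plan is to mirror the proof of Lemma~\ref{lemma10}. Suppose, for contradiction, that $\mathcal{C}$ contains a codeword of the stated shape. Up to a cyclic shift and scaling by a nonzero constant it corresponds to a polynomial
$$c(x) = 1 + c_1 x + c_2 x^2 + c_\ell x^\ell + c_{\ell+1} x^{\ell+1},$$
with $c_1, c_2, c_\ell, c_{\ell+1} \in \mathbb{F}_q^*$ and $\ell = r+3$, so that $4 \le \ell \le n-3$. (If $\ell+1 < 6 = \deg g$, i.e. $\ell \le 4$, then $\deg c < \deg g$ already forces $c=0$, a contradiction; so in fact $\ell \ge 5$, although the computation below is uniform.) Since $g(x)\mid c(x)$ we get $c(1)=c(-1)=c(\xi)=c(\xi^q)=c(\xi^2)=c(\xi^{2q})=0$. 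As $c(x)\in\mathbb{F}_q[x]$, vanishing at $\xi^q$ (resp. $\xi^{2q}$) is the Frobenius conjugate of vanishing at $\xi$ (resp. $\xi^2$); using $\xi^{2q}=-\xi^2$ as in the previous lemma, I would record the linear system in $c_1,c_2,c_\ell,c_{\ell+1}$ over $\mathbb{F}_{q^2}$ coming from the evaluations at $1,-1,\xi,\xi^2,-\xi^2$, an analogue of~(\ref{Eq::Equation7}).

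I would then split on the parity of $\ell$ and exploit the $\pm$ symmetry in each case: adding and subtracting the equations $c(1)=0,\ c(-1)=0$ separates the even- and odd-index coefficients, and likewise for $c(\xi^2)=0,\ c(-\xi^2)=0$. When $\ell$ is odd this gives $1+c_2+c_{\ell+1}=0$, $c_1=-c_\ell$, together with $1+c_2\xi^4+c_{\ell+1}\xi^{2\ell+2}=0$ and $c_1+c_\ell\xi^{2\ell-2}=0$; the last two force $\xi^{2\ell-2}=1$, hence $2(q-1)\mid(\ell-1)$, and the only admissible value is $\ell=2q-1$. Reducing exponents modulo $n$ then gives $\xi^{2\ell+2}=\xi^4$, and the surviving relations collapse to $1-\xi^4=0$, contradicting the primitivity of $\xi$ (since $n=4q-4\ge 8>4$). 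When $\ell$ is even, the same manipulation yields $1+c_2+c_\ell=0$, $c_1=-c_{\ell+1}$, $1+c_2\xi^4+c_\ell\xi^{2\ell}=0$, $c_1+c_{\ell+1}\xi^{2\ell}=0$, which forces $\xi^{2\ell}=1$, hence $\ell=2q-2$; substituting back leaves $c_2(\xi^4-1)=0$, again impossible. In both branches a contradiction is reached, so no such codeword exists.

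I expect the only delicate point is the bookkeeping: checking that $\xi^{2\ell}=1$ (resp. $\xi^{2(\ell-1)}=1$) pins $\ell$ down to a single value inside $[4,n-3]$, and reducing exponents such as $2\ell+2$ modulo $n$ when $\ell=2q-1$. No individual step is hard — the whole argument is a short piece of linear algebra over $\mathbb{F}_{q^2}$ driven by the relation $\xi^{2q}=-\xi^2$ — and it is worth remarking that, unlike in Lemma~\ref{lemma10}, the root $\xi$ itself is not needed: the evaluations at $1,-1,\xi^2,\xi^{2q}$ already suffice.
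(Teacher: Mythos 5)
Your proof is correct and follows essentially the same route as the paper: evaluate at $1,-1,\xi^2,-\xi^2$, separate the even- and odd-index coefficients according to the parity of $\ell$, deduce $\xi^{2\ell}=1$ (resp.\ $\xi^{2\ell-2}=1$) and then $\xi^4=1$, a contradiction; the paper likewise never actually uses the evaluation at $\xi$. The only cosmetic slip is that in the odd case it is the pair $c_1=-c_\ell$ and $c_1+c_\ell\xi^{2\ell-2}=0$ (not ``the last two'' displayed relations) that forces $\xi^{2\ell-2}=1$.
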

    \begin{proof}
        We prove the lemma by contradiction. W.l.o.g., we can assume that $c(x) = 1+c_1x + c_2 x^2+ c_\ell x^\ell + c_{\ell+1} x^{\ell+1}$, where $c_1,c_\ell,c_{\ell+1} \in \mathbb{F}^*_q$ and $4\leq\ell \leq n-3$. Since $1,-1,\xi,\xi^{2},-\xi^{2}$ are five roots of $g(x)$, we immediately have $c(-1) = c(1) = c(\xi) = c(\xi^{2}) = c(-\xi^{2}) =  0.$ Hence, we have the following system:
        \begin{equation*}
            \begin{cases}
                1 + c_1 + c_2 + c_\ell + c_{\ell+1} &= 0,\\
                1 - c_1 + c_2 + (-1)^{\ell}c_\ell + (-1)^{\ell+1} c_{\ell+1}&= 0,\\
                1 + c_1 \xi + c_2 \xi^2 + c_\ell \xi^\ell + c_{\ell+1}\xi^{\ell+1}&= 0,\\
                1 + c_1 \xi^{2} + c_2 \xi^{4} + c_\ell \xi^{2\ell} + c_{\ell+1}\xi^{2(\ell+1)} &= 0,\\
                1 - c_1 \xi^{2} + c_2 \xi^{4} + (-1)^{\ell} c_\ell \xi^{2\ell} + (-1)^{\ell+1}c_{\ell+1}\xi^{2(\ell+1)} &= 0.
            \end{cases}
        \end{equation*}
        \begin{itemize}
            \item If $\ell$ is even, we can obtain that 
                \begin{equation}
                    \label{Eq:Equation8}
                    \begin{cases}
                        1 + c_2 + c_\ell  &= 0,  \\
                        c_1 + c_{\ell+1} &= 0, \\
                        1 + c_1 \xi + c_2 \xi^2 + c_\ell \xi^\ell + c_{\ell+1}\xi^{\ell+1}&= 0,\\
                        1  + c_2\xi^4 + c_\ell \xi^{2\ell}  &= 0,  \\
                        c_1\xi^2 + c_{\ell+1}\xi^{2\ell+2} &= 0.
                    \end{cases}
                \end{equation}
                By the second and the last equations of (\ref{Eq:Equation8}), we derive that $c_1\xi^2(1-\xi^{2\ell}) = 0.$
                Since $c_1\in \mathbb{F}_q^*$ and $\xi^2\neq 0$, we can get $\xi^{2\ell}=1$. Then from the first and fourth equations of (\ref{Eq:Equation8}), we obtain 
                \begin{equation*}
                    \begin{cases}
                        1 + c_2 + c_\ell  &= 0,\\
                        1  + c_2\xi^4 + c_\ell &= 0.
                    \end{cases}
                    \Longrightarrow c_2(\xi^4-1)=0.
                \end{equation*}
                Since $c_2\in\mathbb{F}_q^*$, we have $\xi^4=1$, which contradicts the fact that $\xi$ is a primitive $(4q-4)$-th root of unity.
            \item If $\ell$ is odd, we can get 
                \begin{equation}
                    \label{Eq:Equation9}
                    \begin{cases}
                        1 + c_2 + c_{\ell+1}  &= 0,  \\
                        c_1 + c_{\ell} &= 0, \\
                        1 + c_1 \xi + c_2 \xi^2 + c_\ell \xi^\ell + c_{\ell+1}\xi^{\ell+1}&= 0,\\
                        1  + c_2\xi^4 + c_{\ell+1} \xi^{2\ell+2}  &= 0,  \\
                        c_1\xi^2 + c_{\ell}\xi^{2\ell} &= 0.
                    \end{cases}
                    \Longrightarrow 
                    c_1\xi^2(1-\xi^{2\ell-2}) = 0.
                \end{equation}
                By the second and last equations of (\ref{Eq:Equation9}), we derive $c_1\xi^2(1-\xi^{2\ell-2}) = 0.$ Since $c_1\in \mathbb{F}_q^*$ and $\xi^2\neq 0$, we can get $\xi^{2\ell-2}=1$. Similarly, from the first and the fourth equations of (\ref{Eq:Equation9}), we obtain 
                \begin{equation*}
                    \begin{cases}
                        1 + c_2 + c_{\ell+1}  &= 0,\\
                        1  + c_2\xi^4 + c_{\ell+1}\xi^4 &= 0.
                    \end{cases}
                    \Longrightarrow \xi^4-1=0.
                \end{equation*}
                It contradicts the fact that $\xi^n =  1$.
        \end{itemize}
        In summary, there does not exist such a codeword in $\mathcal{C}$.
    \end{proof}

    \begin{lemma}
        \label{lemma12}
        There exists no codeword of the form $\left(\star,\star,\mathbf{0}_r,\star,\mathbf{0}_s,\star,\mathbf{0}_{n-r-s-4}\right)$ in $\mathcal{C}$ for $1\leq r,s\leq n-5$ and $2\leq r+s \leq n-5$. 
    \end{lemma}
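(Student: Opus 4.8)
The plan is to argue by contradiction. After a cyclic shift and scaling, such a codeword becomes a polynomial $c(x) = 1 + c_1 x + c_j x^{j} + c_k x^{k} \in \mathcal{C}$ with $c_1, c_j, c_k \in \mathbb{F}_q^{*}$, where $j = r+2$ and $k = r+s+3$; the constraints $r,s\ge 1$ and $r+s\le n-5$ then become $3\le j\le n-4$, $k\ge j+2$, and $k\le n-2$. Since $g(x)\mid c(x)$, the polynomial $c$ vanishes at the six roots $1,-1,\xi,\xi^{q},\xi^{2},\xi^{2q}$ of $g$, but I will only need the four evaluations at $1,-1,\xi^{2},\xi^{2q}$. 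The fact that keeps the last two manageable is that $\xi^{q-1}$ has order $\tfrac{4q-4}{q-1}=4$, whence $\xi^{2q-2}=-1$ and so $\xi^{2q}=-\xi^{2}$; consequently $c(\xi^{2q})=0$ reads $1 - c_1\xi^{2} + (-1)^{j}c_j\xi^{2j} + (-1)^{k}c_k\xi^{2k}=0$.

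I would then split on the parities of $j$ and $k$, and in each case extract information by \emph{adding and subtracting} paired equations rather than solving for $c_1$; solving explicitly would force a nuisance discussion of when denominators such as $\xi^{2}-\xi^{2k}$ vanish, which the add/subtract trick avoids. Summing and differencing $c(1)=0$ and $c(-1)=0$ gives an immediate contradiction in two cases ($c_1=0$ when $j,k$ are both even; $2=0$ when $j,k$ are both odd) and, in the mixed-parity cases, expresses two coefficients in terms of the third — for instance $c_j=-1$ and $c_k=-c_1$ when $j$ is even and $k$ odd. Summing and differencing $c(\xi^{2})=0$ and $c(\xi^{2q})=0$ then produces, in the case $j$ even and $k$ odd, the two relations $\xi^{2j}=1$ and $\xi^{2k}=\xi^{2}$ (the second using $c_1\ne 0$), and in the case $j$ odd and $k$ even the symmetric pair $\xi^{2k}=1$ and $\xi^{2j}=\xi^{2}$.

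The endgame is a short order computation. Since $\xi$ has order $n=4q-4$, the relation $\xi^{2j}=1$ forces $(2q-2)\mid j$, and with $3\le j\le n-4$ this leaves only $j=2q-2$; but then $k\ge j+2=2q$, while $\xi^{2k}=\xi^{2}$ forces $(2q-2)\mid (k-1)$, which is impossible because $2q-1\le k-1\le n-3$ and the open interval $(2q-2,\,4q-4)$ contains no multiple of $2q-2$. The case $j$ odd, $k$ even runs symmetrically: $\xi^{2k}=1$ forces $k=2q-2$, hence $j\le k-2=2q-4$, and then $\xi^{2j}=\xi^{2}$ forces $(2q-2)\mid (j-1)$ with $2\le j-1\le 2q-5$, again impossible. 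All four parity cases end in contradiction, completing the proof; the border value $q=3$ requires no separate treatment (and is vacuous there in any case, since $d_H(\mathcal{C})=6>4$).

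The step I expect to be most delicate is clerical rather than conceptual: keeping the sign factors $(-1)^{j},(-1)^{k}$ introduced by $\xi^{2q}=-\xi^{2}$ straight across the four parity cases, and tracking the precise ranges of $j$ and $k$ so that the forced values $j=2q-2$ or $k=2q-2$ collide with the gap constraint $k\ge j+2$. No appeal to the BCH or Hartmann–Tzeng bounds is needed, since four evaluation conditions already over-determine the three unknowns $c_1,c_j,c_k$.
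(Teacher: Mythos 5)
Your proposal is correct and follows essentially the same route as the paper's proof: evaluate at the four roots $1,-1,\xi^{2},\xi^{2q}=-\xi^{2}$, split on the parities of the two isolated exponents, add and subtract the paired equations to force $\xi^{2j}=1$ and $\xi^{2k-2}=1$ (or the symmetric pair), and derive the contradiction from the fact that these would make the two exponents adjacent, violating the gap constraint. The only cosmetic difference is that the paper also lists the unused evaluation at $\xi$ and phrases the final contradiction as $\ell_1-\ell_2=\pm1$ rather than via your interval argument.
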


    \begin{proof}
        We prove the lemma by contradiction.  W.l.o.g., we can assume that $c(x) = 1+c_1x + c_{\ell_1} x^{\ell_1} + c_{\ell_2} x^{\ell_2}$, where $c_1,c_{\ell_1},c_{\ell_2} \in \mathbb{F}^*_q$ , $3\leq \ell_1, \ell_2 \leq n-3$ and $\ell_1 -\ell_2 \neq \pm 1$. Since $1,-1,\xi,\xi^{2}, \xi^{2q}=-\xi^2$ are five roots of $g(x)$, we immediately have $c(-1) = c(1) = c(\xi) = c(\xi^{2}) = c(-\xi^{2}) = 0$. Hence, we have the following system:
        \begin{equation}
            \label{Eq::Equation10}
             \begin{cases}
                1 + c_1 + c_{\ell_1} + c_{\ell_2} &= 0,  \\
                1 - c_1 + (-1)^{\ell_1}c_{\ell_1} + (-1)^{\ell_2} c_{\ell_2}&= 0, \\
                1 + c_1 \xi  + c_{\ell_1} \xi^{\ell_1} + c_{\ell_2}\xi^{\ell_2}&= 0,\\
                1 + c_1 \xi^{2} + c_{\ell_1} \xi^{2\ell_1} + c_{\ell_2}\xi^{2\ell_2} &= 0,\\ 
                1 - c_1 \xi^{2} + (-1)^{\ell_1} c_{\ell_1} \xi^{2\ell_1} + (-1)^{\ell_2}c_{\ell_2}\xi^{2\ell_2} &= 0.
             \end{cases}
        \end{equation}
        \begin{itemize}
            \item If $\ell_1$ and $\ell_2$ are both odd or even. From the first and second equations of (\ref{Eq::Equation10}), we can get $2=0$ or $c_1 = 0$, which is a contradiction.
            \item If $\ell_1$ is odd and $\ell_2$ is even, we have
            \begin{equation*}
                \begin{cases}
                    1 + c_{\ell_2} &= 0, \\
                    c_1 + c_{\ell_1} &=  0,\\
                    1 + c_{\ell_2}\xi^{2\ell_2} &= 0, \\
                    c_1\xi^{2} + c_{\ell_1}\xi^{2\ell_1} &= 0. \\
                \end{cases}
                \Longrightarrow 
                \begin{cases}
                    \xi^{2\ell_2} = 1,\\
                    c_1\xi^2(1-\xi^{2\ell_1-2}) = 0.\\
                \end{cases}
            \end{equation*}
            Since $c_1\in \mathbb{F}_q^*$ and $\xi^2\neq 0$, we can get $\xi^{2\ell_1-2}=1$. That means that  $n \mid 2\ell_2$ and $n \mid 2\ell_1-2$. Note that $3\leq \ell_1, \ell_2 \leq n-3$. We obtain $\ell_1-1 =  \ell_2 = \frac{n}{2}$. It contradicts the assumption that $\ell_1-\ell_2 \neq \pm 1$.
            \item If $\ell_1$ is even and $\ell_2$ is odd, we can similarly deduce a contradiction.
        \end{itemize}
        In summary, there does not exist such a codeword in $\mathcal{C}$. This completes the proof.
    \end{proof}
    Combining Lemmas \ref{lemma10}, \ref{lemma11} and \ref{lemma12}, we can construct a class of MDS symbol-pair codes with $d_P = 8$.

\begin{theorem}
    \label{Thm:theorem2}
    Suppose $q$ is a prime power with $q\equiv 3\pmod 4$ and $n = 4q - 4$. Let $\mathcal{C}$ be the cyclic code in $\mathbb{F}_q\left[x\right]/\langle x^n-1\rangle$ with generator polynomial $g(x) = (x-1)(x+1)(x-\xi)(x-\xi^q)(x-\xi^{2})(x-\xi^{2q})$. Then $\mathcal{C}$ is an $(n=4q-4,d_P=8)_q$ MDS symbol-pair code.
\end{theorem}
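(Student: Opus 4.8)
Our plan is to mirror the proof of Theorem~\ref{Thm:theorem1}: we first pin $d_P(\mathcal{C})$ down to the interval $\{7,8\}$, and then rule out pair weight exactly $7$ by a finite case analysis on the support pattern of a hypothetical codeword, invoking Lemmas~\ref{lemma10}, \ref{lemma11} and~\ref{lemma12}.

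First we treat $q=3$ separately: there $\mathcal{C}$ is the $[8,2,6]$ cyclic code, so it is not MDS (as $6<8-2+1$) and Lemma~\ref{lem::Chen2017} gives $d_P(\mathcal{C})\ge 8$; since $\dim(\mathcal{C})=n-\deg g=2$, the Singleton-type bound~(\ref{singleton_type_bound}) gives $d_P(\mathcal{C})\le 8$, hence $d_P(\mathcal{C})=8$. For $q\ne 3$ we have $\dim(\mathcal{C})=n-6$, so~(\ref{singleton_type_bound}) already yields $d_P(\mathcal{C})\le 8$, and proving equality is exactly proving that $\mathcal{C}$ is an MDS symbol-pair code. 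Since $g'(x)\mid g(x)$, where $g'$ is the generator polynomial of the code $\mathcal{C}'$ of Lemma~\ref{lemma9}, $\mathcal{C}$ is a subcode of $\mathcal{C}'$, whence $d_P(\mathcal{C})\ge d_P(\mathcal{C}')=7$. It remains to exclude a codeword of pair weight $7$.

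Suppose for contradiction that $\mathbf{c}\in\mathcal{C}$ has $w_P(\mathbf{c})=7$. Write its cyclic support as $t$ maximal blocks of consecutive nonzero coordinates, of lengths $b_1,\dots,b_t$, separated by $t$ nonempty all-zero gaps; since $w_H(\mathbf{c})\le 6<n$ one has $w_P(\mathbf{c})=w_H(\mathbf{c})+t$. Because $d_H(\mathcal{C})=4$ (for $q\ne 3$), we get $w_H(\mathbf{c})\in\{4,5,6\}$ — if $w_H(\mathbf{c})=7$ then $w_P(\mathbf{c})\ge 8$ — and correspondingly $t\in\{3,2,1\}$. As $\mathcal{C}$ is cyclic and pair weight is shift-invariant, we may normalize $\mathbf{c}$ by a cyclic shift; the block-length profiles then reduce to: $w_H=6$ with one block, i.e. $c(x)$ of degree $\le 5$, impossible because $\deg g=6$; $w_H=5$ with two blocks of lengths $(4,1)$ or $(3,2)$ (the profiles $(1,4)$ and $(2,3)$ being cyclic shifts of these), which are the forms excluded in Lemmas~\ref{lemma10} and~\ref{lemma11}; and $w_H=4$ with three blocks of lengths $(2,1,1)$ (the profiles $(1,2,1)$ and $(1,1,2)$ being cyclic shifts of it), which is the form excluded in Lemma~\ref{lemma12}. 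This covers all cases, so no such $\mathbf{c}$ exists and $d_P(\mathcal{C})=8$.

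The only place that needs care is the bookkeeping that collapses every cyclic-shift class of pair-weight-$7$ patterns onto the three normalized forms fed into Lemmas~\ref{lemma10}--\ref{lemma12} and matches the admissible gap sizes to the index ranges for $r$ and $s$ stated there, together with the (easy) verification that these three classes genuinely exhaust all pair-weight-$7$ patterns compatible with $d_H(\mathcal{C})=4$. The substantive analytic work — reducing each form's defining linear system over $\mathbb{F}_{q^2}$ to a constraint that forces $\xi$ to be a root of unity of order strictly less than $n$, contradicting primitivity — has already been carried out in those lemmas, so the theorem itself is pure assembly.
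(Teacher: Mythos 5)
Your proposal is correct and follows essentially the same route as the paper: the $q=3$ case via Lemma~\ref{lem::Chen2017}, the lower bound $d_P(\mathcal{C})\ge 7$ from the subcode relation with $\mathcal{C}'$ of Lemma~\ref{lemma9}, and the exclusion of pair weight $7$ by reducing, up to cyclic shift, to the single-block form (killed by $\deg g=6$) and the three forms of Lemmas~\ref{lemma10}--\ref{lemma12}. Your only addition is making explicit the bookkeeping identity $w_P=w_H+t$ (number of maximal blocks) together with $w_H\ge d_H=4$, which the paper leaves implicit when listing the four normalized support patterns.
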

\begin{proof}
   When $q=3$, then $\mathcal{C}$ is a $[8,2,6]$ cyclic code. Since $\mathcal{C}$ is not an MDS code, we can obtain $d_P(\mathcal{C})\geq 8$ by Lemma \ref{lem::Chen2017}. On the other hand, the Singleton-type bound (\ref{singleton_type_bound}) implies that $d_P\leq 8$. Thus $d_P = 8$ and $\mathcal{C}$ is an MDS symbol-pair code.
    
 When $q > 3$, then $\mathcal{C}$ is an $[n,n-6,4]$ cyclic code. Since the code $\mathcal{C}$ is a subcode of $\mathcal{C}'$ given in Lemma \ref{lemma9}, we can obtain $d_P(\mathcal{C})\geq 7$. By the Singleton-type bound (\ref{singleton_type_bound}), it is sufficient to prove that $d_P(\mathcal{C}) \neq 7$, equivalently,  there exists no codeword with minimum pair weight $7$. By contradiction, suppose that there is a codeword $\mathbf{c}\in\mathcal{C}$ with pair weight 7, then its certain cyclic shift must be one of the forms 
    \begin{gather*}
(\star,\star,\star,\star,\star,\star,\mathbf{0}_{n-6}),  \\
(\star,\star,\star,\star,\mathbf{0}_{r},\star,\mathbf{0}_{n-r-5}) \textnormal{ with }1\leq r\leq n-6, \\
        (\star,\star,\star,\mathbf{0}_{r},\star,\star,\mathbf{0}_{n-r-5}) \textnormal{ with }1\leq r\leq n-6,\\
        (\star,\star,\mathbf{0}_{r},\star,\mathbf{0}_{s},\star,\mathbf{0}_{n-r-s-4}) \textnormal{ with }1\leq r,s \leq n-5, 2\leq r+s \leq n-5. 
    \end{gather*}
    If $\mathbf{c}$ is of the first form, then the corresponding polynomial $c(x)$ has degree 5, while the generator polynomial $g(x)$ has degree 6, which leads to a contradiction. By Lemmas \ref{lemma10}, \ref{lemma11}, and \ref{lemma12}, no codeword in $\mathcal{C}$ can have the last three forms. Hence, we conclude that $\mathcal{C}$ is an $(n=4q-4,d_P=8)_q$ MDS symbol-pair code.
\end{proof}
\begin{remark}
    There are some constructions of MDS symbol-pair codes with $d_P = 8$. When $q = p$ is a prime, Ma et al. (\cite{Ma2022}) proposed a family of $(n = 4p, d_P = 8)_q$-MDS symbol-pair codes with $5 \mid (q-1)$, which have the longest code length. When $q=2^t$, Chee et al. \cite{Chee2013} constructed another family of $(n = 2q+4, d_P = 8)_q$-MDS symbol-pair codes. Thus, when $q$ is not a prime, our MDS symbol-pair codes achieve the longest code length with minimum pair distance $8$ to date.
\end{remark}

\subsection{MDS symbol-pair codes with $d_P = 9$ and $n=2q+2$}
  In \cite{Kai2024}, Kai et al. present two new families of MDS symbol-pair codes with $d_P=7$ by using the decomposition of cyclic codes. Inspired by their method, in this subsection, we will consider the construction of MDS symbol-pair codes with $d_P=9$ and code length $n=2q+2$, where $q$ is an odd prime power.
  
  Note that $n = 2q+2 \mid (q^2-1)$, so we let $\xi\in \mathbb{F}_{q^2}$ be a primitive $n$-th root of unity. Then $C_{-1}^{(q,n)} = \{-1,-q\}$, $C_0^{(q,n)} = \{0\}$, $C_1^{(q,n)} = \{1,q\}$, and $C_2^{(q,n)} =\{2,2q\}$. Define $$g(x)=m_{\xi^{-1}}\left(x\right)m_{\xi^{0}}\left(x\right)m_{\xi^{1}}\left(x\right)m_{\xi^{2}}\left(x\right).$$
    Then $g(x) \in \mathbb{F}_q[x]$ and $g(x) \mid x^n-1$. Let $\mathcal{C}=\langle g(x) \rangle$ be the cyclic code over $\mathbb{F}_q$ of length $n$ with generator polynomial $g(x)$.

    The goal of this subsection is to prove that $\mathcal{C}$ is an MDS symbol-pair code with $d_P=9$. Firstly, we  determine the minimum Hamming distance of $\mathcal{C}$.

    \begin{lemma}
        \label{constr3::hammingdistance}
        Denote $d_H(\mathcal{C})$ as the minimum Hamming distance of $\mathcal{C}$, then 
        \begin{itemize}
            \item [(1)] $d_H\left(\mathcal{C}\right)= 8$ when $q=3$;
            \item [(2)] $d_H\left(\mathcal{C}\right)= 6$ when $q\geq 5$.
        \end{itemize}
    \end{lemma}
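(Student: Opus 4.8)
The plan is to pin down $d_H(\mathcal{C})$ by matching a BCH-type lower bound with an explicit low-weight codeword, handling $q=3$ and $q\ge 5$ separately. First I would read off the defining set from the stated cyclotomic cosets: $T\supseteq C_{-1}^{(q,n)}\cup C_0^{(q,n)}\cup C_1^{(q,n)}\cup C_2^{(q,n)}=\{0,1,2,q,\,q+2,\,2q,\,2q+1\}$, using $-1\equiv 2q+1$ and $-q\equiv q+2$ modulo $n=2q+2$. When $q\ge 5$ the residues $2q,\,2q+1,\,0,\,1,\,2$ are five consecutive elements modulo $n$ lying in $T$, so Lemma \ref{lem::BCH_Bound} (with $r=1$) gives $d_H(\mathcal{C})\ge 6$. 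When $q=3$ we have $n=8$ and $T=\mathbb{Z}_8\setminus\{4\}$, which contains the seven consecutive residues $5,6,7,0,1,2,3$, so the BCH bound yields $d_H(\mathcal{C})\ge 8$; since $\dim\mathcal{C}=n-\deg g=1$, the Singleton bound forces equality, and indeed $\mathcal{C}$ is spanned by $g(x)=(x^8-1)/(x+1)$ (the only missing root of $g$ among the $8$th roots of unity is $\xi^4=-1$), whose Hamming weight is $8$. This settles part (1).

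For part (2) I would exhibit a codeword of Hamming weight exactly $6$. Put $\eta=\xi^2$, a primitive $(q+1)$-th root of unity, and $\alpha=\xi+\xi^q$. Using $\xi^{q+1}=-1$ one gets $m_{\xi}(x)=x^2-\alpha x-1$ and $m_{\xi^{-1}}(x)=x^2+\alpha x-1$, hence $m_{\xi^{-1}}(x)m_{\xi}(x)=(x^2-1)^2-\alpha^2x^2=x^4-cx^2+1$ with $c=2+\alpha^2=\eta+\eta^{-1}\in\mathbb{F}_q$. Since the $(q+1)$-th roots of unity are precisely the even powers of $\xi$, the polynomial $x^{q+1}-1$ has $1=\xi^0$, $\xi^2$ and $\xi^{2q}$ among its roots, while $x^4-cx^2+1$ has $\xi,\xi^q,\xi^{-1},\xi^{-q}$ among its roots; as $g$ is squarefree (because $\gcd(n,q)=1$), it follows that $g(x)$ divides $f(x):=(x^{q+1}-1)(x^4-cx^2+1)=x^{q+5}-cx^{q+3}+x^{q+1}-x^4+cx^2-1\in\mathbb{F}_q[x]$, so $f(x)\in\mathcal{C}$. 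For $q\ge 5$ we have $\deg f=q+5<2q+2=n$, the six exponents $q+5,q+3,q+1,4,2,0$ are pairwise distinct, and $c=\eta+\eta^{-1}\ne 0$ (otherwise $\eta^2=-1$, forcing $q+1\mid 4$, impossible for $q\ge 5$). Hence $f$ represents a codeword of Hamming weight exactly $6$, giving $d_H(\mathcal{C})\le 6$; combined with the BCH bound, $d_H(\mathcal{C})=6$ for $q\ge 5$.

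The one step that needs genuine care is the very last one: verifying that $f$ has weight precisely $6$, not less. This is exactly where the hypothesis $q\ge 5$ is used twice --- it guarantees $\deg f<n$ (so no reduction modulo $x^n-1$ merges terms) and it forces $c\ne 0$ (so the $x^{q+3}$ and $x^2$ terms survive). For $q=3$ both safeguards fail simultaneously, $c=\eta+\eta^{-1}=0$ and $\deg f=n$, so the construction degenerates --- consistent with $d_H$ jumping to $8$. Everything else reduces to a mechanical unwinding of the cyclotomic-coset data and an application of Lemma \ref{lem::BCH_Bound}.
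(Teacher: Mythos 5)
Your proposal is correct and follows essentially the same route as the paper: the BCH bound applied to the consecutive run $\{-2,-1,0,1,2\}$ in the defining set, together with the explicit weight-$6$ codeword $(x^{q+1}-1)\bigl(x^4-(\xi^2+\xi^{-2})x^2+1\bigr)$, and the observation that for $q=3$ the code is one-dimensional with generator of weight $8$. The only difference is that you spell out more carefully than the paper why that product has weight exactly $6$ (checking $\xi^2+\xi^{-2}\neq 0$, the distinctness of the six exponents, and $\deg f<n$), which is a welcome bit of extra rigor rather than a deviation.
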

    \begin{proof}
        Note that the defining set of $\mathcal{C}$ is $T=\{-q,-1,0,1,2,q,2q \pmod n\}$. 
        \begin{itemize}
            \item [(1)] When $q=3$, then $T=\{0,1,2,3,5,6,7\}$. It is clear that $\mathcal{C}$ is a $[8,1,8]$ repetition code.
            \item [(2)] When $q\geq 5$, note that $2q\equiv -2 \pmod n$, the defining set of $\mathcal{C}$ is $T=\{-q,-1,0,1,2,q,-2\pmod n\}$.
            It follows that $d_H(\mathcal{C})\geq 6$ by the BCH bound. Since $\xi$ is a primitive $n$-th root of unity, we have $\xi^{2q+2}=1$ and $\xi^{q+1} =-1$. Thus, $m_{\xi^{-1}}\left(x\right)m_{\xi^{1}}\left(x\right) = (x-\xi^{-q})(x-\xi^{-1})(x-\xi)(x-\xi^{q}) = (x+\xi)(x-\xi^{-1})(x-\xi)(x+\xi^{-1})=x^4-\left(\xi^2+\xi^{-2}\right)x^2+1$ and $m_{\xi^{0}}\left(x\right)m_{\xi^{2}}\left(x\right)\mid x^{q+1}-1$. Therefore, $g\left(x\right)\mid \big(x^4-\left(\xi^2+\xi^{-2}\right)x^2+1\big)\cdot \left(x^{q+1}-1\right)$, which implies that $\left[x^4-\left(\xi^2+\xi^{-2}\right)x^2+1\right]\cdot\left(x^{q+1}-1\right)\in \mathcal{C}$. Thus $d_H(\mathcal{C})\leq \Wt_H\big([x^4-(\xi^2+\xi^{-2})x^2+1]\cdot (x^{q+1}-1)\big)=6$. Hence $d_H(\mathcal{C})=6$.
        \end{itemize}
    \end{proof}
    Recall the decomposition of a cyclic code into constacyclic codes in Section \ref{sec:decomposition_of_cyclic_codes}. Notice that $m_{\xi^{0}}\left(x\right)m_{\xi^{2}}\left(x\right)\mid (x^{q+1}-1)$ and $m_{\xi{-1}}(x)m_{\xi^{1}}(x)\mid (x^{q+1}+1)$. We have the following decomposition of $\mathcal{C}$, $$\mathcal{C}=\mathcal{C}_1\curlyvee \mathcal{C}_2=\{(\mathbf{u}+\mathbf{v},\mathbf{u}-\mathbf{v}): \mathbf{u}\in \mathcal{C}_1, \mathbf{v} \in \mathcal{C}_{2} \},$$
    where $\mathcal{C}_1 = \langle m_{\xi^{0}}(x)m_{\xi^{2}}(x)\rangle \subseteq \mathbb{F}_q[x]/\langle x^{\frac{n}{2}}-1\rangle$ is a cyclic code and $\mathcal{C}_2 = \langle m_{\xi^{-1}}\left(x\right)m_{\xi^{1}}\left(x\right) \rangle \subseteq \mathbb{F}_q[x]/\langle x^{\frac{n}{2}}+1\rangle$ is a negacyclic code. Applying the orthogonal relation between the codewords of the negacyclic code and its dual code, we can exclude the existence of certain words in $\mathcal{C}$ and show that  $\mathcal{C}$ is a $(2q+2,9)_q$ MDS symbol-pair code. The following lemma establishes the generator polynomial of the dual code of the negacyclic code $\mathcal{C}_2$.
    \begin{lemma}
        \label{lemma14}
        Let $q > 3$ and $\mathcal{C}_2 = \langle m_{\xi^{-1}}\left(x\right)m_{\xi^{1}}\left(x\right) \rangle $ be the negacyclic code in $\mathbb{F}_q[x]/\langle x^{\frac{n}{2}} + 1\rangle $. Then, $\mathcal{C}_2^\bot$ has the generator polynomial $b(x) = b_0 + b_1x + b_2x^2 + \cdots+ b_{\frac{n}{2}-4}x^{\frac{n}{2}-4}$, where $b_{k} = \frac{\xi^{k+4}-\xi^{-k}}{\xi^4-1}$ for even $k$ and $b_{k} =0$ for odd $k$.

    \end{lemma}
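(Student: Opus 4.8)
The plan is to recognize $b(x)$ as the explicit quotient $(x^{q+1}+1)/g_2(x)$, where $g_2(x)=m_{\xi^{-1}}(x)m_{\xi^{1}}(x)$ is the generator polynomial of $\mathcal{C}_2$, and then to confirm the stated coefficient formula by a single polynomial multiplication. First I would pin down $g_2(x)$. Since $\xi$ is a primitive $(2q+2)$-th root of unity we have $\xi^{q+1}=-1$, hence $\xi^{-q}=-\xi$ and $\xi^{q}=-\xi^{-1}$, so the roots $\xi^{-q},\xi^{-1},\xi,\xi^{q}$ of $g_2(x)$ are precisely $-\xi,\xi^{-1},\xi,-\xi^{-1}$ and
$$g_2(x)=(x^2-\xi^2)(x^2-\xi^{-2})=x^4-(\xi^2+\xi^{-2})x^2+1 ,$$
which is exactly the factor already isolated in the proof of Lemma \ref{constr3::hammingdistance}; in particular $g_2(x)$ is self-reciprocal and $\xi^4\neq 1$ since $q\ge 5$. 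Because $\mathcal{C}_2$ is a $(-1)$-constacyclic code of length $\tfrac{n}{2}=q+1$ with generator polynomial $g_2(x)$, the description of duals of constacyclic codes recalled in Section \ref{sec:constacyclic_codes} shows that $\mathcal{C}_2^\bot$ is $(-1)$-constacyclic with generator polynomial $(x^{q+1}+1)/g_2(x)$ (using that $g_2$ is its own reciprocal), a monic polynomial of degree $(q+1)-4=q-3$. Hence it suffices to prove the polynomial identity $g_2(x)\,b(x)=x^{q+1}+1$ for $b(x)=\sum_{k=0}^{q-3}b_k x^k$ with $b_k$ as in the statement.

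To verify the identity I would extend the formula by setting $\tilde b_k=\dfrac{\xi^{k+4}-\xi^{-k}}{\xi^4-1}$ for \emph{every} even integer $k$ (and $\tilde b_k=0$ for odd $k$), so that $b_k=\tilde b_k$ for $0\le k\le q-3$ while $b_k=0$ is used outside that range. The coefficient of $x^j$ in $g_2(x)b(x)$ equals $b_{j-4}-(\xi^2+\xi^{-2})b_{j-2}+b_j$, which vanishes for odd $j$ since only even indices occur. For even $j$, clearing the common denominator $\xi^4-1$ and expanding $(\xi^2+\xi^{-2})(\xi^{j+2}-\xi^{2-j})$ yields the telescoping identity
$$\tilde b_{j-4}-(\xi^2+\xi^{-2})\tilde b_{j-2}+\tilde b_j=0 ,$$
valid for all $j$. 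Thus the coefficient of $x^j$ in $g_2(x)b(x)$ is nonzero only when one of the indices $j-4,j-2,j$ lies outside $[0,q-3]$, so that its formal value $\tilde b_k$ gets dropped; checking the finitely many relevant $j\in\{0,2,q-1,q+1\}$ by direct substitution (reducing exponents via $\xi^{q+1}=-1$, and using $\tilde b_{-4}=\tilde b_{q+1}=-1$ and $\tilde b_{-2}=\tilde b_{q-1}=0$) shows this coefficient is $1$ at $j=0$ and $j=q+1$ and $0$ at $j=2$ and $j=q-1$. Therefore $g_2(x)b(x)=x^{q+1}+1$, which together with $\deg b=q-3$ proves the lemma.

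The argument is essentially a routine telescoping computation in $\mathbb{F}_{q^2}$; the only step that needs care is the top end of the product, where one must keep $b_{q-1}=b_{q+1}=0$ (as forced by $\deg b=q-3$) while also noting that the formal coefficient $\tilde b_{q-1}$ vanishes — since $\xi^{q+3}=\xi^{q+1}\xi^{2}=-\xi^{2}=\xi^{-(q-1)}$ — which is exactly what makes the coefficients of $x^{q-1}$ and $x^{q+1}$ in $g_2(x)b(x)$ come out to $0$ and $1$, respectively.
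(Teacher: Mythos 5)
Your proposal is correct. Structurally it matches the paper: both identify $g_2(x)=x^4-(\xi^2+\xi^{-2})x^2+1$, note it is self-reciprocal, invoke the standard description of duals of constacyclic codes to reduce the lemma to computing the quotient $(x^{q+1}+1)/g_2(x)$, and then establish the closed form for its coefficients. Where you diverge is in that last computational step: the paper reads off the linear recurrence $b_k=\beta b_{k-2}-b_{k-4}$ from the coefficient comparison, solves it via the characteristic equation $z^4-\beta z^2+1=0$ (whose roots are $\pm\xi,\pm\xi^{-1}$), and determines the constants $B_1,\dots,B_4$ from the initial values $b_0=1,b_1=0,b_2=\beta,b_3=0$, citing the general theory of linear recurring sequences. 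You instead \emph{verify} the claimed closed form by showing the extended coefficients $\tilde b_k$ satisfy the telescoping identity $\tilde b_{j-4}-\beta\tilde b_{j-2}+\tilde b_j=0$ and then checking the four boundary coefficients $j\in\{0,2,q-1,q+1\}$ directly (using $\xi^{q+1}=-1$ to see $\tilde b_{q-1}=0$ and $\tilde b_{q+1}=-1$). Your route is more elementary and self-contained — it avoids the recurrence-solving machinery entirely and reduces everything to polynomial algebra in $\mathbb{F}_{q^2}$ — at the cost of having to be careful at the top end of the product, which you handle correctly. The paper's route is constructive (it would find the formula even if it were not given in advance) but leans on an external reference for the structure of solutions of linear recurrences. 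Both are complete proofs; your boundary checks at $j=q-1$ and $j=q+1$ are exactly the points a careless verification would miss, and you got them right.
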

    \begin{proof}
        Note that $\xi^q = -\xi^{-1}$, thus $m_{\xi^{-1}}\left(x\right)m_{\xi^{1}}\left(x\right)  = x^4-\beta x^2+1$, where $\beta = \xi^2+\xi^{-2}.$ Thus the reciprocal polynomial of $m_{\xi^{-1}}\left(x\right)m_{\xi^{1}}\left(x\right)$ is itself. Then, the generator polynomial of $\mathcal{C}^\bot$ is 
        $$b(x) = \frac{x^{q+1}+1}{x^4-\beta x^2+1 }.$$
        Express $b(x)$ of the form $$b(x) = b_0 + b_1x + b_2x^2 + \cdots + b_{q-3}x^{q-3},$$ where $b_k$ $(0\leq k\leq q-3)$ are undetermined coefficients over $\mathbb{F}_q$. Then 
        \begin{equation}
            \label{Eq::equation11}
                x^{q+1}+1 = (1-\beta x^2 + x^4)\cdot (b_0 + b_1x + b_2 x^{2} + \cdots + b_{q-3}x^{q-3}).
        \end{equation}
        Comparing the coefficients of two sides of (\ref{Eq::equation11}), we can obtain that $b_0 = 1, b_1 = 0,b_2=\beta, b_3=0$ and $b_k= \beta b_{k-2}-b_{k-4}$ for $k \geq 4$.
        The characteristic equation of the sequence $\{b_0, b_1, \ldots, b_{\frac{n}{2}-4}\}$ is $z^4 -\beta z^2 + 1 = 0,$ which has four distinct roots $z_1=\xi$, $z_2=\xi^{-1}$, $z_3=-\xi^{-1}$, and $z_4=-\xi$. 
        Then $b_k = B_1 \xi^k + B_2 (-\xi)^k + B_3 (-\xi^{-1})^k + B_4\xi^{-k},$ for some $B_1,B_2,B_3,B_4 \in \mathbb{F}_{q^2}$ (see \cite[Chapter 8]{Lidl1997}). From $b_0 = 1, b_1 = 0,b_2=\beta, b_3=0$, we can obtain that $B_1=B_2=-\frac{\xi^4}{2(1-\xi^4)}$ and $B_3=B_4=\frac{1}{2(1-\xi^4)}$. Hence, 
        \begin{equation*}
                b_k =-\frac{\xi^4}{2(1-\xi^4)}\xi^k - \frac{\xi^4}{2(1-\xi^4)}(-\xi)^k + \frac{1}{2(1-\xi^4)}(-\xi^{-1})^k + \frac{1}{2(1-\xi^4)}\xi^{-k}.
        \end{equation*}
      It can be directly verified that $b_k=0$ if $k$ is odd; and $b_k = \frac{\xi^{k+4}-\xi^{-k}}{\xi^{4}-1}$ if $k$ is even. The desired result follows.
    \end{proof}

    We now consider the minimum pair distance of $\mathcal{C}$. Firstly, we prove that $\mathcal{C}$ does not contain a codeword of the form $(\star, \star,\star,\star,\mathbf{0}_r,\star, \mathbf{0}_{n-r-6})$ for any $1 \leq r \leq n-7$.
    
    \begin{lemma}
        \label{lemma15}
        There exists no codeword of the form $\left(\star,\star,\star,\star,\star,\mathbf{0}_r,\star,\mathbf{0}_{n-r-6}\right)$ for any $1\leq r \leq n-7$.
    \end{lemma}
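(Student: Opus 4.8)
The plan is to argue by contradiction. Suppose $\mathcal{C}$ contains such a codeword; after scaling we may take it to be the polynomial $c(x) = 1 + c_1 x + c_2 x^2 + c_3 x^3 + c_4 x^4 + c_\ell x^\ell \in \mathcal{C}$ with $c_1,c_2,c_3,c_4,c_\ell \in \mathbb{F}_q^*$ and $\ell = r + 5 \in \{6,7,\dots,2q\}$. Two quick reductions remove the degenerate situations: if $q = 3$, then $\mathcal{C}$ is the $[8,1,8]$ repetition code of Lemma \ref{constr3::hammingdistance} and has no weight-$6$ codeword; and if $\ell = 6$, then $\deg c(x) = 6 < 7 = \deg g(x)$ forces $c(x) = 0$, contradicting $c_\ell \ne 0$. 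Hence from now on $q \ge 5$ and $\ell \ge 7$. Next I would use the decomposition $\mathcal{C} = \mathcal{C}_1 \curlyvee \mathcal{C}_2$ of Section \ref{sec:decomposition_of_cyclic_codes}, writing $\mathbf{c} = (\mathbf{u} + \mathbf{v},\, \mathbf{u} - \mathbf{v})$ with $\mathbf{u} \in \mathcal{C}_1 = \langle m_{\xi^0}(x) m_{\xi^2}(x)\rangle$ and $\mathbf{v} \in \mathcal{C}_2 = \langle x^4 - \beta x^2 + 1\rangle$, both of length $q+1$. Since the support of $\mathbf{c}$ equals $\{0,1,2,3,4,\ell\}$, both $\mathbf{u}$ and $\mathbf{v}$ are supported inside $\{0,1,2,3,4\} \cup \{j_0\}$, where $j_0 \equiv \ell \pmod{q+1}$; explicitly $u_i = v_i = 2^{-1} c_i$ for $0 \le i \le 4$, $u_{j_0} = 2^{-1} c_\ell$, and $v_{j_0} = \tau\, 2^{-1} c_\ell$ with $\tau = 1$ if $\ell \le q$ and $\tau = -1$ if $\ell \ge q+1$.

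I would then split according to $j_0$ and the parity of $\ell$. If $q+1 \le \ell \le q+5$ (so $j_0 \le 4$), then $\mathbf{v}$ is supported inside $\{0,1,2,3,4\}$, so $v(x)$ is a polynomial of degree $\le 4$ lying in $\langle x^4 - \beta x^2 + 1\rangle$; this forces $v(x) = \gamma (x^4 - \beta x^2 + 1)$ and in particular $v_1 = v_3 = 0$. But among the coordinates of $\mathbf{v}$ on $\{0,1,2,3,4\}$ only the one at position $j_0$ can differ from $2^{-1} c_i$, so at least one of $v_1, v_3$ equals $2^{-1} c_i \ne 0$ — a contradiction. If $j_0 \ge 5$ but $\ell$ (hence $j_0$) is even, I would use that $\mathbf{v} \in \mathcal{C}_2$ is orthogonal to every negacyclic shift of the dual generator $b(x)$ of Lemma \ref{lemma14}; since $b_k = 0$ for odd $k$, the shifts $xb(x)$ and $x^3 b(x)$ are supported on odd coordinates only, and pairing them with $\mathbf{v}$ (whose sole odd-indexed coordinates are $v_1$ and $v_3$) gives $v_1 + \beta v_3 = 0$ and $v_3 = 0$, hence $v_1 = 0 = 2^{-1} c_1$, again a contradiction.

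The remaining and principal case is $j_0 \ge 5$ with $\ell$ odd, so $j_0$ is odd and $j_0 \in \{5,\dots,q\}$. Pairing $\mathbf{v}$ with $b(x), xb(x), x^2 b(x), x^3 b(x)$ and using $b_0 = 1$, $b_2 = \beta$, $b_4 = \beta^2 - 1$ together with $b_k = 0$ for odd $k$, I obtain $v_2 = -\beta v_4$, $v_0 = v_4$, $v_3 = -v_{j_0} b_{j_0-3}$ and $v_1 = v_{j_0}(\beta b_{j_0-3} - b_{j_0-1})$, which express $c_0, c_1, c_2, c_3$ as explicit $\mathbb{F}_q$-multiples of $c_4$ and $c_\ell$. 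Feeding these into the equations $u(\xi^2) = 0$ and $u(\xi^{-2}) = 0$ (which hold because $\mathbf{u} \in \mathcal{C}_1$) produces a homogeneous $2 \times 2$ linear system in $(c_4, c_\ell)$; since $c_4 c_\ell \ne 0$ its determinant must vanish, and after cancelling the nonzero factor $(\xi^2 - 1)(\xi^6 - 1)$ this reduces to
\[
\tau\bigl[(\beta+1) b_{j_0-3} - b_{j_0-1}\bigr] \;=\; \frac{\xi^{2(j_0-2)} - \xi^{-2(j_0-2)}}{\xi^2 - \xi^{-2}}\,.
\]
Using the closed form of $b(x)$ from Lemma \ref{lemma14} one verifies the identity $(\beta+1) b_{j_0-3} - b_{j_0-1} = (\xi + \xi^{-1})(\xi^{j_0} - \xi^{4-j_0})/(\xi^4 - 1)$, after which the displayed condition collapses to $\xi^{j_0-2} + \xi^{-(j_0-2)} = \tau(\xi + \xi^{-1})$, equivalently $(\xi^{j_0-2} - \tau\xi)(\xi^{j_0-2} - \tau\xi^{-1}) = 0$. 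Since $\tau \in \{\pm 1\}$ and $\xi^{q+1} = -1$, either alternative forces $\xi^e = 1$ for some $e \in \{j_0 - 3,\, j_0 - 1,\, j_0 - q - 2,\, j_0 - q\}$; but each such $e$ is a nonzero residue modulo $2q + 2$ when $j_0 \in \{5,\dots,q\}$, which is impossible. This contradiction finishes the proof.

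The hard part will be the last case. One has to carry the decomposition through the two branches $\tau = +1$ (when $\ell \le q$) and $\tau = -1$ (when $\ell \ge q + 6$) at the same time, verify the collapse of $(\beta+1) b_{j_0-3} - b_{j_0-1}$ (including the endpoint $j_0 = q$, where $b_{j_0-1} = 0$), and justify cancelling $\xi^{j_0-2} - \xi^{-(j_0-2)}$, i.e.\ check $\xi^{2(j_0-2)} \ne 1$ throughout the range of $j_0$. The routine but lengthy coefficient bookkeeping that leads from $u(\xi^{\pm 2}) = 0$ to the $2 \times 2$ determinant condition is where most of the effort goes.
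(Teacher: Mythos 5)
Your proposal is correct, and it rests on the same backbone as the paper's proof --- the decomposition $\mathcal{C}=\mathcal{C}_1\curlyvee\mathcal{C}_2$ followed by a case split on where the isolated nonzero coordinate lands --- but it executes the main case by a genuinely different mechanism. The paper splits into $6\le\ell\le\frac{n}{2}-1$, $\frac{n}{2}\le\ell\le\frac{n}{2}+4$ and $\frac{n}{2}+5\le\ell\le n-2$, and in the two outer cases writes down the four root equations $u(1)=u(\xi)=u(-\xi)=u(\xi^2)=0$ for the half-length word, solves explicitly for $u_1,u_3,u_\ell$ (after reading off $u_2,u_4$), and gets its contradiction from the rationality condition $u_\ell^q=u_\ell$, which forces $\xi^{2\ell-4}=1$ (resp.\ $\xi^{2s-4}=1$). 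You instead extract the $\mathcal{C}_2$-constraints by pairing $\mathbf{v}$ with the shifts $b(x),xb(x),x^2b(x),x^3b(x)$ of the dual generator from Lemma \ref{lemma14} --- a tool the paper deploys only in Lemmas \ref{lemma16} and \ref{lemma17} --- and then impose just $u(\xi^{2})=u(\xi^{-2})=0$ as a homogeneous $2\times 2$ system; since $u(\xi^{-2})=\big(u(\xi^2)\big)^q$ for a word over $\mathbb{F}_q$, the vanishing determinant is the Frobenius step in disguise. What your route buys is uniformity: the sign $\tau$ merges the paper's cases (i) and (iii) into a single computation, the middle cases $q+1\le\ell\le q+5$ collapse to ``a degree-$\le 4$ word of $\mathcal{C}_2$ is a scalar multiple of $x^4-\beta x^2+1$'', and the even-$\ell$ case follows from the odd-support shifts $xb,x^3b$ alone; the price is that it leans on Lemma \ref{lemma14} where the paper's argument for this particular lemma does not. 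I carried out the bookkeeping you deferred: the coefficient of $v_4$ in $u(\xi^{\pm 2})$ is indeed $(1-\xi^{\pm 2})(1-\xi^{\pm 6})$, the identity $(\beta+1)b_{j_0-3}-b_{j_0-1}=(\xi+\xi^{-1})(\xi^{j_0}-\xi^{4-j_0})/(\xi^4-1)$ holds, and the determinant condition does collapse to $\xi^{j_0-2}+\xi^{-(j_0-2)}=\tau(\xi+\xi^{-1})$. One small slip at the very end: for $\tau=-1$ the forced exponents are $j_0-q-4$ and $j_0-q-2$ rather than $j_0-q-2$ and $j_0-q$; this is harmless, since in that branch $\ell\le 2q$ gives $j_0\le q-1$, so all the relevant exponents are nonzero modulo $2q+2$ in either reading.
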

    \begin{proof}
        We prove the lemma by contradiction. Suppose that there exists a codeword $\mathbf{c} \in \mathcal{C}$ of the form $\left(\star,\star,\star,\star,\star,\mathbf{0}_r,\star,\mathbf{0}_{n-r-6}\right)$. Since $\gcd(2,q) = 1 $, w.l.o.g, we can assume that $\mathbf{c} = \left(2,2u_1,2u_2,2u_3,2u_4,\mathbf{0}_{\ell-5},2u_{\ell},\mathbf{0}_{n-r-6}\right)$, where $u_1,u_2,u_3,u_4,u_{\ell}\in\mathbb{F}_q^*$, and $l=r+5$.   By the $[\mathbf{u}+\mathbf{v}|\mathbf{u}-\mathbf{v}]$-decomposition, we suppose 
        \[\mathbf{c}=(\mathbf{u_c}+\mathbf{v_c}|\mathbf{u_c}-\mathbf{v_c}),\]
         where $\mathbf{u_c} \in \mathcal{C}_1$ and $\mathbf{v_c} \in  \mathcal{C}_2$. Next, we divide our discussion into three cases according to the structure of components of $\mathbf{c}$.
        \begin{itemize}
            \item [(\romannumeral1)] $6\leq \ell\leq \frac{n}{2}-1$, i.e., $\mathbf{c} = \left(2,2u_1,2u_2,2u_3,2u_4,\mathbf{0}_{\ell-5},2u_{\ell},\mathbf{0}_{\frac{n}{2}-1-\ell} ~| ~\mathbf{0}_{\frac{n}{2}}\right)$. 
            Then we can obtain $$\mathbf{u_c}=\mathbf{v_c} =\left(1,u_1,u_2,u_3,u_4,\mathbf{0}_{\ell-5},u_{\ell},\mathbf{0}_{\frac{n}{2}-1-\ell}\right). $$
            This means that $\left(1,u_1,u_2,u_3,u_4,\mathbf{0}_{\ell-5},u_{\ell},\mathbf{0}_{\frac{n}{2}-1-\ell}\right)\in \mathcal{C}_1 \cap \mathcal{C}_2$. Thus, we have $m_{\xi^{0}}(x)m_{\xi^{2}}(x)\mid \mathbf{u_c}(x)$ and $m_{\xi{-1}}(x)m_{\xi^{1}}(x)\mid \mathbf{u_c}(x)$. Then we obtain the following system:
            \begin{equation*}
                \begin{cases}
                    1+u_1+u_2+u_3+u_4+u_{\ell}&=0, \\
                    1+u_1\xi+u_2\xi^2+u_3\xi^3+u_4\xi^4+u_{\ell}\xi^{\ell}&=0,\\
                    1-u_1\xi+u_2\xi^2-u_3\xi^3+u_4\xi^4+u_{\ell}(-\xi)^{\ell}&=0,\\
                    1+u_1\xi^2+u_2\xi^4+u_3\xi^6+u_4\xi^8+u_{\ell}\xi^{2\ell}&=0.\\
                \end{cases}
            \end{equation*}
            
            If $l$ is even, we can get $u_1\xi+u_3\xi^3 =0$. Since $\xi^2 \not \in \mathbb{F}_q$, we obtain that $u_1 = u_3=0 $, which contradicts the assumption that $u_1,u_3\in\mathbb{F}_q^*$.
            
            If $l$ is odd, then we have the following system:
                \begin{equation}
                    \label{Eq::Lemma15-1.1}
                    \begin{cases}
                        1+u_1+u_2+u_3+u_4+u_{\ell}&=0, \\
                        1+u_2\xi^2+u_4\xi^4&=0,\\
                        u_1+u_3\xi^2+u_{\ell}\xi^{\ell-1}&=0,\\
                        1+u_1\xi^2+u_2\xi^4+u_3\xi^6+u_4\xi^8+u_{\ell}\xi^{2\ell}&=0.\\
                    \end{cases}
                \end{equation}
                Note that $f(x)= 1+u_2x^2+u_4x^4$ has two distinct roots $x=\xi$  and $x=-\xi=\xi^{-q}$. We can get that $m_{\xi{-1}}(x)m_{\xi^{1}}(x) \mid f(x)$. It is clear that $u_2 = -(\xi^2+\xi^{-2})$ and $u_4=1$. Substituting into the system (\ref{Eq::Lemma15-1.1}),
                we have the following system:
                \begin{equation*}
                    \begin{cases}
                        u_1+u_3+u_{\ell} &=\xi^2+\xi^{-2}-2 = (\xi^2-1)(1-\xi^{-2}),  \\
                        u_1+u_3\xi^2+u_{\ell}\xi^{\ell-1}&=0,\\
                        u_1\xi^2+u_3\xi^6+u_{\ell}\xi^{2\ell}&=\xi^6+\xi^2-1-\xi^8=(\xi^2-1)(1-\xi^6). \\
                    \end{cases}
                    \Longrightarrow 
                    \begin{cases}
                        u_1=  \frac{(1-\xi^{-2})(\xi^{\ell+1}-\xi^4-\xi^2-1)}{\xi^{\ell-1}-1},\\
                        u_3= \frac{\xi^{\ell+1}-\xi^{\ell-1}-\xi^6+1}{\xi^4-\xi^{\ell+1}},\\
                        u_{\ell} =  \frac{(\xi^{-2}-1)(\xi^2-1)(\xi^2+\xi^{-2})}{(\xi^{\ell-3}-1)(\xi^{\ell-1}-1)}.
                    \end{cases}
                \end{equation*}
                Note that $u_{\ell}\in \mathbb{F}_q^*$,
                \begin{equation*}
                    \begin{aligned}
                        u_{\ell}^q &= \frac{(\xi^{2}-1)(\xi^{-2}-1)(\xi^{-2}+\xi^{2})\xi^{2\ell-4}}{(1-\xi^{\ell-3})(1 -\xi^{\ell-1})} = \frac{(\xi^{-2}-1)(\xi^2-1)(\xi^2+\xi^{-2})}{(\xi^{\ell-3}-1)(\xi^{\ell-1}-1)}=u_{\ell}.
                    \end{aligned}
                \end{equation*}
            This means that $\xi^{2\ell-4} =1$, which is a contradiction since $\xi$ is a primitive $n$-th root of unity and $6\leq \ell\leq \frac{n}{2}-1$.

            \item [(\romannumeral2)] $\ell = \frac{n}{2}$, i.e., $\mathbf{c} = \left(2,2u_1,2u_2,2u_3,2u_4,\mathbf{0}_{\frac{n}{2}-5} ~|~   2u_{\ell},\mathbf{0}_{\frac{n}{2}-1} \right)$. Then
            $$\mathbf{u_c} = \left(1+u_{\ell},u_1,u_2,u_3,u_4,\mathbf{0}_{\frac{n}{2}-5}\right)$$
            and 
            $$\mathbf{v_c} = \left(1-u_{\ell},u_1,u_2,u_3,u_4,\mathbf{0}_{\frac{n}{2}-5}\right).$$
            Note that the generator polynomial of $\mathcal{C}_2$ is $x^4-(\xi^2+\xi^{-2})x^2+1$, thus $\xi$ and $-\xi$ are roots of $\mathbf{v_c}(x)$. Then we can easily obtain $u_1 =u_3=0$,  which contradicts the assumption that $u_1,u_3\in\mathbb{F}_q^*$.

            The discussion of the following cases is similar, so we omit it.
            \[\ell =\frac{n}{2}+1,\textnormal{ i.e.}, \mathbf{c} = \left(2,2u_1,2u_2,2u_3,2u_4,\mathbf{0}_{\frac{n}{2}-5}~|~ 0, 2u_{\ell},\mathbf{0}_{\frac{n}{2}-2} \right);\]
            \[\ell = \frac{n}{2}+2,\textnormal{ i.e.}, \mathbf{c} = \left(2,2u_1,2u_2,2u_3,2u_4,\mathbf{0}_{\frac{n}{2}-5}~|~0,0, 2u_{\ell},\mathbf{0}_{\frac{n}{2}-3} \right); \]
            \[\ell = \frac{n}{2}+3,\textnormal{ i.e.}, \mathbf{c} = \left(2,2u_1,2u_2,2u_3,2u_4,\mathbf{0}_{\frac{n}{2}-5}~|~ 0,0,0, 2u_{\ell},\mathbf{0}_{\frac{n}{2}-4}, \right);\]
            \[\ell = \frac{n}{2}+4,\textnormal{ i.e.}, \mathbf{c} = \left(2,2u_1,2u_2,2u_3,2u_4,\mathbf{0}_{\frac{n}{2}-5}~|~ 0,0,0,0, 2u_{\ell},\mathbf{0}_{\frac{n}{2}-5} \right). \]

            \item [(\romannumeral3)] $\frac{n}{2} +5 \leq \ell \leq n-2$, i.e., $\mathbf{c} = \left(2,2u_1,2u_2,2u_3,2u_4,\mathbf{0}_{\frac{n}{2}-5}~| ~\mathbf{0}_{\ell-\frac{n}{2}}, 2u_{\ell},\mathbf{0}_{n-\ell-1} \right)$. Denote $s=\ell-\frac{n}{2}$. Then 
            $$\mathbf{u_c} =\left(1,u_1,u_2,u_3,u_4,\mathbf{0}_{s-5},u'_{s},\mathbf{0}_{\frac{n}{2}-s-1}\right)$$
            and 
            $$\mathbf{v_c} =\left(1,u_1,u_2,u_3,u_4,\mathbf{0}_{s-5},-u'_{s},\mathbf{0}_{\frac{n}{2}-s-1}\right),$$  where $u'_s=u_{\ell}.$ Then we have the following system:
            \begin{equation*}
                \begin{cases}
                    1+u_1+u_2+u_3+u_4+u'_{s}&=0, \\
                    1+u_1\xi+u_2\xi^2+u_3\xi^3+u_4\xi^4-u'_{s}\xi^{s}&=0,\\
                    1-u_1\xi+u_2\xi^2-u_3\xi^3+u_4\xi^4-u'_{s}(-\xi)^{s}&=0,\\
                    1+u_1\xi^2+u_2\xi^4+u_3\xi^6+u_4\xi^8+u'_{s}\xi^{2s}&=0.\\
                \end{cases}
            \end{equation*}
            If $s$ is even, we can get $u_1\xi+u_3\xi^3 =0$. Since $\xi^2 \not \in \mathbb{F}_q$, we obtain $u_1 = u_3=0 $, which contradicts the assumption that $u_1,u_3\in\mathbb{F}_q^*$.
            
            If $s$ is odd, then we have the following system:
            \begin{equation}
                \label{Eq::Equation13}
                \begin{cases}
                    1+u_1+u_2+u_3+u_4+u'_{s}&=0, \\
                    1+u_2\xi^2+u_4\xi^4&=0,\\
                    u_1+u_3\xi^2-u'_{s}\xi^{s-1}&=0,\\
                    1+u_1\xi^2+u_2\xi^4+u_3\xi^6+u_4\xi^8+u'_{s}\xi^{2s}&=0.\\
                \end{cases}
            \end{equation}
            It is clear that $u_2 = -(\xi^2+\xi^{-2})$ and $u_4=1$. Putting the values into the system (\ref{Eq::Equation13}), we have the following system:
            \begin{equation*}
                \begin{cases}
                    u_1+u_3+u_{s}' &=(\xi^2-1)(1-\xi^{-2}),  \\
                    u_1+u_3\xi^2-u'_{s}\xi^{s-1}&=0,\\
                    u_1\xi^2+u_3\xi^6+u'_{s}\xi^{2s}&=(\xi^2-1)(1-\xi^6). \\
                \end{cases}
                \Longrightarrow 
                \begin{cases}
                    u_1=  \frac{(1-\xi^{-2})(\xi^{s+1}+\xi^4+\xi^2+1)}{\xi^{s-1}+1},\\
                    u_3= \frac{-\xi^{s+1}+\xi^{s-1}-\xi^6+1}{\xi^{s+1}-\xi^4},\\
                    u'_{s} = \frac{(\xi^{-2}-1)(\xi^2-1)(\xi^2+\xi^{-2})}{(\xi^{s-3}+1)(\xi^{s-1}+1)}.
                \end{cases}
            \end{equation*}
                Note that $u'_{s}=u_{\ell}\in \mathbb{F}_q^*$,
                \begin{equation*}
                    \begin{aligned}
                        (u_{s}')^q &= \frac{(\xi^{2}-1)(\xi^{-2}-1)(\xi^{-2}+\xi^{2})\xi^{2s-4}}{(1+\xi^{s-3})(1+\xi^{s-1})} = \frac{(\xi^{-2}-1)(\xi^2-1)(\xi^2+\xi^{-2})}{(\xi^{s-3}+1)(\xi^{s-1}+1)}=u_s'.
                    \end{aligned}
                \end{equation*}
                This means that $\xi^{2s-4} =1$, which is a contradiction for $5\leq s\leq \frac{n}{2}-2$. 
        \end{itemize}
        In summary, there does not exist such a codeword in $\mathcal{C}$.
    \end{proof}

    \begin{lemma}
        \label{lemma16}
        There exists no codeword of the form $\left(\star,\star,\star,\star,\mathbf{0}_r,\star,\star,\mathbf{0}_{n-r-6}\right)$ for any $1\leq r \leq n-7$.
    \end{lemma}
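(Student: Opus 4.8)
The plan is to argue by contradiction, following closely the template of Lemma~\ref{lemma15}. Suppose $\mathcal{C}$ contains a codeword of the stated shape. Since $\gcd(2,q)=1$, after rescaling by $2^{-1}$ and applying a cyclic shift we may assume
\[
\mathbf{c}=\bigl(2,\,2u_1,\,2u_2,\,2u_3,\,\mathbf{0}_{r},\,2u_\ell,\,2u_{\ell+1},\,\mathbf{0}_{n-r-6}\bigr),
\qquad \ell=r+4,\quad 5\le\ell\le n-3,
\]
with $u_1,u_2,u_3,u_\ell,u_{\ell+1}\in\mathbb{F}_q^{*}$. Write $g_1(x)=m_{\xi^{0}}(x)m_{\xi^{2}}(x)$ and $g_2(x)=m_{\xi^{-1}}(x)m_{\xi^{1}}(x)$ for the generator polynomials of $\mathcal{C}_1$ and $\mathcal{C}_2$, and use the $[\mathbf{u}+\mathbf{v}\mid\mathbf{u}-\mathbf{v}]$-decomposition of Section~\ref{sec:decomposition_of_cyclic_codes}: $\mathbf{c}=(\mathbf{u_c}+\mathbf{v_c}\mid\mathbf{u_c}-\mathbf{v_c})$ with $\mathbf{u_c}\in\mathcal{C}_1$, $\mathbf{v_c}\in\mathcal{C}_2$, both of length $\frac{n}{2}=q+1$. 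I would split into cases according to where the length-two block $\{\ell,\ell+1\}$ sits relative to the midpoint $q+1$:
\begin{itemize}
  \item[(i)] $6\le\ell\le q-1$: the block lies entirely in the first half, so $\mathbf{u_c}-\mathbf{v_c}=\mathbf{0}$ and $\mathbf{u_c}=\mathbf{v_c}\in\mathcal{C}_1\cap\mathcal{C}_2$;
  \item[(ii)] the five boundary values $\ell\in\{q,q+1,q+2,q+3,q+4\}$, where the block straddles or overlaps the start of the second half;
  \item[(iii)] $q+5\le\ell\le n-3$: the block lies in the second half clear of the first one, at positions $s,s+1$ with $s=\ell-q-1\ge 4$, so $\mathbf{u_c}$ and $\mathbf{v_c}$ both have support inside $\{0,1,2,3,s,s+1\}$ and differ only by the signs on the $s,s+1$ entries.
\end{itemize}
The value $\ell=5$ is excluded immediately, since then $\mathbf{c}(x)$ would have degree $6<7=\deg g$.

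For the two generic cases (i) and (iii) I would impose the root conditions $\mathbf{u_c}(1)=\mathbf{u_c}(\xi^{2})=0$ (from $g_1$) and $\mathbf{v_c}(\xi)=\mathbf{v_c}(-\xi)=0$ (from $g_2$); in case~(i) the single vector $\mathbf{u_c}=\mathbf{v_c}$ satisfies all of these, and one may equivalently use the orthogonality of $\mathbf{v_c}$ with $\mathcal{C}_2^{\bot}=\langle b(x)\rangle$ furnished by Lemma~\ref{lemma14}. Adding and subtracting the equations at $\xi$ and $-\xi$ splits each vector into its even- and odd-degree parts; as $\ell$ and $\ell+1$ have opposite parities, exactly one of $u_\ell,u_{\ell+1}$ survives in each part. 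I would then apply the Frobenius map $z\mapsto z^{q}$, under which $\xi^{q}=-\xi^{-1}$ and $\xi^{2q}=\xi^{-2}$ (using $\xi^{q+1}=-1$), together with $u_i^{q}=u_i$, to produce the conjugate equations, and eliminate; this pins down every $u_i$ as an explicit rational function of $\xi$. Substituting into the one remaining condition $\mathbf{u_c}(\xi^{2})=0$ (equivalently, imposing $u_\ell^{q}=u_\ell$) and reducing exponents via $\xi^{q+1}=-1$, I expect the result to collapse to a relation $\xi^{m}=1$ with $0<|m|<n$ bounded in terms of $\ell$, $s$ — mirroring the $\xi^{2\ell-4}=1$ obstruction that arises in Lemma~\ref{lemma15} — contradicting that $\xi$ is a primitive $n$-th root of unity.

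The five boundary values in case~(ii) are handled one at a time, exactly as the analogous cases of Lemma~\ref{lemma15}. Writing $\mathbf{u_c}$ and $\mathbf{v_c}$ out explicitly, in each of them $\mathbf{v_c}$ is of low degree: when $\deg\mathbf{v_c}(x)\le 3<4=\deg g_2$, the divisibility $g_2(x)\mid\mathbf{v_c}(x)$ forces $\mathbf{v_c}=\mathbf{0}$, hence $u_2=0$, or a clash with the nonzero constant term of $\mathbf{v_c}(x)$; when $\deg\mathbf{v_c}(x)=4$ one gets $\mathbf{v_c}(x)=\gamma\,g_2(x)$ for a scalar $\gamma$, forcing $u_1=0$; and in the straddling case $\ell=q$, summing $\mathbf{v_c}(\xi)$ and $\mathbf{v_c}(-\xi)$ and using $\xi^{2}\notin\mathbb{F}_q$ again forces $u_2=0$. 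Every outcome contradicts $u_i\in\mathbb{F}_q^{*}$, so no codeword of the asserted form exists.

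The main obstacle, compared with Lemma~\ref{lemma15}, is that the short block now has two nonzero entries rather than one: a single parity symmetrization no longer annihilates the ``far'' contribution, so in each parity branch one is left with a genuine three-variable linear relation instead of a two-variable one, and it takes one more round of Frobenius-conjugation, together with the equations $\mathbf{u_c}(1)=0$ and $\mathbf{u_c}(\xi^{2})=0$, to determine all of $u_1,\dots,u_{\ell+1}$. Carrying out this elimination cleanly in cases~(i) and~(iii), and verifying that the leftover relation genuinely reduces to a small cyclotomic obstruction, is where most of the work lies; case~(ii), by contrast, is brief.
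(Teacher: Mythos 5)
Your architecture coincides with the paper's: contradiction, the $\mathcal{C}_1\curlyvee\mathcal{C}_2$ decomposition, a case split on where the block $\{\ell,\ell+1\}$ sits relative to the midpoint $\tfrac n2=q+1$ (two generic ranges plus five boundary values), root conditions from $g_1,g_2$, parity splitting at $\pm\xi$, and Frobenius conjugation to exploit $u_i\in\mathbb{F}_q$. Your handling of the boundary values is sound and matches the paper's cases (\romannumeral2)--(\romannumeral3), and your observation that exactly one of $u_\ell,u_{\ell+1}$ survives in each parity component is the correct starting point for the generic ranges.

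The gap is that the decisive step --- actually performing the elimination in the two generic ranges and verifying that it terminates in a relation $\xi^m=1$ with $0<m<n$ --- is only predicted, not carried out, and this is where essentially all of the paper's proof lives. Concretely, for $\ell$ odd the paper does not run a raw Frobenius elimination: it extracts one clean extra scalar equation from the dual codeword $\mathbf{e}=(-b_{\frac n2-4},0,0,0,b_0,\dots,b_{\frac n2-5})\in\mathcal{C}_2^{\bot}$, using the explicit coefficients of Lemma \ref{lemma14} (with $b_{\ell-4}=0$ since $\ell-4$ is odd) to read off $u_{\ell+1}$, then solves the remaining $3\times3$ system for $u_1,u_3,u_\ell$ and closes with $u_3^{q}=u_3\Rightarrow\xi^{2}=1$. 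For $\ell$ even the odd-degree part is genuinely underdetermined, and the paper instead combines the two parity equations multiplicatively to force $u_3=u_1u_2$ and $u_{\ell+1}=u_1u_\ell$, whence $(1+u_1)(1+u_2+u_\ell)=0$, and each branch collapses via $u_2^{q}=u_2$ to $\xi^{6}=1$ or $\xi^{2}=1$. Your Frobenius-conjugate equations do encode the same information as the dual-code condition, so the route is viable in principle, but to make it a proof you would need to carry the elimination all the way through (including checking that the relevant $2\times2$ determinants do not vanish on the stated ranges of $\ell$ and $s$) or reproduce the paper's two devices; as written, the contradiction in your cases (i) and (iii) is asserted rather than established.
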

    \begin{proof}
        Similarly as in the proof of Lemma 15, we prove the lemma by contradiction and assume that $\mathbf{c} = \left(2,2u_1,2u_2,2u_3,\mathbf{0}_{\ell-4},2u_{\ell},2u_{\ell+1},\mathbf{0}_{n-\ell-2}\right)\in\mathcal{C}, $ where $u_1,u_2,u_3,u_{\ell},u_{\ell+1}\in \mathbb{F}^*_q$ and $\ell=r+4$. Further,
        $$\mathbf{c}=(\mathbf{u_c}+\mathbf{v_c}|\mathbf{u_c}-\mathbf{v_c}),$$
        where $\mathbf{u_c}\in \mathcal{C}_1$ and $\mathbf{v_c}\in \mathcal{C}_2$. Next, we divide our discussion into four cases according to the structure of components of $\mathbf{c}$.
        \begin{itemize}
            \item [(\romannumeral1)] $5\leq \ell\leq \frac{n}{2}-2$, i.e., $\mathbf{c} = \left(2,2u_1,2u_2,2u_3,\mathbf{0}_{\ell-4},2u_{\ell},2u_{\ell+1},\mathbf{0}_  {\frac{n}{2}-2-\ell} ~|~ \mathbf{0}_{\frac{n}{2}}\right)$. Then we can obtain $$\mathbf{u_c}=\mathbf{v_c} =\left(1,u_1,u_2,u_3,\mathbf{0}_{\ell-4},u_{\ell},u_{\ell+1},\mathbf{0}_{\frac{n}{2}-2-\ell}\right).$$
            This means that $\left(1,u_1,u_2,u_3,\mathbf{0}_{\ell-4},u_{\ell},u_{\ell+1},\mathbf{0}_{\frac{n}{2}-2-\ell}\right)\in \mathcal{C}_1 \cap \mathcal{C}_2$. 
            Thus, we have $m_{\xi^{0}}(x)m_{\xi^{2}}(x)\mid \mathbf{u_c}(x)$ and $m_{\xi{-1}}(x)m_{\xi^{1}}(x)\mid \mathbf{u_c}(x)$.
            Then we obtain the following system:
            \begin{equation*}
                \left\{
                \begin{aligned}
                    &1 + u_1 + u_2+u_3+u_\ell+u_{\ell+1} &= 0,\\
                    &1 + u_1\xi + u_2\xi^2+u_3\xi^3+u_\ell\xi^\ell+u_{\ell+1}\xi^{\ell+1} &= 0,\\
                    &1 - u_1\xi +  u_2 \xi^{2} - u_3 \xi^{3}+ u_\ell(-\xi)^{\ell}+ u_{\ell+1} (-\xi)^{\ell+1}& = 0,\\
                    &1 + u_1\xi^2 + u_2\xi^4+u_3\xi^6+u_\ell\xi^{2\ell}+u_{\ell+1}\xi^{2(\ell+1)} &= 0,\\
                \end{aligned}\right.
            \end{equation*}
                \textit{Case 1}: $\ell$ is odd. Then we have the following system:
                \begin{equation}
                    \label{Eq::Lemma16-1.1}
                    \begin{cases}
                        1 + u_1 + u_2+u_3+u_\ell+u_{\ell+1} &= 0,\\
                        1  + u_2\xi^2 + u_{\ell+1}\xi^{\ell+1} &= 0,\\
                        u_1  + u_3 \xi^{2} + u_\ell\xi^{\ell-1} & = 0,\\
                        1 + u_1\xi^2 + u_2\xi^4+u_3\xi^6+u_\ell\xi^{2\ell}+u_{\ell+1}\xi^{2(\ell+1)} &= 0.
                    \end{cases}
                \end{equation}
            Let $b(x) = b_0 + b_1x + b_2x^2 + \cdots+ b_{\frac{n}{2}-4}x^{\frac{n}{2}-4}$ be the generator polynomial of $\mathcal{C}_2^\bot$  given in Lemma \ref{lemma14}, then $\mathbf{e} = (-b_{\frac{n}{2}-4},0,0,0, b_0, b_1 ,\cdots, b_{\frac{n}{2}-5}) \in \mathcal{C}_2^\bot$. Thus
                \begin{equation}
                    \label{Eq::Lemma16-1.2}
                    \mathbf{e}\cdot \mathbf{v_c} = -b_{\frac{n}{2}-4} + u_\ell \cdot b_{\ell-4}+ u_{\ell+1} \cdot b_{\ell-3}=0.
                \end{equation}
                Note that $\ell$ is odd and $5\leq \ell \leq \frac{n}{2}-3$,
                we can get $b_{\frac{n}{2}-4} = 1$, $b_{\ell-4}=0$, and $b_{\ell-3}=\frac{\xi^{2\ell-2}-1}{\xi^{\ell-3}(\xi^4-1)}$ by Lemma \ref{lemma14}. From the second equation of (\ref{Eq::Lemma16-1.1}) and equation (\ref{Eq::Lemma16-1.2}), we obtain
                \begin{equation*}
                    \begin{cases} 
                        u_2 =  -\frac{\xi^{2\ell+2}-1}{\xi^{2}(\xi^{2\ell-2}-1)},\\
                        u_{\ell+1} = \frac{(\xi^4-1)\xi^{\ell-3}}{\xi^{2\ell-2}-1}.
                    \end{cases}
                \end{equation*}
                Substituting into the system (\ref{Eq::Lemma16-1.1}), we have the following system:
                \begin{equation*}
                    \begin{cases}
                        u_1 + u_3 + u_{\ell} & = \frac{(\xi^2-1)(\xi^{l+1}-1)(\xi^{l-1}-1)}{\xi^2(\xi^{2l-2}-1)},  \\
                        u_1  + \xi^{2} u_3  + u_\ell\xi^{\ell-1} & = 0,\\
                        u_1\xi^2  + u_3\xi^{6}   + u_\ell\xi^{2\ell} & = \frac{(\xi^{2\ell+2}+\xi^{2\ell}-\xi^{\ell-1}-1)(1-\xi^{\ell-1})(\xi^2-1)}{\xi^{2\ell-2}-1}.
                    \end{cases}
                \end{equation*}
                By solving the system, we can obtain 
                \begin{equation*}
                    \left\{
                    \begin{aligned}
                        u_1 &= -\frac{\xi^{\ell+1}-1}{\xi^2(\xi^{\ell-1}-1)},\\
                        u_3 &= \frac{(\xi^{\ell+1}-1)(\xi^{\ell+1}+1)}{\xi^4(\xi^{\ell-1}+1)(\xi^{\ell-3}-1)},\\
                        u_{\ell} &= \frac{(\xi^2-1)(\xi^{\ell+1}-1)(\xi^{\ell+1}+\xi^{\ell-1}-\xi^2+1)}{\xi^4(\xi^{2\ell-2}-1)(1-\xi^{\ell-3})}.\\
                    \end{aligned}\right.
                \end{equation*}
                Note that $u_3\in\mathbb{F}_q^*$,
                \begin{equation*}
                    u_{3}^q  
                    = \frac{(1-\xi^{\ell+1})(1+\xi^{\ell+1})}{\xi^{2}(1+\xi^{\ell-1})(1-\xi^{\ell-3})}
                    =\frac{(\xi^{\ell+1}-1)(\xi^{\ell+1}+1)}{\xi^4(\xi^{\ell-1}+1)(\xi^{\ell-3}-1)}=u_3.
                \end{equation*}
                We deduce that $\xi^2=1$, which is a contradiction since $\xi$ is a primitive $n$-th root of unity.

                \textit{Case 2}: $\ell$ is even. Then we have the following system:
                \begin{equation}
                    \label{Eq::Lemma16-1.3}
                    \left\{
                    \begin{aligned}
                        &1 + u_1 + u_2+u_3+u_\ell+u_{\ell+1} &= 0,\\
                        &1  + u_2\xi^2 + u_\ell\xi^{\ell}  &= 0,\\ 
                        &u_1  + u_3 \xi^{2} + u_{\ell+1}\xi^{\ell}& = 0,\\
                        &1 + u_1\xi^2 + u_2\xi^4+u_3\xi^6+u_\ell\xi^{2\ell}+u_{\ell+1}\xi^{2(\ell+1)} &= 0.
                    \end{aligned}\right.
                \end{equation}
                From the second and third equations of the system, we obtain $(u_3-u_1u_2) \xi^{2} + (u_{\ell+1}-u_1u_{\ell})\xi^{\ell} =0 \Longrightarrow (u_3-u_1u_2) + (u_{\ell+1}-u_1u_{\ell})\xi^{\ell-2} =0$. Note that $\xi^{\ell-2} \not \in \mathbb{F}_q$. Therefore, we obtain $u_3=u_1u_2$ and $u_{\ell+1}=u_1u_{\ell}$. Then we can derive $$1 + u_1 + u_2+u_3+u_\ell+u_{\ell+1} = (1+u_1)(1+u_2+u_{\ell}) = 0.$$  
            Thus $u_1 = -1$ or $1+u_2+u_{\ell}=0$.
            
            If $u_1 = -1$, then $u_3=-u_2$ and $u_{\ell+1}=-u_{\ell}$. Then 
                    \begin{equation*}
                            1 + u_1\xi^2 + u_2\xi^4+u_3\xi^6+u_\ell\xi^{2\ell}+u_{\ell+1}\xi^{2(\ell+1)}
                            =(1-\xi^2)(1+u_2\xi^4+u_{\ell}\xi^{2\ell})
                            =0.
                    \end{equation*}
                    Note that $1-\xi^2 \neq 0$, thus $1+u_2\xi^4+u_{\ell}\xi^{2\ell}=0$. Combining it with the second equation of (\ref{Eq::Lemma16-1.3}), we have the following system:
                    \begin{equation*}
                        \begin{cases}
                            1  + u_2\xi^2 + u_\ell\xi^{\ell}  &= 0,\\ 
                            1+u_2\xi^4+u_{\ell}\xi^{2\ell}&=0.
                        \end{cases}
                        \Longrightarrow
                        \begin{cases}
                            u_{2} = \frac{1-\xi^{\ell}}{\xi^{\ell+2}-\xi^4},\\
                            u_{\ell} = \frac{\xi^2-1}{\xi^{2\ell}-\xi^{\ell+2}}.
                        \end{cases}
                    \end{equation*}
                    Note that $u_2 \in\mathbb{F}_q^*$, 
                    $$u_2^q = \frac{\xi^{\ell+2}-\xi^{2}}{1-\xi^{\ell-2}} = u_2 = \frac{1-\xi^{\ell}}{\xi^{\ell+2}-\xi^4}.$$
                    This means $\xi^6=1$, which is a contradiction since $\xi$ is a primitive $n$-th root of unity.
         
                    If $1+u_2+u_{\ell}=0$, then combining it with the second equation of (\ref{Eq::Lemma16-1.3}), we can obtain
                    \begin{equation*}
                        \begin{cases}
                            1  + u_2\xi^2 + u_\ell\xi^{\ell}  &= 0,\\ 
                            1+u_2+u_{\ell}&=0.
                        \end{cases}
                        \Longrightarrow
                        \begin{cases}
                            u_{2} = \frac{1-\xi^{\ell}}{\xi^{\ell}-\xi^2},\\
                            u_{\ell} = \frac{\xi^2-1}{\xi^{\ell}-\xi^{2}}.
                        \end{cases}
                    \end{equation*}
                    Note that $u_2 \in\mathbb{F}_q^*$,  
                    $$u_2^q = \frac{1-\xi^{-\ell}}{\xi^{-\ell}-\xi^{-2}}=\frac{\xi^{\ell}-1}{1-\xi^{\ell-2}} = u_2 = \frac{1-\xi^{\ell}}{\xi^{\ell}-\xi^2}.$$
                    This means that $\xi^2=1$, which is a contradiction.

            \item [(\romannumeral2)] $\ell = \frac{n}{2}-1$, i.e., $\mathbf{c} = \left(2,2u_1,2u_2,2u_3,\mathbf{0}_{\frac{n}{2}-5}, 2u_{\ell} ~|~  2u_{\ell+1},\mathbf{0}_{\frac{n}{2}-1} \right).$ Then
            $$\mathbf{u_c} = \left(1+u_{\ell+1},u_1,u_2,u_3,\mathbf{0}_{\frac{n}{2}-5},u_\ell\right)$$ 
            and $$\mathbf{v_c} = \left(1-u_{\ell+1},u_1,u_2,u_3,\mathbf{0}_{\frac{n}{2}-5},u_\ell\right).$$
            Since $\mathcal{C}_2$ is a negacyclic code, the negacyclic shift of $\mathbf{v_c}$, given by $$\mathbf{w_c} = \left(-u_\ell,1-u_{\ell+1},u_1,u_2,u_3,\mathbf{0}_{\frac{n}{2}-5}\right),$$ belongs to $\mathcal{C}_2$. Note that the generator polynomial of $\mathcal{C}_2$ is $x^4-(\xi^2+\xi^{-2})x^2+1$, thus $\xi$ and $-\xi$ are roots of $\mathbf{w_c}(x)$. Then we can easily obtain $u_2=0$, which contradicts the assumption that $u_2\in\mathbb{F}_q^*$.

            The case of $\ell = \frac{n}{2}+3$, i.e. $\mathbf{c}=\left(2,2u_1,2u_2,2u_3,\mathbf{0}_{\frac{n}{2}-4} ~|~ 0,0,0,2u_{\ell}, 2u_{\ell+1},\mathbf{0}_{\frac{n}{2}-5} \right)$ is similar, so we omit it.

            \item [(\romannumeral3)] $\ell = \frac{n}{2}$, i.e., $\mathbf{c} = \left(2,2u_1,2u_2,2u_3,\mathbf{0}_{\frac{n}{2}-4} ~|~ 2u_{\ell} , 2u_{\ell+1},\mathbf{0}_{\frac{n}{2}-2} \right)$. Then
            $$\mathbf{u_c} = \left(1+u_{\ell},u_1+u_{\ell+1},u_2,u_3,\mathbf{0}_{\frac{n}{2}-4}\right)$$
            and 
            $$\mathbf{v_c} = \left(1-u_{\ell},u_1-u_{\ell+1},u_2,u_3,\mathbf{0}_{\frac{n}{2}-4}\right).$$
            Note that the generator polynomial of $\mathcal{C}_2$ has degree 4, while $\mathbf{v_c}(x)$ is a codeword polynomial of degree 3, which is a contradiction.

            The discussion of the following cases is similar, so we omit it.

            \[\ell = \frac{n}{2}+1, \textnormal{ i.e.},\mathbf{c}=\left(2,2u_1,2u_2,2u_3,\mathbf{0}_{\frac{n}{2}-4} ~|~ 0,2u_{\ell}, 2u_{\ell+1},\mathbf{0}_{\frac{n}{2}-3} \right);\]
            \[\ell = \frac{n}{2}+2, \textnormal{ i.e.},\mathbf{c}=\left(2,2u_1,2u_2,2u_3,\mathbf{0}_{\frac{n}{2}-4} ~|~ 0,0,2u_{\ell}, 2u_{\ell+1},\mathbf{0}_{\frac{n}{2}-4} \right);\]
            
            \item [(\romannumeral4)] $\frac{n}{2}+4\leq \ell\leq n-3$, i.e., $\mathbf{c} = \left(2,2u_1,2u_2,2u_3,\mathbf{0}_{\frac{n}{2}-4} ~|~\mathbf{0}_{\ell-\frac{n}{2}},2u_{\ell},2u_{\ell+1},\mathbf{0}_{n-2-\ell}\right)$. Denote $s = \ell -\frac{n}{2}$.
            Then
            $$\mathbf{u_c} =\left(1,u_1,u_2,u_3,\mathbf{0}_{s-4},u_s',u_{s+1}',\mathbf{0}_{\frac{n}{2}-s-2}\right)$$
            and 
            $$\mathbf{v_c} = \left(1,u_1,u_2,u_3,\mathbf{0}_{w-4},-u_s',-u_{s+1}',\mathbf{0}_{\frac{n}{2}-s-2}\right)$$ where $u_{s}' = u_{\ell}$ and $u_{s+1}' = u_{\ell+1}$.
            Then we have the following system:
            \begin{equation*}
                \begin{cases}
                    1 + u_1 + u_2+u_3+u_s'+u_{s+1}' &= 0,\\
                    1 + u_1\xi + u_2\xi^2+u_3\xi^3- u_s'\xi^{s}-u_{s+1}'\xi^{s+1} &= 0,\\
                    1 - u_1\xi +  u_2 \xi^{2} - u_3 \xi^{3}- u_s'(-\xi)^{s}- u_{s+1}' (-\xi)^{s+1}& = 0,\\
                    1 + u_1\xi^2 + u_2\xi^4+u_3\xi^6+u_s'\xi^{2s}+u_{s+1}'\xi^{2(s+1)} &= 0.
                \end{cases}
            \end{equation*}
            
            \textit{Case} 1: $s$ is odd. Then we have the following system:
            \begin{equation}
                \label{Eq::Lemma16-2.1}
                \begin{cases}
                    1 + u_1 + u_2+u_3+u_s'+u_{s+1}' &= 0,\\
                    1 + u_2\xi^2 -u_{s+1}'\xi^{s+1} &= 0,\\
                    u_1 +u_3\xi^2- u_s'\xi^{s-1}&= 0,\\
                    1 + u_1\xi^2 + u_2\xi^4+u_3\xi^6+u_s'\xi^{2s}+u_{s+1}'\xi^{2(s+1)} &= 0.
                \end{cases}
            \end{equation}
            Since $\mathbf{e} = (-b_{\frac{n}{2}-4},0,0,0,b_0,b_1,\cdots,b_{\frac{n}{2}-5})\in \mathcal{C}_2^{\bot}$, 
            \begin{equation}
                \label{Eq::Lemma16-2.2}
                \mathbf{e}\cdot \mathbf{v_c} = -b_{\frac{n}{2}-4} - u_s' \cdot b_{s-4}- u_{s+1}' \cdot b_{s-3} =0.
            \end{equation}
            Note that $s$ is odd and $5\leq s \leq \frac{n}{2}-3$, we can get $b_{s-4} = 0$, $b_{s-3} = \frac{\xi^{s+1}-\xi^{-(s-3)}}{\xi^4-1}$ and $b_{\frac{n}{2}-4} = 1$ by Lemma \ref{lemma14}. From the second equation
            of (\ref{Eq::Lemma16-2.1}) and equation (\ref{Eq::Lemma16-2.2}), we obtain
            \begin{equation*}
                \begin{cases}
                    u_2 &= -\frac{\xi^{2s+2}-1}{\xi^{2}(\xi^{2s-2}-1)},\\
                    u_{s+1}' &= \frac{(1-\xi^4)\xi^{s-3}}{\xi^{2s-2}-1}.
                \end{cases}
            \end{equation*}
            Putting the values into (\ref{Eq::Lemma16-2.1}), we have the following system:
            \begin{equation*}
                \begin{cases}
                    u_1 + u_3 + u_{s}' & =  \frac{(\xi^2-1)(\xi^{s+1}+1)(\xi^{s-1}+1)}{\xi^2(\xi^{2s-2}-1)},  \\
                    u_1  + \xi^{2} u_3 - u_s'\xi^{s-1} & = 0,\\
                    u_1\xi^2  + u_3\xi^{6} + u_s'\xi^{2s} &= \frac{(\xi^{2s+2}+\xi^{2s}+\xi^{s-1}-1)(1+\xi^{s-1})(\xi^2-1)}{\xi^{2s-2}-1}.
                \end{cases}
            \end{equation*}
            By solving the system, we can get 
            \begin{equation*}
                \left\{
                \begin{aligned}
                    u_1 &= -\frac{\xi^{s+1}+1}{\xi^2(\xi^{s-1}+1)},\\
                    u_3 &= \frac{(\xi^{s+1}+1)(\xi^{s+1}-1)}{\xi^4(\xi^{s-1}-1)(\xi^{s-3}+1)},\\
                    u_{s}' &= \frac{(\xi^2-1)(\xi^{s+1}+1)(\xi^{s+1}+\xi^{s-1}+\xi^2-1)}{\xi^4(\xi^{2s-2}-1)(1+\xi^{s-3})}.
                \end{aligned}\right.
            \end{equation*}
            Note that $u_3\in\mathbb{F}_q^*$,
            \begin{equation*}
                    u_{3}^q  
                    = \frac{(1+\xi^{s+1})(\xi^{s+1}-1)}{\xi^{2}(\xi^{s-1}-1)(1+\xi^{s-3})}
                    = \frac{(\xi^{s+1}+1)(1-\xi^{s+1})}{\xi^4(1-\xi^{s-1})(\xi^{s-3}+1)}=u_3.
            \end{equation*}
            This means that $\xi^2=1$, which is a contradiction since $\xi$ is a primitive $n$-th root of unity.
            
            \textit{Case} 2: $s$ is even. Then we have the following system:
                \begin{equation}
                    \label{Eq::Lemma16-2.3}
                    \left\{
                    \begin{aligned}
                        &1 + u_1 + u_2+u_3+u_s'+u_{s+1}' &= 0,\\
                        &1  + u_2\xi^2 - u_s'\xi^{s}  &= 0,\\ 
                        &u_1  + u_3 \xi^{2} - u_{s+1}'\xi^{s}& = 0,\\
                        &1 + u_1\xi^2 + u_2\xi^4+u_3\xi^6+u_s'\xi^{2s}+u_{s+1}'\xi^{2s+2} &= 0.
                    \end{aligned}\right.
                \end{equation}
                From the second and third equations of the system, we obtain $(u_3-u_1u_2) \xi^{2} - (u_{s+1}'-u_1u_{s}')\xi^{s} =0  \Longrightarrow (u_3-u_1u_2) + (u_{s+1}'-u_1u_{s}')\xi^{s-2} =0$. Note that $\xi^{s-2} \not \in \mathbb{F}_q$. Therefore, we have $u_3=u_1u_2$ and $u_{s+1}'=u_1u_{s}'$. Then we can derive $$1 + u_1 + u_2+u_3+u_s'+u_{s+1}' = (1+u_1)(1+u_2+u_{s}') = 0.$$  
                Thus $u_1 = -1$ or $1+u_2+u_{s}'= 0$. 

                 If $u_1 = -1$, then $u_3=-u_2$ and $u_{s+1}'=-u_{s}'$. Then 
                    \begin{equation*}
                            1 + u_1\xi^2 + u_2\xi^4+u_3\xi^6+u_s'\xi^{2s}+u_{s+1}'\xi^{2s+2}  
                            =(1-\xi^2)(1+u_2\xi^4+u_{s}'\xi^{2s})
                            =0.
                    \end{equation*}
                    Since $1-\xi^2 \neq 0$, then  $1+u_2\xi^4+u_{s}'\xi^{2s}=0$. Combining it with the second equation of (\ref{Eq::Lemma16-2.3}), we have the following system:
                    \begin{equation*}
                        \begin{cases}
                            1  + u_2\xi^2 - u_s'\xi^{s}  &= 0,\\ 
                            1+u_2\xi^4+u_{s}'\xi^{2s}&=0.
                        \end{cases}
                        \Longrightarrow
                        \begin{cases}
                            u_{2} = -\frac{1+\xi^{s}}{\xi^{s+2}+\xi^4},\\
                            u_{s}' = \frac{\xi^2-1}{\xi^{2s}+\xi^{s+2}}.
                        \end{cases}
                    \end{equation*}
                    Note that $u_2 \in\mathbb{F}_q^*$, then 
                    $$u_2^q = -\frac{\xi^{s+2}+\xi^{2}}{1+\xi^{s-2}} = -\frac{1+\xi^{s}}{\xi^{s+2}+\xi^4}= u_2.$$
                    This means that $\xi^6=1$, which is a contradiction since $\xi$ is a primitive $n$-th root of unity.

     If $1+u_2+u_{s}'=0$, then combining it with the second equation of (\ref{Eq::Lemma16-2.3}), we can obtain
                    \begin{equation*}
                        \begin{cases}
                            1+u_2+u_{s}'&=0,\\
                            1 + u_2\xi^2 - u_s'\xi^{s} &= 0. 
                        \end{cases}
                        \Longrightarrow
                        \begin{cases}
                            u_{2} = -\frac{1+\xi^{s}}{\xi^{s}+\xi^2},\\
                            u_{s}' = \frac{1-\xi^2}{\xi^{s}+\xi^2}.
                        \end{cases}
                    \end{equation*}
                    Note that $u_2 \in\mathbb{F}_q^*$, then 
                    $$u_2^q =  -\frac{1+\xi^{-s}}{\xi^{-s}+\xi^{-2}}=-\frac{\xi^{s}+1}{1+\xi^{s-2}} = u_2 = -\frac{1+\xi^{s}}{\xi^{s}+\xi^2}.$$
                    This means $\xi^2=1$, which is a contradiction since $\xi$ is a primitive $n$-th root of unity.
            \end{itemize}
        In summary, there does not exist such a codeword in $\mathcal{C}$.
    \end{proof}

    \begin{lemma}
        \label{lemma17}
        There exists no codeword of the form $\left(\star,\star,\star,\mathbf{0}_r,\star,\star,\star,\mathbf{0}_{n-r-6}\right)$ for any $1\leq r \leq n-7$.
    \end{lemma}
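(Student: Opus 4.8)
The argument is by contradiction and runs exactly parallel to Lemmas~\ref{lemma15} and~\ref{lemma16}. Suppose such a codeword exists; after a cyclic shift and a suitable scaling (legitimate since $\gcd(2,q)=1$) we may assume
\[
\mathbf{c}=\bigl(2,2u_1,2u_2,\mathbf{0}_{\ell-3},2u_\ell,2u_{\ell+1},2u_{\ell+2},\mathbf{0}_{n-\ell-3}\bigr)\in\mathcal{C},
\]
with $u_1,u_2,u_\ell,u_{\ell+1},u_{\ell+2}\in\mathbb{F}_q^*$ and $4\le\ell=r+3\le n-4$. Writing $\mathbf{c}=(\mathbf{u_c}+\mathbf{v_c}\mid\mathbf{u_c}-\mathbf{v_c})$ with $\mathbf{u_c}\in\mathcal{C}_1$ and $\mathbf{v_c}\in\mathcal{C}_2$ via the $[\mathbf{u}+\mathbf{v}\mid\mathbf{u}-\mathbf{v}]$-decomposition of Section~\ref{sec:decomposition_of_cyclic_codes}, I would split into cases according to the position of the second block $\{\ell,\ell+1,\ell+2\}$ relative to the midpoint $\tfrac n2=q+1$: (i) the block lies entirely in the first half ($4\le\ell\le q-2$); (ii) the block meets or straddles the midpoint ($q-1\le\ell\le q+3$); (iii) the block lies entirely in the second half with $\ell\ge q+4$, in which case we set $s=\ell-\tfrac n2$ and have $3\le s\le q-3$. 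One checks these three ranges exhaust $4\le\ell\le n-4$.

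In case (i) the whole second block cancels in $\mathbf{u_c}-\mathbf{v_c}$, so $\mathbf{u_c}=\mathbf{v_c}\in\mathcal{C}_1\cap\mathcal{C}_2$; hence $p(x)=1+u_1x+u_2x^2+u_\ell x^\ell+u_{\ell+1}x^{\ell+1}+u_{\ell+2}x^{\ell+2}$ vanishes at $1,\xi,\xi^{-1},\xi^2$, and also at $-\xi=\xi^{-q}$ and $-\xi^{-1}=\xi^q$. Separating $p$ into even and odd parts through $p(\xi)=p(-\xi)=0$: when $\ell$ is even the odd-part relation is $u_1+u_{\ell+1}\xi^\ell=0$, and since $\xi^\ell\notin\mathbb{F}_q$ in this range (a $(q+1)$-st root of unity lies in $\mathbb{F}_q$ only when it equals $\pm1$), this forces $u_1=u_{\ell+1}=0$, a contradiction. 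When $\ell$ is odd the four root conditions under-determine the five unknowns by one, so I would invoke orthogonality with $\mathcal{C}_2^\bot$: taking $\mathbf{e}=\bigl(-b_{\frac n2-4},0,0,0,b_0,b_1,\dots,b_{\frac n2-5}\bigr)\in\mathcal{C}_2^\bot$, the four-fold negacyclic shift of the generator $b(x)$ from Lemma~\ref{lemma14}, the relation $\mathbf{e}\cdot\mathbf{v_c}=0$ collapses (using $b_{\ell-4}=b_{\ell-2}=0$ for odd indices and $b_{\frac n2-4}=1$) to $b_{\ell-3}u_{\ell+1}=1$, which pins down $u_{\ell+1}$; the even-part equation then gives $u_2$, and the remaining three linear equations ($p(1)=0$, the odd-part relation, $p(\xi^2)=0$) determine $u_1,u_\ell,u_{\ell+2}$ as explicit rational functions of $\xi$. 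Applying the Frobenius constraint $u_i^q=u_i$ to one of them, together with $\xi^q=-\xi^{-1}$ and $\xi^{q+1}=-1$, should force $\xi^k=1$ for some $0<k<n$, contradicting that $\xi$ is a primitive $n$-th root of unity. Case (iii) is treated identically with $s$ in place of $\ell$, the only change being an overall sign on the second block of $\mathbf{v_c}$.

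Case (ii) consists of the finitely many boundary values $\ell\in\{q-1,q,q+1,q+2,q+3\}$. Here the decomposition folds the second block onto low-index coordinates, and after a suitable negacyclic shift $\mathbf{v_c}$ becomes a codeword of $\mathcal{C}_2$ supported on at most five consecutive positions; since the generator polynomial of $\mathcal{C}_2$ is $x^4-(\xi^2+\xi^{-2})x^2+1$, such a codeword is either zero (impossible, as some coordinate is a nonzero $u_i$) or a nonzero scalar multiple of $x^4-(\xi^2+\xi^{-2})x^2+1$, whose odd coefficients vanish; tracing this back forces some $u_i=0$, a contradiction. In the sub-cases where the shifted $\mathbf{v_c}$ has degree at most $3$ the contradiction is immediate. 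This mirrors the boundary treatment of Lemmas~\ref{lemma15} and~\ref{lemma16}.

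The main obstacle is the odd-$\ell$ (respectively odd-$s$) subcase of (i) and (iii): one must actually perform the $3\times3$ solve for $u_1,u_\ell,u_{\ell+2}$, verify that every denominator produced along the way (such as $\xi^{\ell+1}-\xi^{3-\ell}$, $\xi^4-1$, and the cofactors arising from Cramer's rule) is nonzero throughout the stated range, and check that applying $(\cdot)^q$ genuinely telescopes down to $\xi^k=1$ with $0<k<n$; the exponent bookkeeping, which repeatedly uses $\xi^{2q+2}=1$ and $\xi^{q+1}=-1$, is where slips are easiest. A secondary point needing care is confirming that the three $\ell$-ranges partition $[4,n-4]$ without gaps or overlaps and that the spacing hypotheses ($r\ge1$, so that positions $3$, $\ell-1$ and $n-1$ carry zeros) are respected, so that the support shapes claimed for $\mathbf{u_c}$ and $\mathbf{v_c}$ in each case are correct.
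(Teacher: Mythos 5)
Your plan matches the paper's proof essentially step for step: the same $[\mathbf{u}+\mathbf{v}\mid\mathbf{u}-\mathbf{v}]$ decomposition, the same split into first-half, boundary, and second-half cases, the same even/odd dichotomy (with $\xi^\ell\notin\mathbb{F}_q$ killing the even case), the same dual codeword built from Lemma~\ref{lemma14} to pin down $u_{\ell+1}$ and $u_2$, and the same Frobenius computation forcing $\xi^{2\ell}=1$ (resp.\ $\xi^{2s}=1$). The one bookkeeping point to settle in your case (iii) is that for small $s$ (e.g.\ $s=3$) the vector $\mathbf{e}$ would reference $b_{s-4}$ with a negative index, so the paper switches to the shift $\mathbf{e}'=(0,0,b_0,b_1,\dots,b_{\frac{n}{2}-4},0)$ there, picking up $b_{s-2},b_{s-1},b_s$ instead --- exactly the kind of index care you flagged.
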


    \begin{proof}
        Similarly as in the proof of Lemmas 15 and 16, we prove the lemma by contradiction and assume that $\mathbf{c} = \left(2,2u_1,2u_2,\mathbf{0}_{\ell-3},2u_{\ell},2u_{\ell+1},u_{\ell+2},\mathbf{0}_{n-\ell-3}\right)=(\mathbf{u_c}+\mathbf{v_c}|\mathbf{u_c}-\mathbf{v_c})$, where $\ell=r+3$, $u_1,u_2,u_{\ell},u_{\ell+1},u_{\ell+2}\in \mathbb{F}^*_q$,  $\mathbf{u_c}\in \mathcal{C}_1$ and $\mathbf{v_c}\in \mathcal{C}_2$. Next, we divide our discussion into four cases according to the structure of components of $\mathbf{c}$.
        \begin{itemize}
            \item [(\romannumeral1)] $4\leq \ell\leq \frac{n}{2}-3$, i.e., $\mathbf{c} = \left(2,2u_1,2u_2,\mathbf{0}_{\ell-3},2u_{\ell},2u_{\ell+1},2u_{\ell+2},\mathbf{0}_  {\frac{n}{2}-3-\ell} ~|~ \mathbf{0}_{\frac{n}{2}}\right)$. Then we can obtain $$\mathbf{u_c}=\mathbf{v_c} =\left(1,u_1,u_2,\mathbf{0}_{\ell-4},u_{\ell},u_{\ell+1},u_{\ell+2},\mathbf{0}_{\frac{n}{2}-3-\ell}\right).$$ This means that $\left(1,u_1,u_2,\mathbf{0}_{\ell-3},u_{\ell},u_{\ell+1},u_{\ell+2},\mathbf{0}_{\frac{n}{2}-3-\ell}\right)\in \mathcal{C}_1 \cap \mathcal{C}_2$. Thus, we have $m_{\xi^{0}}(x)m_{\xi^{2}}(x)\mid \mathbf{u_c}(x)$ and $m_{\xi{-1}}(x)m_{\xi^{1}}(x)\mid \mathbf{v_c}(x)$. Then we obtain the following system:
            \begin{equation*}
                \begin{cases}
                    1 + u_1 + u_2+u_\ell+u_{\ell+1}+u_{\ell+2} &= 0,\\
                    1 + \xi u_1 + \xi^2u_2 +\xi^\ell u_\ell+\xi^{\ell+1} u_{\ell+1}+\xi^{\ell+2} u_{\ell+2} &= 0,\\
                    1 - \xi u_1 + \xi^2u_2+(-\xi)^\ell u_\ell+(-\xi)^{\ell+1}u_{\ell+1}+(-\xi)^{\ell+2} u_{\ell+2} &= 0,\\
                    1 + \xi^2 u_1 + \xi^4u_2+\xi^{2\ell}u_\ell+\xi^{2(\ell+1)}u_{\ell+1}+\xi^{2(\ell+2)}u_{\ell+2} &= 0.
                \end{cases}
            \end{equation*}
            If $\ell$ is even, we can get $ u_1  + \xi^{\ell} u_{\ell+1} = 0$. This implies that $\xi^{\ell}=-\frac{u_1}{u_{\ell+1}}\in \mathbb{F}_q^*$, which is a contradiction since it is easy to check that $\xi^{\ell}\notin\mathbb{F}_q^*$.
            
            If $\ell$ is odd, then we have the following system:
            \begin{equation}
                \label{Eq:Lemma17-1.1}
                \begin{cases}
                    1 + u_1 + u_2+u_\ell+u_{\ell+1}+u_{\ell+2} &= 0,\\
                    1  + \xi^2u_2 +\xi^{\ell+1} u_{\ell+1}&= 0,\\
                    u_1 + \xi^{\ell-1} u_\ell + \xi^{\ell+1} u_{\ell+2} &= 0,\\
                    1 + \xi^2 u_1 + \xi^4u_2+\xi^{2\ell}u_\ell+\xi^{2(\ell+1)}u_{\ell+1}+\xi^{2(\ell+2)}u_{\ell+2} &= 0,\\
                \end{cases}
            \end{equation}
            Since $\mathbf{e}=(-b_{\frac{n}{2}-4},0,0,0,b_0,b_1,\cdots,b_{\frac{n}{2}-5})\in \mathcal{C}_2^\bot$, 
            \begin{equation}
                \label{Eq:Lemma17-1.2}
                \mathbf{e}\cdot \mathbf{v_c} =-b_{\frac{n}{2}-4} + u_\ell \cdot b_{\ell-4}+ u_{\ell+1} \cdot b_{\ell-3}+ u_{\ell+2} \cdot b_{\ell-2}=0.
            \end{equation} Since $4\leq \ell \leq \frac{n}{2}-3$ and $\ell$ is odd, we can get $b_{\ell-4} = b_{\ell-2} = 0$, $b_{\ell-3} = \frac{\xi^{\ell+1}-\xi^{-(\ell-3)}}{\xi^4-1}$, and $b_{\frac{n}{2}-4}=1$ by Lemma \ref{lemma14}. 
            From the second equation of (\ref{Eq:Lemma17-1.1}) and equation (\ref{Eq:Lemma17-1.2}), we obtain
            \begin{equation*}
                \begin{cases}
                    u_2= -\frac{\xi^{2\ell+2}-1}{\xi^{2}(\xi^{2\ell-2}-1)}, \\
                    u_{\ell+1}=\frac{1}{b_{\ell-3}} = \frac{(\xi^4-1)\xi^{\ell-3}}{\xi^{2\ell-2}-1}.
                \end{cases}
            \end{equation*}
            Substituting into the system (\ref{Eq:Lemma17-1.1}), we have the following system:
            \begin{equation*}
                \begin{cases}
                    u_1  + u_{\ell} + u_{\ell+2}& = \frac{(\xi^2-1)(\xi^{l+1}-1)(\xi^{l-1}-1)}{\xi^2(\xi^{2l-2}-1)},  \\
                    u_1 + \xi^{\ell-1} u_\ell + \xi^{\ell+1} u_{\ell+2} &= 0,\\
                    u_1\xi^2  + u_{\ell}\xi^{2\ell}   + u_{\ell+2}\xi^{2\ell+4} & = \frac{(\xi^{2\ell+2}+\xi^{2\ell}-\xi^{\ell-1}-1)(1-\xi^{\ell-1})(\xi^2-1)}{\xi^{2\ell-2}-1}.
                \end{cases}
            \end{equation*}
            By solving the system, we can obatin 
            \begin{equation*}
                \left\{
                \begin{aligned}
                    u_1 &= \frac{(\xi^2-1)(\xi^{2\ell}-\xi^{\ell+1}-\xi^{\ell-1}-1)}{\xi^2(\xi^{2\ell-2}-1)},\\
                    u_{\ell} &= \frac{\xi^{\ell+1}-1}{\xi^{\ell+1}(\xi^{\ell-1}-1)},\\
                    u_{\ell+2} &= -\frac{\xi^{\ell+1}+1}{\xi^{\ell+1}(\xi^{\ell-1}+1)}.
                \end{aligned}\right.
            \end{equation*}
            Note that $u_{\ell}\in\mathbb{F}_q^*$,
            \begin{equation*}
                    u_{\ell}^q 
                    = \frac{(1-\xi^{\ell+1})\xi^{\ell-1}}{1-\xi^{\ell-1}}
                    =\frac{\xi^{\ell+1}-1}{\xi^{\ell+1}(\xi^{\ell-1}-1)}=u_{\ell}.
            \end{equation*}
            This means that $\xi^{2\ell}=1$, which is a contradiction since $\xi$ is a primitive $n$-th root of unity and $4 \leq \ell  \leq \frac{n}{2}-3$. 

            \item [(\romannumeral2)]  $\ell = \frac{n}{2}-2$, i.e., $\mathbf{c} = \left(2,2u_1,2u_2,\mathbf{0}_{\frac{n}{2}-5}, 2u_{\ell},2u_{\ell+1} ~|~ 2u_{\ell+2},\mathbf{0}_{\frac{n}{2}-1} \right)$. Then
            $$\mathbf{u_c} = \left(1+u_{\ell+2},u_1,u_2,\mathbf{0}_{\frac{n}{2}-5},u_\ell,u_{\ell+1}\right)$$ 
            and
            $$\mathbf{v_c} = \left(1-u_{\ell+2},u_1,u_2,\mathbf{0}_{\frac{n}{2}-5},u_\ell,u_{\ell+1}\right).$$
            Since $\mathcal{C}_2$ is a negacyclic code, the negacyclic shift of $\mathbf{v_c}$, given by $$\mathbf{w_c} = \left(-u_{\ell},-u_{\ell+1},1-u_{\ell+2},u_1,u_2,\mathbf{0}_{\frac{n}{2}-5}\right),$$ belongs to $\mathcal{C}_2$. Note that the generator polynomial of $\mathcal{C}_2$ is $x^4-(\xi^2+\xi^{-2})x^2+1$, thus $\xi$ and $-\xi$ are roots of $\mathbf{w_c}(x)$. Then we can easily obtain $u_1=u_{\ell+1}=0$, which contradicts the assumption that $u_1=u_{\ell+1}\in\mathbb{F}_q^*$.
            
            The case of $\ell = \frac{n}{2}+2$, i.e., $\mathbf{c} = \left(2,2u_1,2u_2,\mathbf{0}_{\frac{n}{2}-3} ~|~ 0,0,2u_{\ell}, 2u_{\ell+1},2u_{\ell+2},\mathbf{0}_{\frac{n}{2}-5} \right)$ is similar, so we omit it.

            \item [(\romannumeral3)] $\ell = \frac{n}{2}-1$, i.e., $\mathbf{c} = \left(2,2u_1,2u_2,\mathbf{0}_{\frac{n}{2}-4}, 2u_{\ell} ~|~ 2u_{\ell+1},2u_{\ell+2},\mathbf{0}_{\frac{n}{2}-2} \right)\in\mathcal{C}$. Then
            $$\mathbf{u_c} = \left(1+u_{\ell+1},u_1+u_{\ell+2},u_2,\mathbf{0}_{\frac{n}{2}-4},u_\ell\right)$$ 
            and $$\mathbf{v_c} = \left(1-u_{\ell+1},u_1-u_{\ell+2},u_2,\mathbf{0}_{\frac{n}{2}-4},u_\ell\right).$$
            Since $\mathcal{C}_2$ is a negacyclic code, the negacyclic shift $\mathbf{v_c}$, given by $$\mathbf{w_c}^\prime = \left(-u_\ell,1-u_{\ell+1},u_1-u_{\ell+2},u_2,\mathbf{0}_{\frac{n}{2}-4}\right),$$ belongs to $\mathcal{C}_2$. Note that the generator polynomial of $\mathcal{C}_2$ is $x^4-(\xi^2+\xi^{-2})x^2+1$, while the degree of $\mathbf{w_c}^{\prime}(x)$ is 3, which is a contradiction.
                
            The discussion of the following cases is similar, so we omit it.
            \begin{gather*}
                \ell = \frac{n}{2},\textnormal{ i.e.}, \mathbf{c} = \left(2,2u_1,2u_2,\mathbf{0}_{\frac{n}{2}-3} ~|~ 2u_{\ell}, 2u_{\ell+1},2u_{\ell+2},\mathbf{0}_{\frac{n}{2}-3} \right);\\
                \ell = \frac{n}{2}+1,\textnormal{ i.e.}, \mathbf{c} = \left(2,2u_1,2u_2,\mathbf{0}_{\frac{n}{2}-3} ~|~ 0,2u_{\ell}, 2u_{\ell+1},2u_{\ell+2},\mathbf{0}_{\frac{n}{2}-4} \right).
            \end{gather*}
            \item [(\romannumeral4)] $\frac{n}{2}+3\leq \ell \leq n-4$, i.e., $\mathbf{c} = \left(2,2u_1,2u_2,\mathbf{0}_{\frac{n}{3}-3} ~|~ \mathbf{0}_{\ell-\frac{n}{2}}, 2u_{\ell},2u_{\ell+1},2u_{\ell+2},\mathbf{0}_ {n-3-\ell} \right)$. 
            Denote $s={\ell-\frac{n}{2}}$. Then
            $$\mathbf{u_c} =\left(1,u_1,u_2,\mathbf{0}_{s-3},u_{s}^{\prime},u_{s+1}^{\prime},u_{s+2}^{\prime},\mathbf{0}_{\frac{n}{2}-s-3}\right)$$
            and 
            $$\mathbf{v_c} =\left(1,u_1,u_2,\mathbf{0}_{s-3},-u_{s}^{\prime},-u_{s+1}^{\prime},-u_{s+2}^{\prime},\mathbf{0}_{\frac{n}{2}-s-1}\right),$$  where $u_s^{\prime}=u_{\ell}$, $u_{s+1}^{\prime}=u_{\ell+1}$, and $u_{s+2}^{\prime}=u_{\ell+2}$. 
            Then we have the following system:
            \begin{equation*}
                \left\{
                \begin{aligned}
                    &1 + u_1 + u_2+u_s^{\prime}+u_{s+1}^{\prime}+u_{s+2}^{\prime} &= 0,\\
                    &1 + \xi u_1 + \xi^2u_2 -\xi^s u_s^{\prime}-\xi^{s+1} u_{s+1}^{\prime}-\xi^{s+2} u_{s+2}^{\prime}&= 0,\\
                    &1 - \xi u_1 + \xi^2u_2-(-\xi)^s u_s^{\prime}-(-\xi)^{s+1}u_{s+1}^{\prime}-(-\xi)^{s+2} u_{s+2}^{\prime} &= 0,\\
                    &1 + \xi^2 u_1 + \xi^4u_2+\xi^{2s}u_s^{\prime}+\xi^{2(s+1)}u_{s+1}^{\prime}+\xi^{2(s+2)}u_{s+2}^{\prime} &= 0.
                \end{aligned}\right.
            \end{equation*}
            If $s$ is even, we can get $u_1 - \xi^{s} u_{s+1}'= 0.$ This implies that $\xi^{s}=\frac{u_1}{u_{s+1}'}\in \mathbb{F}_q^*$, which a contradiction since it is easy to check that $\xi^{s}\notin\mathbb{F}_q^*$.
            
            If $s$ is odd, then we have the following system:
            \begin{equation}
                \label{Eq::lemma17-2.1}
                \begin{cases}
                    1 + u_1 + u_2+u_s^{\prime}+u_{s+1}^{\prime}+u_{s+2}^{\prime} &= 0,\\
                    1 + \xi^2u_2 -\xi^{s+1} u_{s+1}^{\prime}&= 0,\\
                    u_1 - \xi^{s-1} u_s^{\prime} - \xi^{s+1} u_{s+2}^{\prime} &= 0,\\
                    1 + \xi^2 u_1 + \xi^4u_2+\xi^{2s}u_s^{\prime}+\xi^{2(s+1)}u_{s+1}^{\prime}+\xi^{2(s+2)}u_{s+2}^{\prime} &= 0.
                \end{cases}
            \end{equation}
            Since $\mathbf{e}'=\left\{0,0,b_0,b_1,\cdots,b_{\frac{n}{2}-4},0\right\}\in \mathcal{C}_2^\bot$, 
            \begin{equation}
                \label{Eq::lemma17-2.2}
                \mathbf{e}'\cdot \mathbf{v_c}=u_2b_0 - u_s^{\prime} \cdot b_{s-2} - u_{s+1}^{\prime} \cdot b_{s-1} - u_{s+2}^{\prime} \cdot b_{s}=0.
            \end{equation}
            Since $s$ is odd and $3\leq s \leq \frac{n}{2}-4$, we can get $b_0=1$, $b_{s-2} = b_{s} = 0$, and $b_{s-1} = \frac{\xi^{s+3}-\xi^{-(s-1)}}{\xi^4-1}$ by Lemma \ref{lemma14}. 
            From the second equation of (\ref{Eq::lemma17-2.1}) and equation (\ref{Eq::lemma17-2.2}), we obtain
            \begin{equation*}
                \begin{cases}
                    u_2= -\frac{\xi^{2s+2}-1}{\xi^{2}(\xi^{2s-2}-1)}, \\
                    u_{s+1}' = \frac{(1-\xi^4)\xi^{s-3}}{\xi^{2s-2}-1}.       
                \end{cases}
            \end{equation*}
            Putting the values into the system (\ref{Eq::lemma17-2.1}), we have the following system:
            \begin{equation*}
                \begin{cases}
                    u_1  + u_{s} + u_{s+2}& = \frac{(\xi^2-1)(\xi^{s+1}+1)(\xi^{s-1}+1)}{\xi^2(\xi^{2s-2}-1)},  \\
                    u_1 - \xi^{s-1} u_s' - \xi^{s+1} u_{s+2}' &= 0,\\
                    u_1\xi^2 + u_{s}'\xi^{2s} + u_{s+2}'\xi^{2s+4} &= \frac{(\xi^{2s+2}+\xi^{2s}+\xi^{s-1}-1)(1+\xi^{s-1})(\xi^2-1)}{\xi^{2s-2}-1}.
                \end{cases}
            \end{equation*}
            By solving the system, we can get 
            \begin{equation*}
                \left\{
                \begin{aligned}
                    u_1 &= \frac{(\xi^2-1)(\xi^{2s}+\xi^{s+1}+\xi^{s-1}-1)}{\xi^2(\xi^{2s-2}-1)},\\
                    u_s' &= -\frac{\xi^{s+1}+1}{\xi^{s+1}(\xi^{s-1}+1)},\\
                    u_{s+2}' &= \frac{\xi^{s+1}-1}{\xi^{s+1}(\xi^{s-1}-1)}.
                \end{aligned}\right.
            \end{equation*}
            Note that $u_s'\in\mathbb{F}_q^*$,
            \begin{equation*}
                    (u_{s}^{\prime})^q  
                    = -\frac{\xi^{-(s+1)}+1}{\xi^{-(s+1)}(\xi^{-(s-1)}+1)} 
                    = -\frac{(1+\xi^{s+1})\xi^{s-1}}{1+\xi^{s-1}}
                    =-\frac{\xi^{s+1}+1}{\xi^{s+1}(\xi^{s-1}+1)}=u_s^{\prime}.
            \end{equation*}
            This means that $\xi^{2s}=1$, which is a contradiction for $3 \leq s \leq \frac{n}{2}-4 $.
        \end{itemize}
        In summary, there does not exist such a codeword in $\mathcal{C}$.
    \end{proof}
    Combining Lemmas \ref{lemma15}, \ref{lemma16}, and \ref{lemma17}, we can construct a class of MDS symbol-pair codes with $d_P = 9$.
    \begin{theorem}
        \label{Thm:theorem3}
        Suppose $q$ is an odd prime power and $n = 2q+2$. Let $\mathcal{C}$ be the cyclic code in $\mathbb{F}_q\left[x\right]/\langle x^n-1\rangle$ with generator polynomial $g(x) = m_{\xi^{-1}}(x)m_{\xi^{0}}(x)m_{\xi^{1}}(x)m_{\xi^{2}}(x)$. Then $\mathcal{C}$ is an $(n=2q+2,d_P=9)_q$ MDS symbol-pair code.
    \end{theorem}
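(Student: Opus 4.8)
The plan is to follow the template of the proofs of Theorems~\ref{Thm:theorem1} and \ref{Thm:theorem2}, with Lemmas~\ref{lemma15}, \ref{lemma16} and \ref{lemma17} playing the role of the technical core. Assume first $q\geq 5$, the case $q=3$ being settled by a direct computation on the $[8,1,8]$ code of Lemma~\ref{constr3::hammingdistance}(1). By Lemma~\ref{constr3::hammingdistance}(2), for $q\geq 5$ the code $\mathcal{C}$ is an $[n,n-7,6]$ cyclic code, so $6<n-(n-7)+1=8$ shows $\mathcal{C}$ is not a classical MDS code, and Lemma~\ref{lem::Chen2017} gives $d_P(\mathcal{C})\geq d_H(\mathcal{C})+2=8$. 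Since $\dim(\mathcal{C})=n-7$, the Singleton-type bound~(\ref{singleton_type_bound}) forces $d_P(\mathcal{C})\leq 9$. Thus it suffices to prove that $\mathcal{C}$ has no codeword of pair weight exactly $8$; this will give $d_P(\mathcal{C})=9$ and $|\mathcal{C}|=q^{n-7}=q^{n-9+2}$, i.e. $\mathcal{C}$ is an $(n=2q+2,d_P=9)_q$ MDS symbol-pair code.

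The next step is a short combinatorial classification. For a nonzero codeword $\mathbf{c}$ with Hamming support $S\subsetneq\mathbb{Z}_n$ made up of $b$ maximal cyclic runs, one has $w_P(\mathbf{c})=|S|+b$; together with $|S|\geq d_H(\mathcal{C})=6$ and $b\geq 1$, pair weight $8$ leaves only $(|S|,b)\in\{(7,1),(6,2)\}$. In the $(7,1)$ case, after an appropriate cyclic shift $\mathbf{c}$ corresponds to a polynomial of degree $6$, which cannot be a nonzero multiple of $g(x)$ because $\deg g(x)=7$; so this case is impossible. In the $(6,2)$ case, after a cyclic shift $\mathbf{c}$ has block-size profile $(5,1)$, $(4,2)$ or $(3,3)$, i.e. it is one of the forms
\begin{gather*}
(\star,\star,\star,\star,\star,\mathbf{0}_r,\star,\mathbf{0}_{n-r-6}),\\
(\star,\star,\star,\star,\mathbf{0}_r,\star,\star,\mathbf{0}_{n-r-6}),\\
(\star,\star,\star,\mathbf{0}_r,\star,\star,\star,\mathbf{0}_{n-r-6}),
\end{gather*}
with $1\leq r\leq n-7$, which are exactly the forms ruled out by Lemmas~\ref{lemma15}, \ref{lemma16} and \ref{lemma17}. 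Combining these facts, no codeword of pair weight $8$ exists, completing the proof.

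Given Lemmas~\ref{lemma15}--\ref{lemma17}, the theorem is immediate, so the genuine difficulty is concentrated in those lemmas. The mechanism there is: via the decomposition $\mathcal{C}=\mathcal{C}_1\curlyvee\mathcal{C}_2$ write $\mathbf{c}=(\mathbf{u_c}+\mathbf{v_c}\mid\mathbf{u_c}-\mathbf{v_c})$ with $\mathbf{u_c}\in\mathcal{C}_1$, $\mathbf{v_c}\in\mathcal{C}_2$; translate the conditions "$g(x)\mid c(x)$" and the prescribed run positions into polynomial/linear systems in a primitive $n$-th root of unity $\xi\in\mathbb{F}_{q^2}$; and adjoin the orthogonality relations $\mathbf{e}\cdot\mathbf{v_c}=0$, where $\mathbf{e}$ runs over suitable cyclic shifts of the explicit generator $b(x)$ of $\mathcal{C}_2^\bot$ from Lemma~\ref{lemma14} (this is what supplies enough equations to pin down the unknown coefficients). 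A Frobenius ($q$-th power) argument applied to a coefficient forced to lie in $\mathbb{F}_q^*$, using $\xi^q=-\xi^{-1}$, then collapses each branch to an identity $\xi^{2k}=1$ with $0<k<n/2$, contradicting the primitivity of $\xi$; a few sub-cases, organized by the position $\ell$ of the isolated run relative to the midpoint $n/2$, instead reduce to a degree mismatch in one of the two components $\mathcal{C}_1,\mathcal{C}_2$. The main obstacle is organizing the resulting case analysis (on the parity of $\ell$ and on $\ell$ versus $n/2$) so that it is both exhaustive and each branch reaches a contradiction.
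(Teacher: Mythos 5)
Your proposal is correct and follows essentially the same route as the paper: the $q=3$ case by direct inspection, then for $q\geq 5$ the bounds $8\leq d_P(\mathcal{C})\leq 9$ from Lemma~\ref{constr3::hammingdistance}, Lemma~\ref{lem::Chen2017} and the Singleton-type bound, followed by the exclusion of pair weight $8$ via the same four shift-forms handled by the degree argument and Lemmas~\ref{lemma15}--\ref{lemma17}. Your explicit run-count identity $w_P(\mathbf{c})=|S|+b$ is a slightly more systematic justification of the case list than the paper gives, but the content is identical.
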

    \begin{proof}
        When $q=3$, then $\mathcal{C}$ is a $[8,1,8]$ repetition code. It is clear that $d_P(\mathcal{C})\geq 9$. On the other hand, the Singleton-type bound (\ref{singleton_type_bound}) implies that $d_P\leq 9$. Thus $d_P = 9$ and $\mathcal{C}$ is an MDS symbol-pair code.
    
        When $q > 3$, note that $\dim(\mathcal{C})=n-\deg(g(x))=2q-5$ and $d_H (C) = 6$ by Lemma \ref{constr3::hammingdistance}, thus $\mathcal{C}$ is not an MDS code. Hence we can obtain $d_P (C) \geq 8$ by Lemma \ref{lem::Chen2017}. By the Singleton-type bound (\ref{singleton_type_bound}), it is sufficient to prove that $d_P(\mathcal{C}) \neq 8$, equivalently, there exists no codeword with minimum pair weight $8$. By contradiction, suppose that there is a codeword $\mathbf{c}\in\mathcal{C}$ with pair weight 8, then its certain cyclic shift must be one of the forms 
        \begin{gather*}
            (\star,\star,\star,\star,\star,\star,\star,\mathbf{0}_{n-7}),\\
            (\star,\star,\star,\star,\star,\mathbf{0}_{r},\star,\mathbf{0}_{n-r-6}) (1\leq r\leq n-7),\\
            (\star,\star,\star,\star,\mathbf{0}_{r},\star,\star,\mathbf{0}_{n-r-6}) (1\leq r\leq n-7),\\
            (\star,\star,\star,\mathbf{0}_{r},\star,\star,\star,\mathbf{0}_{n-r-6}) (1\leq r \leq n-7).
        \end{gather*}
        If $\mathbf{c}$ is of the first form, then the corresponding polynomial $c(x)$ has degree 6, while the generator polynomial $g(x)$ has degree 7, which leads to a contradiction. By Lemmas \ref{lemma15}, \ref{lemma16}, and \ref{lemma17}, no codeword in $\mathcal{C}$ can have the last three forms. Hence, we conclude that $\mathcal{C}$ is an $(n=2q+2,d_P=9)_q$ MDS symbol-pair code. 
    \end{proof}
\begin{remark}
To the best of our knowledge, the construction of MDS symbol-pair codes with $d_P = 9$ is given in \cite{Luo2023}, where the code length is $n=2m$ with $5 \leq m \leq q$. Thus, our construction achieves the longest code length for MDS symbol-pair codes with $d_P = 9$.
\end{remark}

\section{Conclusion}\label{sec4}
    Unlike classical MDS codes, constructing MDS symbol-pair codes with new parameters remains an important topic in coding theory. There are surely more families of interesting symbol-pair codes to be discovered. When the pair distance and code length are given, the dimension of an cyclic MDS symbol-pair code is determined, which in turn determines the number of roots of its generator polynomial. Therefore, the key to constructing MDS symbol-pair codes from single-root cyclic codes lies in selecting appropriate roots for the generator polynomial such that the minimum pair distance of the corresponding code meets the MDS bound. However, computing this minimum pair distance presents another challenge. In this paper, we utilized the decomposition of cyclic codes and the dual codes of each component code and analyzed the solutions of certain equations over finite fields to address these problems. Finally, we constructed three new classes of MDS symbol-pair codes from simple-root cyclic codes:
    \begin{itemize}
        \item [-] $(n=4q+4,k=4q-1,d_P = 7)_q$ cyclic code with $q\equiv 1\pmod 4$;
        \item [-] $(n=4q-4, k=4q-10, d_P=8)_q$ cyclic code with $q\equiv 3\pmod 4$;
        \item [-] $(n=2q+2,k=2q-5, d_P=9)_q$ cyclic code with $q$ is an odd prime power.
    \end{itemize}
Compared to previously known results, for 
$d_P=7$ or $8$, our $q$-ary MDS symbol-pair codes achieve the longest known code length when 
$q$ is not prime. For 
$d_P=9$, they remain the longest regardless of whether $q$ is prime or not. When using the simple-root cyclic code approach to construct MDS symbol-pair codes, the number of cases to analyze increases significantly as the pair distance or code length grows. This makes it particularly challenging to construct MDS symbol-pair codes with larger pair distances or longer code lengths.
\bigskip

\section*{Funding}
The research was supported in part by the National Key Research and Development Program of China under Grant Nos. 2022YFA1004900
and 2021YFA1001000, the National Natural Science Foundation of China under Grant No. 62201322, and the Natural Science Foundation of Shandong
Province under Grant No. ZR2022QA031, and the Taishan Scholar Program of Shandong Province





\bibliographystyle{splnc04}
\bibliography{ref}

\begin{thebibliography}{10}
\providecommand{\url}[1]{\texttt{#1}}
\providecommand{\urlprefix}{URL }
\providecommand{\doi}[1]{\url{https://doi.org/#1}.}

\bibitem{Cassuto2010}
Cassuto, Y., Blaum, M.: Codes for symbol-pair read channels. In: Proceedings of
  IEEE International Symposium Information Theory (ISIT). p. 988–992 (2010)

\bibitem{Cassuto2011}
Cassuto, Y., Blaum, M.: Codes for symbol-pair read channels. IEEE Transactions
  on Information Theory  \textbf{57}(12),  8011--8020 (2011).
  \doi{10.1109/TIT.2011.2164891}

\bibitem{Cassuto2011a}
Cassuto, Y., Litsyn, S.: Symbol-pair codes: {Algebraic} constructions and
  asymptotic bounds. In: 2011 IEEE International Symposium on Information
  Theory Proceedings. pp. 2348--2352. IEEE (2011).
  \doi{10.1109/ISIT.2011.6033982}

\bibitem{Chee2012}
Chee, Y.M., Ji, L., Kiah, H.M., Wang, C., Yin, J.: Maximum distance separable
  codes for symbol-pair read channels. IEEE International Symposium on
  Information Theory pp. 2886--2890 (2012)

\bibitem{Chee2013}
Chee, Y.M., Ji, L., Kiah, H.M., Wang, C., Yin, J.: Maximum distance separable
  codes for symbol-pair read channels. IEEE Transactions on Information Theory
  \textbf{59}(11),  7259--7267 (2013). \doi{10.1109/TIT.2013.2276615}

\bibitem{Chen2017}
Chen, B., Lin, L., Liu, H.: Constacyclic symbol-pair codes: Lower bounds and
  optimal constructions. IEEE Transactions on Information Theory
  \textbf{63}(12),  7661--7666 (2017). \doi{10.1109/TIT.2017.2753250}

\bibitem{Ding2018}
Ding, B., Zhang, T., Ge, G.: Maximum distance separable codes for $b$-symbol
  read channels. Finite Fields and Their Applications  \textbf{49},  180--197
  (2018). \doi{10.1016/j.ffa.2017.10.002}

\bibitem{Dinh2021}
Dinh, H.Q., Nguyen, B.T., Singh, A.K., Yamaka, W.: {MDS} constacyclic codes and
  {MDS} symbol-pair constacyclic codes. IEEE Access  \textbf{9},
  137970--137990 (2021). \doi{10.1109/ACCESS.2021.3117569}

\bibitem{Dinh2020}
Dinh, H.Q., Nguyen, B.T., Sriboonchitta, S.: {MDS} symbol-pair cyclic codes of
  length $2p^s$ over $\mathbb{F}_{p^m}$. IEEE Transactions on Information
  Theory  \textbf{66}(1),  240--262 (2020). \doi{10.1109/TIT.2019.2941885}

\bibitem{Dinh2018}
Dinh, H.Q., Nguyen, B.T., Singh, A.K., Sriboonchitta, S.: On the symbol-pair
  distance of repeated-root constacyclic codes of prime power lengths. IEEE
  Transactions on Information Theory  \textbf{64}(4),  2417--2430 (2018).
  \doi{10.1109/TIT.2017.2726691}

\bibitem{Dinh2019}
Dinh, H.Q., Wang, X., Liu, H., Sriboonchitta, S.: On the symbol-pair distances
  of repeated-root constacyclic codes of length $2p^s$. Discrete Mathematics
  \textbf{342}(11),  3062--3078 (2019). \doi{10.1016/j.disc.2019.06.017}

\bibitem{Huffman2010}
Huffman, W.C., Pless, V.: Fundamentals of error-correcting codes. Cambridge
  university press (2010). \doi{10.1017/CBO9780511807077}

\bibitem{Hughes2000}
Hughes, G.: Constacyclic codes, cocycles and a
  \textit{u}+\textit{v}$|$\textit{u}\text{--}\textit{v} construction. IEEE
  Transactions on Information Theory  \textbf{46}(2),  674--680 (2000).
  \doi{10.1109/18.825841}

\bibitem{Kai2024}
Kai, X., Zhou, Y., Zhu, S.: Two new classes of {MDS} symbol-pair codes. IEEE
  Transactions on Information Theory  \textbf{70}(11),  7701--7710 (2024).
  \doi{10.1109/TIT.2024.3466523}

\bibitem{Kai2015}
Kai, X., Zhu, S., Li, P.: A construction of new {MDS} symbol-pair codes. IEEE
  Transactions on Information Theory  \textbf{61}(11),  5828--5834 (2015).
  \doi{10.1109/TIT.2015.2481889}

\bibitem{Kai2018}
Kai, X., Zhu, S., Zhao, Y., Luo, H., Chen, Z.: New {MDS} symbol-pair codes from
  repeated-root codes. IEEE Communications Letters  \textbf{22}(3),  462--465
  (2018). \doi{10.1109/LCOMM.2018.2791422}

\bibitem{Krishna1990}
Krishna, A., Sarwate, D.: Pseudocyclic maximum-distance-separable codes. IEEE
  Transactions on Information Theory  \textbf{36}(4),  880--884 (1990).
  \doi{10.1109/18.53751}

\bibitem{Li2023}
Li, F.: Cyclic codes of length $5p$ with {MDS} symbol-pair. Designs, Codes and
  Cryptography  \textbf{91},  1873--1888 (2023).
  \doi{10.1007/s10623-023-01184-x}

\bibitem{Li2017}
Li, S., Ge, G.: Constructions of maximum distance separable symbol-pair codes
  using cyclic and constacyclic codes. Des. Codes Cryptography  \textbf{84}(3),
   359–372 (Sep 2017). \doi{10.1007/s10623-016-0271-y}

\bibitem{Lidl1997}
Lidl, R., Niederreiter, H.: Finite Fields. No.~20, Cambridge university press
  (1997). \doi{10.1017/CBO9780511525926}

\bibitem{Luo2023}
Luo, G., Ezerman, M.F., Ling, S., Pan, X.: New families of {MDS} symbol-pair
  codes from matrix-product codes. IEEE Transactions on Information Theory
  \textbf{69}(3),  1567--1587 (2023). \doi{10.1109/TIT.2022.3220638}

\bibitem{Ma2022}
Ma, J., Luo, J.: Constructions of {MDS} symbol-pair codes with minimum distance
  seven or eight. Designs, Codes and Cryptography  \textbf{90}(10),  2337--2359
  (2022). \doi{10.1007/s10623-022-01081-9}

\bibitem{Ma2022a}
Ma, J., Luo, J.: {MDS} symbol-pair codes from repeated-root cyclic codes.
  Designs, Codes and Cryptography  \textbf{90}(1),  121–137 (2022).
  \doi{10.1007/s10623-021-00967-4}

\bibitem{Sun2017}
Sun, Z., Zhu, S., Wang, L.: The symbol-pair distance distribution of a class of
  repeated-root cyclic codes over $\mathbb{F}_{p^m}$. Cryptography and
  Communications  \textbf{10},  643–653 (2018).
  \doi{10.1007/s12095-017-0249-2}

\bibitem{Zhao2021}
Zhao, W., Yang, S., Shum, K.W., Tang, X.: Repeated-root constacyclic codes with
  pair-metric. IEEE Communications Letters  \textbf{25}(8),  2502--2506 (2021).
  \doi{10.1109/LCOMM.2020.3041271}

\end{thebibliography}


\end{document}